\def\delequal{\mathrel{\ensurestackMath{\stackon[1pt]{=}{\scriptstyle\Delta}}}}
\newtheorem{thm}{Theorem}
\newtheorem{lem}{Lemma}
\newtheorem{prop}{Proposition}
\newtheorem{defi}{Definition}
\newtheorem{rem}{Remark}
\newtheorem{assump}{Assumption}
\newcommand{\Rmnum}[1]{\expandafter\@slowromancap\romannumeral #1@}
\def\@#1{\boldsymbol{#1}}
\def\b#1{\mathbb{#1}}
\def\s#1{\mathsf{#1}}
\def\w#1{\widetilde{#1}}
\def\u#1{\underline{#1}}
\def\o#1{\overline{#1}}
\newcommand\ageq{\mathrel{\stackrel{\makebox[0pt]{\mbox{\normalfont\tiny (a)}}}{\geq}}}
\newcommand\aleq{\mathrel{\stackrel{\makebox[0pt]{\mbox{\normalfont\tiny (a)}}}{\leq}}}
\newcommand\aeq{\mathrel{\stackrel{\makebox[0pt]{\mbox{\normalfont\tiny (a)}}}{=}}}
\newcommand\blfootnote[1]{%
  \begingroup
  \renewcommand\thefootnote{}\footnote{#1}%
  \addtocounter{footnote}{-1}%
  \endgroup
}
\begin{document}
\title{\huge{Mobile Edge Caching: An Optimal Auction Approach}}
\author{\normalsize{$^\dagger$Xuanyu Cao, $^\ddag$Junshan Zhang, and $^\dagger$H. Vincent Poor\\
$^\dagger$Email: \{x.cao, poor\}@princeton.edu\\
Department of Electrical Engineering, Princeton University, Princeton, NJ\\
$^\ddag$Email: junshan.zhang@asu.edu\\
School of Electrical, Computer and Energy Engineering, Arizona State University, Tempe, AZ}}
\maketitle
\begin{abstract}
With the explosive growth of wireless data, the sheer size of the mobile traffic is challenging the capacity of current wireless systems. To tackle this challenge, mobile edge caching has emerged as a promising paradigm recently, in which the service providers (SPs) prefetch some popular contents in advance and cache them locally at the network edge. When requested, those locally cached contents can be directly delivered to users with low latency, thus alleviating the traffic load over backhaul channels during peak hours and enhancing the quality-of-experience (QoE) of users simultaneously. Due to the limited available cache space, it makes sense for the SP to cache the most profitable contents. Nevertheless, users' true valuations of contents are their private knowledge, which is unknown to the SP in general. This information asymmetry poses a significant challenge for effective caching at the SP side. Further, the cached contents can be delivered with different quality, which needs to be chosen judiciously to balance delivery costs and user satisfaction. To tackle these difficulties, in this paper, we propose an optimal auction mechanism from the perspective of the SP. In the auction, the SP determines the cache space allocation over contents and user payments based on the users' (possibly untruthful) reports of their valuations so that the SP's expected revenue is maximized. The advocated mechanism is designed to elicit true valuations from the users (incentive compatibility) and to incentivize user participation (individual rationality). In addition, we devise a computationally efficient method for calculating the optimal cache space allocation and user payments. We further examine the optimal choice of the content delivery quality for the case with a large number of users and derive a closed-form solution to compute the optimal delivery quality. Finally, extensive simulations are implemented to evaluate the performance of the proposed optimal auction mechanism, and the impact of various model parameters is highlighted to obtain engineering insights into the content caching problem.\blfootnote{This work was supported by the U.S. Army Research Office under Grant W911NF-16-1-0448.}
\end{abstract}

\begin{IEEEkeywords}
Content caching, content delivery quality, mechanism design, optimal auction
\end{IEEEkeywords}

\section{Introduction}

The last decades have witnessed dramatic proliferation of wireless data. It has been projected that the volume of mobile traffic in 2020 will become 1000 times of that in 2010 \cite{osseiran2013foundation}. Compounding the issue of congested wireless networks is the rapid growth in social network traffic, and indeed many people view contents on their mobile devices.  In particular, a content can become viral, in the sense that it has a rapid increase in popularity in a short time. This has resulted in tremendous pressure on wireless service providers (SPs), because moving a large volume of data into and out of the cloud wirelessly requires substantial spectrum resources, and meanwhile may incur annoying latency and jitter.

A key observation is that many content requests by users are highly repetitive, which would lead to numerous redundant transmissions. With this insight, mobile edge caching is emerging as a new paradigm to alleviate the unprecedented demand on network traffic.   Residing on the network edge, mobile edge caching can offer storage resources to mobile users through low-latency wireless connections, facilitating a number of mobile services.  It is forecast that, by caching contents on network edge, up to 35\% traffic on the backhaul can be reduced. Specifically, SPs can prefetch popular contents and cache them locally at the network edge during off-peak hours when the cellular systems have plenty of resources. As such, the traffic burden of the costly backhaul transmissions is significantly mitigated during peak hours while the quality-of-experience (QoE) of users is also enhanced due to the low transmission delay between local caches and users.

Due to its promising prospect, content caching has attracted extensive research interests in the past few years. Many works have been devoted to various resource allocation issues in caching systems to optimize content placement, caching, routing and delivery \cite{abedini2014content,pacifici2016cache,kvaternik2016methodology,vural2017caching,dehghan2017complexity,ioannidis2016adaptive,gregori2016wireless,tadrous2015proactive,tadrous2016optimal,tadrous2016joint,mokhtarian2017flexible}.  Besides, coded caching \cite{maddah2014fundamental,maddah2015decentralized,niesen2017coded,pedarsani2016online,karamchandani2016hierarchical}, fundamental scaling law \cite{golrezaei2014scaling,ji2015throughput,ji2016fundamental,jeon2017wireless,liu2017much,mahdian2017throughput}, as well as information-theoretic issues \cite{timo2017rate,xu2017fundamental,lim2017information} have also been studied extensively in the literature. In addition, different parties in caching systems, such as content providers (CPs), SPs and users, are often non-cooperative organizations or individuals seeking to maximize their own benefits instead of the overall social welfare, which are amenable to mechanism design \cite{nisan1999algorithmic} and game theory \cite{osborne1994course}. This has spurred a recent tide in designing incentive mechanisms and game-theoretic solutions to content caching networks \cite{dai2012collaborative,li2016pricing,liu2017design,pacifici2017distributed,pacifici2016coordinated,hajimirsadeghi2017joint,wang2017milking}.  Surveys on content caching systems are available in \cite{liu2016caching,wang2017integration,glass2017leveraging,tourani2017security}.

In most existing works on caching systems, user demands of centents are assumed to be either deterministically given or known with fixed distributions. However, in practice, rational users often make strategic content access decisions in response to the content caching and pricing schemes defined by the SPs or CPs. The goals of the users are to maximize their own payoffs selfishly based on their valuations of the contents. Moreover, users' true valuations of contents are usually private knowledge unknown to the SPs/CPs. This information asymmetry makes user demands hardly predictable and may compromise the SPs' profits in content caching systems.

In this paper, we are motivated to design incentive mechanisms for strategic users requesting contents from an SP. To alleviate the traffic load during peak hours, the SP may download some contents from CPs during off-peak hours in advance and cache them in the network edge such as small-cell base stations (SBSs). The prefetched contents can enhance the quality-of-experience (QoE) of users due to the reduction of transmission latency and the users, in turn, pay the SP for the QoE improvement. Because of the limited cache/storage capacity, the SP will only prefetch those most profitable contents from their respective CPs. Nevertheless, the SP is not aware of the users' true valuations or willingness-to-pay, which hinders the SP from choosing profitable contents to prefetch. In this paper, we design an optimal auction mechanism \cite{myerson1981optimal} to maximize the SP's profit. The users report their valuations (not necessarily truthfully) to the SP, who decides the amounts of contents to be cached and the payments of users based on the reports. The main contributions of this work are summarized as follows.
\begin{itemize}
\item An optimal auction mechanism is designed for the SP to prefetch contents and charge users, in a way that maximizes his expected revenue while ensuring truthful reports and user participation. Different from existing optimal auction formulations \cite{myerson1981optimal,nadendla2017optimal,cao2015target,chun2013secondary}, in this paper, the auctioneer (SP) has multiple objects (contents) to sell and the transfer of objects (content delivery) incurs costs, which can be controlled by judicious design of delivery quality. The same content can be shared by multiple users and hence introduces coupling.
\item A systematic method of finding the optimal cache space allocation over contents is presented under certain regularity conditions. In addition, based on the special structure of the optimal cache space allocation scheme, we devise a computationally efficient method for calculating the optimal user payments.
\item Based on the proposed optimal auction mechanism for content caching, we further examine the optimal choice of the content delivery quality for the case with a large number of users. A closed-form solution is presented to determine the optimal delivery quality so as to maximize the expected revenue of the SP.
\item Extensive numerical experiments are conducted to evaluate the performance of the proposed optimal auction mechanism. The impact of various model parameters is studied empirically to obtain engineering insights into the content caching problem.
\end{itemize}

Next, we briefly review the related work of this paper.

\subsection{Related Work}
In the literature, many research efforts have been devoted to various resource allocation issues in caching systems. In \cite{abedini2014content}, Abedini and Shakkottai investigated content caching and scheduling in the presence of elastic and inelastic traffic by using queueing theory. In \cite{pacifici2016cache}, Pacifici \emph{et al.} studied bandwidth allocation for peer-to-peer (P2P) caches with a Markov decision process (MDP) formulation and proposed approximately optimal allocation policy to reduce the costly internet traffic. In \cite{kvaternik2016methodology}, decentralized cooperative caching was studied and caching methods adaptive to the real-time content popularity were proposed to minimize energy consumption of the networks. Furthermore, transient data caching was considered in \cite{vural2017caching}, where the tradeoffs between data freshness and multihop communication costs were identified. In addition, the problem of joint caching and routing in networks was investigated in \cite{dehghan2017complexity} and approximate solutions were presented to minimize the access delay of users. Distributed adaptive content placement in caching networks was analyzed in \cite{ioannidis2016adaptive} to minimize routing costs. Content caching in device-to-device (D2D) networks was examined in \cite{gregori2016wireless}, where contents could be cached at either base stations or user terminals.  Real-time proactive caching with uncertain time-varying user demands was investigated in\cite{tadrous2015proactive,tadrous2016optimal,tadrous2016joint} by means of smart pricing or user demand shaping and prediction. In addition, several practical caching algorithms were developed and evaluated by real data from global video content delivery networks (CDNs) in \cite{mokhtarian2017flexible}.

Furthermore, due to the selfishness of agents in caching systems, many works consider from the viewpoint of game theory and mechanism design, which are more closely related to this paper. In \cite{dai2012collaborative}, Dai \emph{et al.} presented Vickrey-Clarke-Groves (VCG) auction mechanisms to maximize the social welfare in a collaborative caching system with multiple SPs trading bandwidth with each other. In \cite{li2016pricing}, Li \emph{et al.} considered a video caching system where a SP leased SBSs to multiple video retailers (VRs). Through leasing the SBSs, the SP made profit while the VRs could provide faster video transmission to their users. A Stackelberg game formulation of the interaction between the SP and the VRs was put forth to maximize the payoffs of the SP and the VRs jointly. Later, Liu \emph{et al.} extended the model to an information asymmetric scenario, in which the SP only knew the distribution of the CPs' popularity among users \cite{liu2017design}. The authors proposed a contract-theoretic approach to maximize the profit of the SP, who is the monopolist of the caching system. In \cite{pacifici2017distributed}, Pacifici and Dan studied distributed caching with mutiple networked SPs (caches), among which contents could be shared between neighbors with certain latency. Invoking cooperative game theory, the authors proposed convergenet scalable algorithms for distributed caching. Analogous caching problem among networked SPs was investigated through the lens of non-cooperative game theory in \cite{pacifici2016coordinated} by the same authors. In addition, strategic caching and pricing decisions of multiple information centric networks (ICNs) were investigated with game theory in \cite{hajimirsadeghi2017joint} while fairness issues in collaborative caching were examined through Nash bargaining in \cite{wang2017milking}.

The rest of this paper is organized as follows. In Section \Rmnum{2}, the system model is formally presented and the auction design problem is formulated. In Section \Rmnum{3}, we design the optimal auction for content caching and examine the optimal choice of content delivery quality. Numerical results are given in Section \Rmnum{4} and we conclude this work in Section \Rmnum{5}.

\section{System Model and Problem Formulation}

Consider a model with one SP, $m$ CPs and $n$ users, as illustrated in Fig. \ref{model}. The SP may be in charge of one or several base stations serving users within some region. The CPs provide (sell) contents to the SP, who transmits them to the interested mobile users. To mitigate the traffic burden during peak hours, the SP may prefetch (download) some contents and store them in its local cache at the base stations during off-peak hours in advance. When requested (typically during peak hours), the prefetched contents at the cache can be delivered to users with low latency, leading to better QoE for users \cite{wang2017integration}. In turn, the users pay the SP, who thus gains more profits. In what follows, we detail the model of the content caching system in Fig. \ref{model} and formulate the mechanism design problem.

\subsection{System Model}

\begin{figure}
  \centering
  \includegraphics[scale=.38]{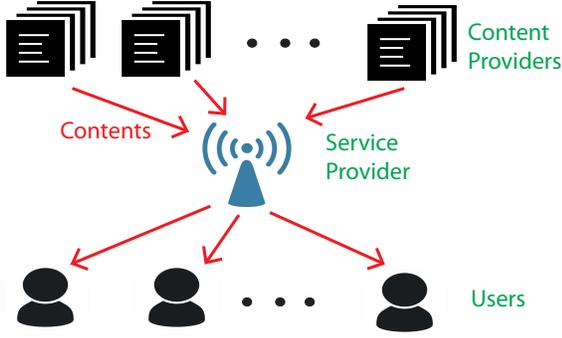}\\
  \caption{An illustration of the caching system under study}\label{model}
\end{figure}

Henceforth, we call the content from CP $i$ content $i$, $i=1,...,m$. Denote the set of contents that interests user $j$ as $\b{S}_j\subset\{1,...,m\}$, $j=1,...,n$. For instance, a Netflix client may be interested in certain categories of movies. On the other hand, denote the set of users interested in content $i$ as $\b{\Omega}_i\subset\{1,...,n\}$. Thus, $i\in\b{S}_j$ is equivalent to $j\in\b{\Omega}_i$. Each user $j$ has a valuation (or willingness-to-pay) $t_j$ of those contents that interest her, i.e., contents in $\b{S}_j$. The valuation $t_j$ is called the type of user $j$ and depends on factors such as the user's demand of contents and economic conditions (richer users tend to have higher valuations or willingness-to-pay). The valuation $t_j$ is only privately known by user $j$ and is unknown to the SP and other users. To model this uncertainty to others, we assume that $t_j$ is a random variable (r.v.) over the interval $\b{T}_j=[\underline{a}_j,\overline{a}_j]$ with probability density function (PDF) $f_j(t_j)$ and cumulative distribution function $F_j(t_j)$. The realization of the r.v. $t_j$ is only revealed to user $j$ while the distribution functions are common knowledge to all users and the SP. The valuations $\@{t}=[t_1,...,t_n]$ are assumed to be independent across users. The joint type space of all users is denoted as $\b{T}=\b{T}_1\times\cdots\times\b{T}_n$ while the joint PDF is denoted as $f(\@{t})=\Pi_{j=1}^nf_j(t_j)$.

To download content $i$, the SP needs to pay CP $i$ with unit price $r_i$, which is set by CP $i$ and considered fixed hereafter. After acquiring contents, the SP delivers them to users as per requests with certain content delivery/transmission quality $\theta>0$, e.g., the video quality of video contents. To deliver contents to $k$ users with quality $\theta$, the SP incurs a unit cost of $kh(\theta)$, where $h:\b{R}_+\mapsto\b{R}_+$ is the cost function of content delivery. The quality parameter $\theta$ is controlled by the SP. If user $j$ receives contents in $\b{S}_j$ with quality $\theta$, his unit satisfaction level will be $\theta t_j$, which depends on both her valuation $t_j$ and the content delivery quality $\theta$.

\subsection{Problem Formulation for Auction Mechanism Design}

Given the limited available cache size at the base stations, the SP aims at downloading the most profitable contents, i.e., those interest many users with high valuations. A challenge faced by the SP is that he does not know the private valuations of users. What the SP knows is only the distribution of the user valuations. Thus, an incentive mechanism is imperative for the SP to elicit truthful valuation information from users and to extract maximal profit. According to the direct revelation principle \cite{myerson1979incentive}, each user $j$ first reports her type (possibly untruthfully) as $\tau_j\in\b{T}_j$, which may deviate from her true type $t_j$. After collecting all the reports $\@{\tau}=[\tau_1,...,\tau_n]\in\b{T}$ from all users, the SP allocates the cache space over contents and charges users according to a mechanism specified by the tuple $\langle\@{p}(\cdot),\@{x}(\cdot)\rangle$. Here, we define $\@{p}(\cdot)=[p_1(\cdot),...,p_m(\cdot)]^\s{T}$ with $p_i:\b{T}\mapsto[0,1]$ prescribing the fraction of caching space used to store content $i$, i.e., $p_i(\@{\tau})$ is the fraction of caching space for content $i$ given the reports $\@{\tau}$. Furthermore, we define $\@{x}(\cdot)=[x_1(\cdot),...,x_n(\cdot)]^\s{T}$ with $x_j:\b{T}\mapsto\b{R}$ prescribing the payment of user $j$, i.e., $x_j(\@{\tau})$ is the payment of user $j$ when the reports are $\@{\tau}$. If all users report their true types, the expected revenue (ER) of the SP is:
\begin{align}
\texttt{ER}\delequal\int_\b{T}\left[\sum_{j=1}^nx_j(\@{t})-\sum_{i=1}^mp_i(\@{t})r_i-\sum_{i=1}^mp_i(\@{t})|\b{\Omega}_i|h(\theta)\right]f(\@{t})d\@{t},\label{er}
\end{align}
where $|\b{\Omega}_i|$ is the cardinality of the set $\b{\Omega}_i$ and the integrands correspond to the user payments, content acquisition costs and content delivery costs, respectively. Given users' reports $\@{\tau}\in\b{T}$ and their true types $\@{t}\in\b{T}$, the ex-post utility of user $j$ under the mechanism is:
\begin{align}
u_j(\@{\tau,t})\delequal\left[\sum_{i\in\b{S}_j}p_i(\@{\tau})\right]\theta t_j-x_j(\@{\tau}).\label{u}
\end{align}
Denote $\b{T}_{-j}=\b{T}_1\times\cdots\times\b{T}_{j-1}\times\b{T}_{j+1}\times\cdots\times\b{T}_n$ as the type space excluding user $j$ and $f_{-j}(\@{t}_{-j})=\Pi_{j'=1,j'\neq j}^nf_{j'}(t_{j'})$ as the joint PDF of types excluding user $j$. Thus, if user $j$ reports $\tau_j$ while his true type is $t_j$ and other users are truth-telling, her interim (expected) utility is:
\begin{align}
v_j(\tau_j,t_j)\delequal\int_{\b{T}_{-j}}u_j(\tau_j,\@{t}_{-j},t_j,\@{t}_{-j})f_{-j}(\@{t}_{-j})d\@{t}_{-j}.\label{v}
\end{align}
Define the truth-telling interim utility of user $j$ as $\w{v}_j(t_j)\delequal v_j(t_j,t_j)$. To incentivize truthful reports of user types, a key step is to guarantee that truth-telling is a Baysian Nash equlibrium \cite{osborne1994course}, i.e.,
\begin{align}
\w{v}_j(t_j)\geq v_j(\tau_j,t_j),~\forall \tau_j,t_j\in\b{T}_j,~\forall j=1,...,n,\label{ic}
\end{align}
which is called the \emph{incentive compatibility (IC)} constraint. In addition, to incentivize users' participation of the auction, the mechanism should satisfy the following (interim) \emph{individual rationality (IR)} constraint:
\begin{align}
\w{v}_j(t_j)\geq0,~\forall t_j\in\b{T}_j,~\forall j=1,...,n,\label{ir}
\end{align}
which means non-negative truth-telling interim utility. Besides, the mechanism obviously needs to satisfy the following \emph{fraction feasibility (FF)} constraint:
\begin{align}
p_i(\@{t})\geq0,~i=1,...,m,~\sum_{i=1}^mp_i(\@{t})\leq1,~\forall \@{t}\in\b{T}.\label{ff}
\end{align}
Overall, the optimization problem faced by the SP is to design optimal auction mechanism $\langle\@{p}(\cdot),\@{x}(\cdot)\rangle$ and optimal delivery quality $\theta$ such that the IC, IR, FF constraints are satisfied and its ER is maximized, i.e.,
\begin{align}
\begin{split}
&\text{Maximize}_{\@{p,x},\theta}~\texttt{ER}\text{ in \eqref{er}}\\
&\text{subject to}~
\begin{cases}
\texttt{FF}\text{ in \eqref{ff}},\\
\texttt{IC}\text{ in \eqref{ic}},\\
\texttt{IR}\text{ in \eqref{ir}},\\
\theta>0.
\end{cases}
\end{split}\label{opt_auc}
\end{align}

\begin{table}
\centering

  \caption{Notations of the model}
  \begin{tabular}{|c|l|}

  \hline
\textbf{Notations}&\textbf{~~~~~~~~~~~~~~~~~~~~Definitions}\\
\hline
$m$&The number of CPs or contents\\
\hline
$n$&The number of users\\
\hline
$\b{S}_j$&The set of contents that interest user $j$\\
\hline
$\b{\Omega}_i$&The set of users interested in content $i$\\
\hline
$\b{T}_j=[\u{a}_j,\o{a}_j]$&The type space of user $j$\\
\hline
$f_j(t_j)$&The PDF of the type of user $j$\\
\hline
$r_i$&The unit price of downloading content $i$\\
\hline
$\theta$&The quality of content delivery\\
\hline
$h(\theta)$&The unit cost of delivering content with quality $\theta$\\
\hline
$\@{p}(\cdot)$&The cache space allocation mechanism\\
\hline
$\@{x}(\cdot)$&The payment mechanism\\
\hline
$u_j(\@{\tau},\@{t})$&The ex-post utility of user $j$\\
\hline
$v_j(\tau_j,t_j)$&The interim utility of user $j$\\
\hline
$\w{v}_j(t_j)$&The truthful interim utility of user $j$\\
\hline
\end{tabular}\label{notations}
\end{table}

There are several differences between the optimal auction problem \eqref{opt_auc} in this paper and the classical Myerson's optimal auction \cite{myerson1981optimal} as well as other applications of optimal auction in various resource allocation problems in \cite{nadendla2017optimal,cao2015target,chun2013secondary}. First, unlike traditional optimal auction problems, the auctioneer (SP) has multiple objects ($m$ contents from different CPs) to sell. Different objects (contents) have different popularity (captured by $\b{\Omega}_i$) among users and the acquisition of different objects incurs different costs (captured by $r_i$). The same content can be shared by multiple users and hence brings forth coupling, i.e., different users may be interested in the same content so that caching it by the SP can benefit multiple users. In fact, exploiting this redundancy in content usage is the keystone of information centric networks and caching systems so that frequently requested contents need not be transmitted repetitively through the costly backhaul channels. Second, different from existing optimal auction frameworks, in this paper, content delivery incurs costs which are controlled by the auctioneer through the delivery quality parameter $\theta$. Increasing $\theta$ boosts users' utilities and thus incentivizes them to pay more. From the perspective of the SP, this enhances its profit extration from users at the expense of higher content delivery costs. Thereby, a judiciously chosen $\theta$ is important to achieve the optimal profit-cost tradeoff. The notations of the model are summarized in Table \ref{notations}.

\section{Optimal Auction Design}

In this section, we solve the optimal auction problem in \eqref{opt_auc}. We first fix the content delivery quality $\theta>0$ and solve for the optimal auction mechanism $\langle\@{p}(\cdot),\@{x}(\cdot)\rangle$. To this end, we transform the IC, IR constraints and the objective function into more tractable forms, and find the optimal mechanism after imposing some regularity conditions on the type distributions (Theorem \ref{sol_fix_theta}). We next show how to compute the optimal payment efficiently (Proposition \ref{prop_payment}) and demonstrate a property of the optimal mechanism (Proposition \ref{property}). Once obtaining the optimal mechanism $\langle\@{p}^*(\cdot),\@{x}^*(\cdot)\rangle$, we further endeavor to find the optimal content delivery quality $\theta$ under the assumption that the number of users $n$ is very large (Theorem \ref{thm_opt_theta}), which is well justified by the prevalence of various content services and mobile networks.

\subsection{Optimal Auction with Fixed Content Delivery Quality $\theta$}

In this subsection, we fix $\theta$ and solve for the optimal $\langle\@{p}(\cdot),\@{x}(\cdot)\rangle$. In other words, for fixed $\theta>0$ we solve the following problem:
\begin{align}
\begin{split}
&\text{Maximize}_{\@{p,x}}~\texttt{ER}\text{ in \eqref{er}}\\
&\text{subject to}~
\begin{cases}
\texttt{FF}\text{ in \eqref{ff}},\\
\texttt{IC}\text{ in \eqref{ic}},\\
\texttt{IR}\text{ in \eqref{ir}}.
\end{cases}
\end{split}\label{opt_auc_fix_theta}
\end{align}

From \eqref{u} and \eqref{v}, we write:
\begin{align}
v_j(\tau_j,t_j)=\theta t_j\w{p}_j(\tau_j)-\w{x}_j(\tau_j),~\forall\tau_j,t_j,~\forall j=1,...,n,\label{v2}
\end{align}
where we have defined:
\begin{align}
&\w{p}_j(\tau_j)\delequal\int_{\b{T}_{-j}}\left[\sum_{i\in\b{S}_j}p_i(\tau_j,\@{t}_{-j})\right]f_{-j}(\@{t}_{-j})d\@{t}_{-j},\label{p_tilde_def}\\
&\w{x}_j(\tau_j)\delequal\int_{\b{T}_{-j}}x_j(\tau_j,\@{t}_{-j})f_{-j}(\@{t}_{-j})d\@{t}_{-j}.
\end{align}
We can transform the IC, IR constraints into equivalent tractable forms according to the following lemma.
\begin{lem}\label{lem_constraints}
Suppose the FF constraint in \eqref{ff} holds. Then, the necessary and sufficient condition for IC in \eqref{ic} and IR in \eqref{ir} is:
\begin{enumerate}[label=(\roman*)]
\item $\w{p}_j(\cdot)$ is an increasing function for each $j=1,...,n$.
\item For any $t_j\in\b{T}_j$, $j=1,...,n$: $\w{v}_j(t_j)=\w{v}_j(\u{a}_j)+\theta\int_{\u{a}_j}^{t_j}\w{p}_j(\tau_j)d\tau_j$.
\item For each $j=1,...,n$: $\w{v}_j(\u{a}_j)\geq0$.
\end{enumerate}
\end{lem}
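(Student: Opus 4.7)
The plan is to establish the lemma by the standard Myerson-style envelope argument, exploiting the affine-in-$t_j$ structure
\[
v_j(\tau_j,t_j)=\theta t_j\w{p}_j(\tau_j)-\w{x}_j(\tau_j),
\]
so that $\w{v}_j(t_j)=\sup_{\tau_j\in\b{T}_j}v_j(\tau_j,t_j)$ under IC. I will handle necessity and sufficiency separately.

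For necessity, assume IC and IR hold. Applying IC with roles of $\tau_j$ and $t_j$ swapped yields the two inequalities
\[
\w{v}_j(\tau_j)+\theta(t_j-\tau_j)\w{p}_j(\tau_j)\leq\w{v}_j(t_j),\qquad \w{v}_j(t_j)+\theta(\tau_j-t_j)\w{p}_j(t_j)\leq\w{v}_j(\tau_j).
\]
Adding these gives $(t_j-\tau_j)(\w{p}_j(t_j)-\w{p}_j(\tau_j))\geq 0$, which is (i). Rearranging the same pair of inequalities sandwiches $\w{v}_j(t_j)-\w{v}_j(\tau_j)$ between $\theta(t_j-\tau_j)\w{p}_j(\tau_j)$ and $\theta(t_j-\tau_j)\w{p}_j(t_j)$; since $\w{p}_j$ is monotone (hence bounded and Riemann-integrable), this sandwich implies $\w{v}_j$ is absolutely continuous on $\b{T}_j$ with derivative $\theta\w{p}_j$ almost everywhere, yielding the integral representation in (ii). Finally, since FF gives $\w{p}_j\geq 0$, (ii) makes $\w{v}_j$ nondecreasing, so IR holds for all $t_j$ iff $\w{v}_j(\u{a}_j)\geq 0$, which is (iii).

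For sufficiency, assume (i)--(iii). Then (ii) together with $\w{p}_j\geq 0$ and (iii) immediately yield $\w{v}_j(t_j)\geq 0$ for every $t_j\in\b{T}_j$, i.e.\ IR. For IC, use \eqref{v2} to write
\[
v_j(\tau_j,t_j)=\w{v}_j(\tau_j)+\theta(t_j-\tau_j)\w{p}_j(\tau_j),
\]
and substitute (ii) into both $\w{v}_j(t_j)$ and $\w{v}_j(\tau_j)$ to obtain
\[
\w{v}_j(t_j)-v_j(\tau_j,t_j)=\theta\int_{\tau_j}^{t_j}\bigl(\w{p}_j(s)-\w{p}_j(\tau_j)\bigr)ds,
\]
which is nonnegative for both $t_j\geq\tau_j$ and $t_j\leq\tau_j$ by the monotonicity (i). This establishes IC.

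The only mildly delicate step is the passage from the sandwich inequalities to the integral identity in (ii): I need to justify that the monotone function $\w{p}_j$ serves as the almost-everywhere derivative of $\w{v}_j$ on $[\u{a}_j,\o{a}_j]$. The clean way is to observe that the sandwich directly gives $\w{v}_j$ as a Lipschitz-on-compacts convex function (it is a supremum of affine functions in $t_j$) with subgradient contained in $[\theta\w{p}_j(t_j^-),\theta\w{p}_j(t_j^+)]$, so the fundamental theorem of calculus for absolutely continuous functions supplies (ii) without any additional regularity assumption on $\w{p}_j$.
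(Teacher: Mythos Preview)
Your proof is correct and follows essentially the same Myerson envelope argument as the paper: both derive the sandwich inequalities from IC, deduce monotonicity of $\w{p}_j$, pass to the integral representation (ii), and for sufficiency compute $\w{v}_j(t_j)-v_j(\tau_j,t_j)=\theta\int_{\tau_j}^{t_j}[\w{p}_j(s)-\w{p}_j(\tau_j)]\,ds$. Your treatment of the passage from the sandwich to (ii) via absolute continuity and subgradients is in fact more careful than the paper's, which simply takes the limit $\tau_j\to t_j$ and asserts $\frac{d}{dt_j}\w{v}_j(t_j)=\theta\w{p}_j(t_j)$ pointwise without addressing possible discontinuities of the monotone function $\w{p}_j$.
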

\begin{proof}
The proof is presented in Appendix A.
\end{proof}

Based on Lemma \ref{lem_constraints}, we can simplify the expected revenue, i.e., the objective function of problem \eqref{opt_auc_fix_theta}, as follows.

\begin{lem}\label{lem_er}
For any feasible mechanism of problem \eqref{opt_auc_fix_theta}, i.e., $\langle\@{p}(\cdot),\@{x}(\cdot)\rangle$ satisfying the FF, IC and IR constraints, the expected revenue becomes:
\begin{align}
\texttt{ER}=&-\sum_{j=1}^n\w{v}_j(\u{a}_j)+\theta\int_\b{T}\Bigg\{\sum_{j=1}^n\left[t_j-\frac{1-F_j(t_j)}{f_j(t_j)}-\frac{h(\theta)}{\theta}\right]\nonumber\\
&\cdot\left(\sum_{i\in\b{S}_j}p_i(\@{t})\right)\Bigg\}f(\@{t})d\@{t}-\int_\b{T}\left[\sum_{i=1}^mp_i(\@{t})r_i\right]f(\@{t})d\@{t}.\label{er_alt}
\end{align}
\end{lem}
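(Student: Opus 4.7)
The plan is to start from the definition of $\texttt{ER}$ in \eqref{er}, substitute the closed form for the expected payment $\w{x}_j$ implied by Lemma \ref{lem_constraints}, and then rearrange the delivery-cost term using the duality between $\b{\Omega}_i$ and $\b{S}_j$.

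First I would invoke \eqref{v2} and condition (ii) of Lemma \ref{lem_constraints} to write
\[
\w{x}_j(t_j)=\theta t_j\w{p}_j(t_j)-\w{v}_j(\u{a}_j)-\theta\int_{\u{a}_j}^{t_j}\w{p}_j(\tau_j)d\tau_j.
\]
Because $\int_\b{T}x_j(\@{t})f(\@{t})d\@{t}=\int_{\b{T}_j}\w{x}_j(t_j)f_j(t_j)dt_j$ by the definition of $\w{x}_j$ and independence, the total expected payment term $\sum_j\int_\b{T}x_j(\@{t})f(\@{t})d\@{t}$ can be expressed entirely in terms of $\w{p}_j$ and the boundary values $\w{v}_j(\u{a}_j)$.

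Next I would handle the nested integral $\int_{\b{T}_j}f_j(t_j)\int_{\u{a}_j}^{t_j}\w{p}_j(\tau_j)d\tau_j\,dt_j$ by swapping the order of integration (Fubini), which gives $\int_{\b{T}_j}\w{p}_j(\tau_j)(1-F_j(\tau_j))d\tau_j$. Multiplying and dividing by $f_j(\tau_j)$ produces the virtual-valuation factor $\frac{1-F_j(\tau_j)}{f_j(\tau_j)}$, and unrolling $\w{p}_j(\tau_j)$ back via \eqref{p_tilde_def} converts this single-user integral into one over the full joint space $\b{T}$ with integrand $\big(\sum_{i\in\b{S}_j}p_i(\@{t})\big)f(\@{t})$. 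After summing over $j$, the $t_j\w{p}_j(t_j)$ contribution likewise collapses into $\int_\b{T}t_j\big(\sum_{i\in\b{S}_j}p_i(\@{t})\big)f(\@{t})d\@{t}$, yielding the $t_j-\frac{1-F_j(t_j)}{f_j(t_j)}$ bracket.

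The last step is to absorb the delivery-cost term $\sum_{i=1}^m p_i(\@{t})|\b{\Omega}_i|h(\theta)$ into that same bracket. The key identity is the double-counting swap
\[
\sum_{i=1}^m p_i(\@{t})|\b{\Omega}_i|=\sum_{i=1}^m p_i(\@{t})\sum_{j\in\b{\Omega}_i}1=\sum_{j=1}^n\sum_{i\in\b{S}_j}p_i(\@{t}),
\]
which matches the user-indexed sum produced by the payment rewriting. Pulling the factor $h(\theta)=\theta\cdot\frac{h(\theta)}{\theta}$ inside the $\theta$-multiplied bracket gives the $-\frac{h(\theta)}{\theta}$ term. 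The content-acquisition cost $\sum_i p_i(\@{t})r_i$ passes through untouched, and the boundary terms yield $-\sum_j\w{v}_j(\u{a}_j)$, completing \eqref{er_alt}.

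I expect the main obstacle to be purely bookkeeping: keeping the Fubini swap and the $\b{\Omega}_i\leftrightarrow\b{S}_j$ reindexing straight so that the per-user integrand lines up precisely with the virtual-valuation-minus-delivery-cost factor. No new analytic idea is required beyond Lemma \ref{lem_constraints} and the standard Myerson-style integration by parts.
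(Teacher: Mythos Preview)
Your proposal is correct and follows essentially the same route as the paper's own proof in Appendix~B: use Lemma~\ref{lem_constraints}(ii) together with \eqref{v2} to express $\w{x}_j(t_j)$, apply Fubini to the nested $\int_{\u{a}_j}^{t_j}\w{p}_j(\tau_j)d\tau_j$ term to produce the factor $1-F_j(t_j)$, unroll $\w{p}_j$ via \eqref{p_tilde_def}, and then use the reindexing identity $\sum_{i=1}^m p_i(\@{t})|\b{\Omega}_i|=\sum_{j=1}^n\sum_{i\in\b{S}_j}p_i(\@{t})$ to merge the delivery-cost term into the virtual-valuation bracket. The only cosmetic difference is that the paper carries the $\b{T}_{-j}$-integral through explicitly rather than working with $\w{x}_j,\w{p}_j$ and unrolling at the end, but the computations are identical.
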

\begin{proof}
The proof is presented in Appendix B.
\end{proof}

Combining Lemmas \ref{lem_constraints} and \ref{lem_er}, we see that problem \eqref{opt_auc_fix_theta} is equivalent to:
\begin{align}
\begin{split}
&\text{Maximize}_{\@{p,x}}~\texttt{ER}\text{ in \eqref{er_alt}}\\
&\text{subject to}~
\begin{cases}
\texttt{FF}\text{ in \eqref{ff}},\\
\text{(i), (ii), (iii) in Lemma \ref{lem_constraints}}.
\end{cases}
\end{split}\label{prob_px}
\end{align}
According to definitions, $\w{v}_j(\cdot)$ is related to both $\@{p}(\cdot)$ and $\@{x}(\cdot)$ while $\w{p}_j(\cdot)$ only depends on $\@{p}(\cdot)$ (independent of $\@{x}(\cdot)$). We note that, for ER in \eqref{er_alt}, only the first term $-\sum_{j=1}^n\w{v}_j(\u{a}_j)$ depends on $\@{x}(\cdot)$ and the rest terms are independent of $\@{x}(\cdot)$. Furthermore, constraints FF and (i) in Lemma \ref{lem_constraints} depend on $\@{p}(\cdot)$ only and are independent of $\@{x}(\cdot)$. Therefore, if we fix some $\@{p}(\cdot)$ satisfying constraints FF and (i) of Lemma \ref{lem_constraints} and optimize over $\@{x}(\cdot)$, problem \eqref{prob_px} becomes:
\begin{align}
\begin{split}
&\text{Maximize}_{\@{x}}~-\sum_{j=1}^n\w{v}_j(\u{a}_j)\\
&\text{subject to:}~\text{(ii), (iii) in Lemma \ref{lem_constraints}}.
\end{split}\label{prob_x}
\end{align}

The solution to problem \eqref{prob_x} is given as follows.
\begin{prop}\label{x_opt}
For fixed $\@{p}(\cdot)$ satisfying constraints FF and (i) of Lemma \ref{lem_constraints}, the optimal solution to problem \eqref{prob_x} is:
\begin{align}
x_j^*(\@{t})\delequal\theta t_j\sum_{i\in\b{S}_j}p_i(\@{t})-\theta\int_{\u{a}_j}^{t_j}\sum_{i\in\b{S}_j}p_i(\tau_j,\@{t}_{-j})d\tau_j,\nonumber\\ \forall j=1,...,n,~\forall t\in\b{T},\label{x_sol}
\end{align}
which achieves the optimal value of 0.
\end{prop}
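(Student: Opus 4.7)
The plan is to proceed in two moves: first derive a simple upper bound on the objective of problem \eqref{prob_x}, then verify that the proposed payment $x_j^*$ attains that bound while respecting constraints (ii) and (iii) of Lemma \ref{lem_constraints}.

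For the upper bound, I would invoke constraint (iii) directly: since $\w{v}_j(\u{a}_j)\geq 0$ for every $j$, the objective satisfies
\begin{align*}
-\sum_{j=1}^n \w{v}_j(\u{a}_j)\leq 0,
\end{align*}
so $0$ is an upper bound on the optimal value of problem \eqref{prob_x}. Thus it suffices to exhibit a feasible $\@{x}(\cdot)$ achieving $\w{v}_j(\u{a}_j)=0$ for all $j$, together with constraint (ii).

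For the achievability step, I would plug the candidate $x_j^*(\@{t})$ from \eqref{x_sol} back into the definition of $v_j(\tau_j,t_j)$ and compute $\w{v}_j(t_j)=v_j(t_j,t_j)$. Taking expectation over $\@{t}_{-j}$ and using the definition of $\w{p}_j(\cdot)$ in \eqref{p_tilde_def}, the ex-post payment reduces in expectation to
\begin{align*}
\w{x}_j^*(t_j)=\theta t_j\w{p}_j(t_j)-\theta\int_{\u{a}_j}^{t_j}\w{p}_j(\tau_j)d\tau_j,
\end{align*}
so that $\w{v}_j(t_j)=\theta t_j\w{p}_j(t_j)-\w{x}_j^*(t_j)=\theta\int_{\u{a}_j}^{t_j}\w{p}_j(\tau_j)d\tau_j$. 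Setting $t_j=\u{a}_j$ gives $\w{v}_j(\u{a}_j)=0$, which verifies (iii) with equality, and the identity just displayed is precisely constraint (ii) with $\w{v}_j(\u{a}_j)=0$. Hence $\@{x}^*$ is feasible for problem \eqref{prob_x} and attains objective value $0$.

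No real obstacle is expected here: the only nontrivial point is recognizing that problem \eqref{prob_x} decouples across $j$ and that only the \emph{interim} payment $\w{x}_j$ enters the objective and constraints (ii), (iii), so that many ex-post payment rules $x_j(\@{t})$ could in principle achieve optimality. The candidate \eqref{x_sol} is the natural ``pathwise'' choice whose $\@{t}_{-j}$-average produces the required $\w{x}_j^*$; I would not attempt to argue uniqueness, only optimality of the specific rule stated.
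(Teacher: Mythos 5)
Your proposal is correct and follows essentially the same route as the paper: bound the objective above by $0$ via constraint (iii), then verify that the candidate $\@{x}^*$ satisfies (ii) and (iii) with $\w{v}_j(\u{a}_j)=0$ by averaging over $\@{t}_{-j}$ and interchanging the order of integration to recover $\w{p}_j(\cdot)$. The only cosmetic difference is that you derive both (ii) and (iii) from the single interim identity $\w{v}_j(t_j)=\theta\int_{\u{a}_j}^{t_j}\w{p}_j(\tau_j)d\tau_j$, whereas the paper checks (iii) directly at $t_j=\u{a}_j$ and (ii) separately.
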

\begin{proof}
According to (iii) in Lemma \ref{lem_constraints}, we see that the optimal value of \eqref{prob_x} is no greater than 0. From \eqref{x_sol}, we know that, for any $\@{t}_{-j}\in\b{T}_{-j}$, $x_j^*(\u{a}_j,\@{t}_{-j})=\theta\u{a}_j\sum_{i\in\b{S}_j}p_i(\u{a}_j,\@{t}_{-j})$. So, (iii) of Lemma \ref{lem_constraints} holds with equality for $\@{x}^*(\cdot)$, i.e., $\w{v}_j(\u{a}_j)=0$ for each $j$. Substituting \eqref{x_sol} into the L.H.S. of (ii) in Lemma \ref{lem_constraints}, interchanging the order of integrals and noting the definition of $\w{p}_j(\cdot)$, we can show that (ii) also holds for $\@{x}^*(\cdot)$. So, $\@{x}^*(\cdot)$ is feasible for problem \eqref{prob_x}. Meanwhile, it also achieves an objective function value of 0. Thus, it is optimal for problem \eqref{prob_x}.
\end{proof}
Next, we optimize over $\@{p}(\cdot)$. According to Proposition \ref{x_opt}, we only need to solve the following problem:
\begin{align}
\begin{split}
&\text{Maximize}_{\@{p}}~\int_\b{T}\Bigg\{\sum_{j=1}^n\left[\theta t_j-\frac{\theta(1-F_j(t_j))}{f_j(t_j)}-h(\theta)\right]\\
&~~~~~~~~~~~~~~\cdot\left[\sum_{i\in\b{S}_j}p_i(\@{t})\right]-\sum_{i=1}^mp_i(\@{t})r_i\Bigg\}f(\@{t})d\@{t}\\
&\text{subject to}~
\begin{cases}
\texttt{FF}\text{ in \eqref{ff}},\\
\text{(i) in Lemma \ref{lem_constraints}}.
\end{cases}
\end{split}\label{prob_p}
\end{align}
Define $c_j(t_j)\delequal t_j-\frac{1-F_j(t_j)}{f_j(t_j)}$, for any $j=1,...,n$, $t_j\in\b{T}_j$. Note that $c_j(\cdot)$ is a function depending only on the distribution of the true type of user $j$. To facilitate solving problem \eqref{prob_p}, we make the following regularity assumption, which is common in the mechanism design literature \cite{myerson1981optimal,nadendla2017optimal,cao2015target}.
\begin{defi}\label{regularity}
[Regularity] The optimal auction problem \eqref{opt_auc_fix_theta} is \emph{regular} if $c_j(\cdot)$ is an increasing function on $\b{T}_j$ for each $j=1,...,n$.
\end{defi}
We remark that this regularity condition holds for many familiar distributions, e.g., uniform distributions and exponential distributions. One may refer to \cite{baron1984regulation} for a partial list. Under the regularity assumption, we can compute the optimal solution to problem \eqref{prob_p} as follows.

\begin{prop}\label{p_opt}
Suppose the optimal auction problem \eqref{opt_auc_fix_theta} is regular. Then, the optimal solution $\@{p}^*(\cdot)$ to problem \eqref{prob_p} is constructed as follows. For each $\@{t}\in\b{T}$, define $k=\arg\max_{i=1,...,m}\left\{\sum_{j\in\b{\Omega}_i}[\theta c_j(t_j)-h(\theta)]-r_i\right\}$, whose value depends on $\@{t}$. Then, the optimal $\@{p}^*(\@{t})$ is given as:
\begin{align}
\begin{cases}
p_k^*(\@{t})=1,~p_i^*(\@{t})=0,~\forall i\neq k,\\ 
~~~~~~~~~\text{if}~\max_{i=1,...,m}\left\{\sum_{j\in\b{\Omega}_i}[\theta c_j(t_j)-h(\theta)]-r_i\right\}>0;\\
p_i^*(\@{t})=0,~\forall i=1,...,m,~\text{otherwise.}\label{p_sol}
\end{cases}
\end{align}
\end{prop}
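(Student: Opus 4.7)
The plan is to first simplify the objective of problem \eqref{prob_p} by swapping the double sum using the equivalence $i\in\b{S}_j \Leftrightarrow j\in\b{\Omega}_i$, and then perform a pointwise maximization for each fixed $\@{t}\in\b{T}$, treating the monotonicity constraint (i) of Lemma \ref{lem_constraints} as a side condition to verify ex post under the regularity assumption.

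The first step is to rewrite
\begin{align*}
\sum_{j=1}^n[\theta c_j(t_j)-h(\theta)]\sum_{i\in\b{S}_j}p_i(\@{t})
=\sum_{i=1}^m p_i(\@{t})\sum_{j\in\b{\Omega}_i}[\theta c_j(t_j)-h(\theta)],
\end{align*}
so that the objective of \eqref{prob_p} takes the separable form
\begin{align*}
\int_\b{T}\sum_{i=1}^m p_i(\@{t})A_i(\@{t})f(\@{t})d\@{t},\quad
A_i(\@{t})\delequal\sum_{j\in\b{\Omega}_i}[\theta c_j(t_j)-h(\theta)]-r_i.
\end{align*}
Ignoring constraint (i) of Lemma \ref{lem_constraints} momentarily and keeping only FF, the problem decouples across $\@{t}$, reducing to the linear program $\max\{\sum_i p_i A_i:\,p_i\ge0,\,\sum_i p_i\le1\}$ at each $\@{t}$. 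Standard reasoning yields the bang-bang solution: let $k$ be the maximizer of $A_i(\@{t})$; put $p_k^*(\@{t})=1$ if $A_k(\@{t})>0$, and set every $p_i^*(\@{t})=0$ otherwise. This gives exactly the formula \eqref{p_sol}.

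The remaining and slightly subtler step is to verify that this pointwise optimizer also satisfies the monotonicity condition (i) of Lemma \ref{lem_constraints}, namely that $\w{p}_j(\cdot)$ is increasing. This is where the regularity assumption enters. Fix $j$ and any $\@{t}_{-j}\in\b{T}_{-j}$, and consider how $\sum_{i\in\b{S}_j}p_i^*(\tau_j,\@{t}_{-j})$ varies with $\tau_j$. Observe that raising $\tau_j$ strictly increases $A_i(\tau_j,\@{t}_{-j})$ for every $i$ with $j\in\b{\Omega}_i$ (equivalently $i\in\b{S}_j$), because $c_j$ is increasing by regularity, while leaving $A_i$ unchanged for all $i\notin\b{S}_j$. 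A simple case analysis on whether the current argmax lies inside or outside $\b{S}_j$, and on the sign of the maximum, shows that the indicator $\sum_{i\in\b{S}_j}p_i^*(\tau_j,\@{t}_{-j})\in\{0,1\}$ can only transition from $0$ to $1$ as $\tau_j$ grows, hence is nondecreasing; integrating over $\@{t}_{-j}$ yields monotonicity of $\w{p}_j$.

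The main obstacle I anticipate is precisely this monotonicity verification, because it requires the regularity assumption to be used sharply and a careful case distinction on the behavior of the argmax, together with a tie-breaking convention to keep the construction measurable. Once that is in place, combining the pointwise optimality (which upper bounds any feasible objective value) with the feasibility of $\@{p}^*$ for the full problem \eqref{prob_p} (FF by construction, (i) by the argument above) concludes the proof.
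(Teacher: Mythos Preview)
Your proposal is correct and follows essentially the same approach as the paper: drop constraint (i), solve the decoupled pointwise linear program to obtain \eqref{p_sol}, and then verify ex post that the resulting $\@{p}^*$ satisfies the monotonicity condition (i) under regularity. The paper's monotonicity argument is a slightly more explicit two-case analysis (showing that if $\sum_{i\in\b{S}_j}p_i^*(\tau_j,\@{t}_{-j})=1$ then it remains $1$ at any larger $t_j$), but your observation that increasing $\tau_j$ raises every $A_i$ with $i\in\b{S}_j$ by the common amount $\theta[c_j(t_j)-c_j(\tau_j)]$ while leaving the others unchanged is exactly the mechanism the paper exploits.
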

\begin{proof}
We first omit the constraint (i) of Lemma \ref{lem_constraints} and study the following relaxed version:
\begin{align}
\begin{split}
&\text{Maximize}_{\@{p}}~\text{The objective function of \eqref{prob_p}}\\
&\text{subject to:}~\texttt{FF}\text{ in \eqref{ff}}.
\end{split}\label{prob_p_relax}
\end{align}
Later, we will show that the optimal solution to this relaxed problem happens to satisfy (i) as well, i.e., it is also optimal for the original problem \eqref{prob_p}. We note both the objective and the constraint of problem \eqref{prob_p_relax} are decoupled across types. So, problem \eqref{prob_p_relax} can be solved for each given type $\@{t}\in\b{T}$. For fixed $\@{t}\in\b{T}$, by making use of the definition of $c_j(\cdot)$ and rearranging terms in the objective function, we can rewrite the corresponding problem as:
\begin{align}
\begin{split}
&\text{Maximize}_{\@{p}(\@{t})}~\sum_{i=1}^m\left\{\sum_{j\in\b{\Omega}_i}[\theta c_j(t_j)-h(\theta)]-r_i\right\}p_i(t)\\
&\text{subject to:}~
\begin{cases}
p_i(\@{t})\geq0,~i=1,...,m,\\
\sum_{i=1}^mp_i(\@{t})\leq1.
\end{cases}
\end{split}\label{prob_p_relax_t}
\end{align}
Problem \eqref{prob_p_relax_t} is a simple linear program (LP), whose optimal solution is clearly $\@{p}^*(\@{t})$ given in \eqref{p_sol}. It now remains to show that $\@{p}^*(\cdot)$ satisfies statement (i) in Lemma \ref{lem_constraints}, i.e., $\int_{\b{T}_{-j}}\left[\sum_{i\in\b{S}_j}p_i^*(\@{t})\right]f_{-j}(\@{t}_{-j})d\@{t}_{-j}$ is an increasing function of $t_j\in\b{T}_j$ for every $j\in\{1,...,n\}$, so that the aforementioned relaxation is exact.

To this end, for any $j\in\{1,...,n\}$, consider arbitrary $\tau_j,t_j\in\b{T}_j$ with $\tau_j\leq t_j$. By the regularity assumption, we have $c_j(\tau_j)\leq c_j(t_j)$. From the construction of $\@{p}^*(\cdot)$ in \eqref{p_sol}, we know that $\sum_{i\in\b{S}_j}p_i^*(\tau_j,\@{t}_{-j})$ is either 0 or 1, for any $\@{t}_{-j}\in\b{T}_{-j}$. Next, for arbitrarily fixed $\@{t}_{-j}\in\b{T}_{-j}$, we distinguish two cases.

\emph{Case 1: $\sum_{i\in\b{S}_j}p_i^*(\tau_j,\@{t}_{-j})=1$.} In such a case, there exists some $l\in\b{S}_j$ such that $p_l^*(\tau_j,\@{t}_{-j})=1$. Hence, according to the construction of $\@{p}^*(\cdot)$, we have:
\begin{align}
&\theta c_j(\tau_j)-h(\theta)+\sum_{s\in\b{\Omega}_l,s\neq j}[\theta c_s(t_s)-h(\theta)]-r_l\\
&=\max\Bigg\{\max_{i\in\b{S}_j}\Bigg\{\theta c_j(\tau_j)-h(\theta)+\sum_{s\in\b{\Omega}_i,s\neq j}[\theta c_s(t_s)-h(\theta)]\nonumber\\
&~~~~-r_i\Bigg\},\max_{i\notin\b{S}_j}\left\{\sum_{s\in\b{\Omega}_i}[\theta c_s(t_s)-h(\theta)]-r_i\right\}\Bigg\}\\
&>0.\label{p2_3}
\end{align}
Thus, for any $i\in\b{S}_j$: $\theta c_j(\tau_j)-h(\theta)+\sum_{s\in\b{\Omega}_l,s\neq j}[\theta c_s(t_s)-h(\theta)]-r_l\geq\theta c_j(\tau_j)-h(\theta)+\sum_{s\in\b{\Omega}_i,s\neq j}[\theta c_s(t_s)-h(\theta)]-r_i$. Adding $\theta c_j(t_j)-\theta c_j(\tau_j)$ on both sides, we obtain, $\forall i\in\b{S}_j$:
\begin{align}
\sum_{s\in\b{\Omega}_l}[\theta c_s(t_s)-h(\theta)]-r_l\geq\sum_{s\in\b{\Omega}_i}[\theta c_s(t_s)-h(\theta)]-r_i.\label{p2_1}
\end{align}
On the other hand, for any $i\notin\b{S}_j$:
\begin{align}
&\sum_{s\in\b{\Omega}_i}[\theta c_s(t_s)-h(\theta)]-r_i\\
&\leq\theta c_j(\tau_j)-h(\theta)+\sum_{s\in\b{\Omega}_l,s\neq j}[\theta c_s(t_s)-h(\theta)]-r_l\\
&\aleq\sum_{s\in\b{\Omega}_l}[\theta c_s(t_s)-h(\theta)]-r_l,\label{p2_2}
\end{align}
where (a) follows from the fact that $c_j(\tau_j)\leq c_j(t_j)$. Combining inequalities \eqref{p2_1} and \eqref{p2_2}, we thus conclude $l=\arg\max_{i=1,...,m}\left\{\sum_{r\in\b{\Omega}_i}[\theta c_s(t_s)-h(\theta)]-r_i\right\}$. Moreover, we note that $\max_{i=1,...,m}\left\{\sum_{s\in\b{\Omega}_i}[\theta c_s(t_s)-h(\theta)]-r_i\right\}=\sum_{s\in\b{\Omega}_l}[\theta c_s(t_s)-h(\theta)]-r_l\geq\theta c_j(\tau_j)-h(\theta)+\sum_{s\in\b{\Omega}_l,s\neq j}[\theta c_s(t_s)-h(\theta)]-r_l>0$, where the second last step results from $c_j(\tau_j)\leq c_j(t_j)$ and the last step follows from \eqref{p2_3}. Hence, according to the construction of $\@{p}^*(\cdot)$, we have $p_l^*(\@{t})=1$, $p_i^*(\@{t})=0$, $\forall i\neq l$. So, $\sum_{i\in\b{S}_j}p_i^*(\@{t})=1=\sum_{i\in\b{S}_j}p_i^*(\tau_j,\@{t}_{-j})$.

\emph{Case 2: $\sum_{i\in\b{S}_j}p_i^*(\tau_j,\@{t}_{-j})=0$.} In such a case, since $\sum_{i\in\b{S}_j}p_i^*(\@{t})$ is either 0 or 1, we evidently have: $\sum_{i\in\b{S}_j}p_i^*(\tau_j,\@{t}_{-j})\leq\sum_{i\in\b{S}_j}p_i^*(\@{t})$.

Combining cases 1 and 2, we always have $\sum_{i\in\b{S}_j}p_i^*(\tau_j,\@{t}_{-j})\leq\sum_{i\in\b{S}_j}p_i^*(\@{t})$, for any $\@{t}_{-j}\in\b{T}_{-j}$. Thus, $\int_{\b{T}_{-j}}\left[\sum_{i\in\b{S}_j}p_i^*(\tau_j,\@{t}_{-j})\right]f_{-j}(\@{t}_{-j})d\@{t}_{-j}\leq\int_{\b{T}_{-j}}\left[\sum_{i\in\b{S}_j}p_i^*(\@{t})\right]f_{-j}(\@{t}_{-j})d\@{t}_{-j}$. Hence, $\@{p}^*(\cdot)$ satisfies statement (i) in Lemma \ref{lem_constraints} and it is optimal for problem \eqref{prob_p}.
\end{proof}

\begin{rem}
The optimal cache space allocation mechanism $\@{p}^*(\cdot)$ can be explained as follows. Suppose all users report their true types $\@{t}$, as guaranteed by the IC constraint. If the SP allocates all cache space to content $i$, i.e., $p_i(\@{t})=0$, $p_l(\@{t})=0$, for any $l\neq i$, then the social welfare comprised of the SP's profit and all users' utilities is $\sum_{j\in\b{\Omega}_i}(\theta t_j-h(\theta))-r_i$. If we replace the true valuation $t_j$ with a \emph{virtual} valuation $c_j(t_j)$, then the virtual social welfare becomes $\sum_{j\in\b{\Omega}_i}(\theta c_j(t_j)-h(\theta))-r_i$. Thus, according to Proposition \ref{p_opt}, the optimal allocation mechanism $\@{p}^*(\@{t})$ is to allocate all the cache space to the content $i$ that yields the greatest virtual social welfare if this greatest virtual social welfare is positive. Otherwise, if the greatest virtual social welfare is negative (i.e., the virtual social welfare of every content is negative), then the SP will not cache any content. Therefore, the optimal allocation mechanism is to maximize the virtual social welfare (including zero in which case the SP caches nothing). Note that this assertion can only facilitate our comprehension of the mechanism. It cannot be directly implemented since the virtual social welfare is an ex-post quantity, i.e., it depends on the realizations of types $\@{t}$ which are private information of users. In fact, the reason that we use virtual valuation $c_j(t_j)$ instead of the true valuation $t_j$ is just to incentivize users to report their private types truthfully.
\end{rem}

Propositions \ref{x_opt} and \ref{p_opt} together specify the optimal auction mechanism $\langle\@{p}^*(\cdot),\@{x}^*(\cdot)\rangle$ for problem \eqref{opt_auc_fix_theta}, as summarized in the following theorem.

\begin{thm}\label{sol_fix_theta}
Suppose the optimal auction problem \eqref{opt_auc_fix_theta} is regular. The optimal solution $\langle\@{p}^*(\cdot),\@{x}^*(\cdot)\rangle$ to problem \eqref{opt_auc_fix_theta} is given as follows. For each fixed $\@{t}\in\b{T}$, define $k=\arg\max_{i=1,...,m}\left\{\sum_{j\in\b{\Omega}_i}[\theta c_j(t_j)-h(\theta)]-r_i\right\}$. Then, the optimal cache space allocation mechanism $\@{p}^*(\@{t})$ at this particular $\@{t}$ is prescribed by equation \eqref{p_sol}. Furthermore, the optimal payment mechanism is given by:
\begin{align}
x_j^*(\@{t})=\theta t_j\sum_{i\in\b{S}_j}p_i^*(\@{t})-\theta\int_{\u{a}_j}^{t_j}\sum_{i\in\b{S}_j}p_i^*(\tau_j,\@{t}_{-j})d\tau_j,\nonumber\\
~\forall j\in\{1,...,n\},~\forall \@{t}\in\b{T}.
\end{align}
\end{thm}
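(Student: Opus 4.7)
The plan is to recover Theorem \ref{sol_fix_theta} by stitching together Propositions \ref{x_opt} and \ref{p_opt} through the two-stage decomposition already set up in the text. Concretely, I would first invoke Lemmas \ref{lem_constraints} and \ref{lem_er} to rewrite the original problem \eqref{opt_auc_fix_theta} in the equivalent form \eqref{prob_px}, whose objective splits as $-\sum_{j=1}^{n}\w{v}_j(\u{a}_j)$ plus terms that depend only on $\@{p}(\cdot)$. The key structural observation is that the feasibility constraint FF and statement (i) of Lemma \ref{lem_constraints} constrain only $\@{p}(\cdot)$, while the payment rule $\@{x}(\cdot)$ enters solely through statements (ii) and (iii) of Lemma \ref{lem_constraints} and through the single term $-\sum_{j=1}^{n}\w{v}_j(\u{a}_j)$. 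This separability licenses a sequential optimization: fix an arbitrary $\@{p}(\cdot)$ satisfying FF and (i), optimize over $\@{x}(\cdot)$, and then optimize over $\@{p}(\cdot)$.

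The inner optimization is exactly problem \eqref{prob_x}, whose solution is supplied verbatim by Proposition \ref{x_opt}: the optimal value is $0$, achieved by the pointwise payment rule
\[
x_j^*(\@{t}) = \theta t_j \sum_{i\in\b{S}_j} p_i(\@{t}) - \theta \int_{\u{a}_j}^{t_j} \sum_{i\in\b{S}_j} p_i(\tau_j,\@{t}_{-j})\,d\tau_j,
\]
expressed in terms of whatever $\@{p}(\cdot)$ is fed in. Substituting this optimal value into the expression for $\texttt{ER}$ in \eqref{er_alt} eliminates the $\@{x}$-dependence and reduces the outer problem to problem \eqref{prob_p}. Under the regularity assumption, Proposition \ref{p_opt} then identifies the optimal cache allocation as the rule in \eqref{p_sol}: at each $\@{t}\in\b{T}$, put all mass on the content $k$ that maximizes the per-content virtual surplus $\sum_{j\in\b{\Omega}_i}[\theta c_j(t_j)-h(\theta)]-r_i$ whenever this maximum is strictly positive, and cache nothing otherwise. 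Plugging this $\@{p}^*(\cdot)$ back into the payment formula of Proposition \ref{x_opt} yields the claimed $\@{x}^*(\cdot)$.

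The only substantive concern, and the main bookkeeping obstacle, is to confirm that the resulting pair $\langle\@{p}^*,\@{x}^*\rangle$ is jointly feasible for problem \eqref{opt_auc_fix_theta}: statements (ii) and (iii) of Lemma \ref{lem_constraints} must continue to hold once we substitute the specific $\@{p}^*(\cdot)$ from Proposition \ref{p_opt} into the payment rule. This is immediate from the proof of Proposition \ref{x_opt}, because (ii) and (iii) were verified there for an arbitrary feasible $\@{p}(\cdot)$, and the proof of Proposition \ref{p_opt} showed that $\@{p}^*(\cdot)$ satisfies monotonicity condition (i), so it is indeed feasible in the relevant sense. Optimality then follows because any feasible mechanism cannot achieve an $\texttt{ER}$ larger than the sum of the optimal inner value ($0$ for the $\@{x}$ subproblem, by Proposition \ref{x_opt}) and the optimal outer value (attained by $\@{p}^*$, by Proposition \ref{p_opt}), and our construction achieves both. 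The theorem follows by assembling these pieces.
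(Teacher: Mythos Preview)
Your proposal is correct and matches the paper's approach exactly: the paper does not give a separate proof of Theorem \ref{sol_fix_theta} but simply states that Propositions \ref{x_opt} and \ref{p_opt} together specify the optimal mechanism, which is precisely the two-stage decomposition you spell out. Your additional care in verifying joint feasibility and optimality of the combined pair is a welcome elaboration of what the paper leaves implicit.
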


According to Theorem \ref{sol_fix_theta}, to compute the optimal payment $x_j^*(\@{t})$, we need to calculate the integral $\int_{\u{a}_j}^{t_j}\sum_{i\in\b{S}_j}p_i^*(\tau_j,\@{t}_{-j})d\tau_j$ for arbitrarily given $\@{t}$, which is hard to implement directly. Thanks to the special structure of the optimal allocation mechanism $\@{p}^*(\cdot)$, we can compute this integral (and thus the payment $x_j^*(\@{t})$) efficiently for any given $\@{t},j$, as stated in the following.

\begin{prop}\label{prop_payment}
Suppose the optimal auction problem \eqref{opt_auc_fix_theta} is regular. For each given $\@{t}\in\b{T}$ and $j\in\{1,...,n\}$, we define a number $\beta_j$ and a function $\phi_j:\b{T}_j\mapsto\b{R}$ as follows (their dependence on $\@{t}$ is suppressed):
\begin{align}
&\beta_j\delequal\max\left\{0,\max_{i\notin\b{S}_j}\left\{\sum_{s\in\b{\Omega}_i}[\theta c_s(t_s)-h(\theta)]-r_i\right\}\right\},\label{beta_def}\\
&\phi_j(\tau_j)\delequal\theta c_j(\tau_j)-h(\theta)\nonumber\\
&~~~~~~~~+\max_{i\in\b{S}_j}\left\{\sum_{s\in\b{\Omega}_i,s\neq j}[\theta c_s(t_s)-h(\theta)]-r_i\right\},~\forall\tau_j\in\b{T}_j.
\end{align}
Then, we have:
\begin{align}\label{integral}
\int_{\u{a}_j}^{t_j}\sum_{i\in\b{S}_j}p_i^*(\tau_j,\@{t}_{-j})d\tau_j=
\begin{cases}
t_j-\u{a}_j,~\text{if}~\phi_j(\u{a}_j)\geq\beta_j,\\
0,~~~~~~~\text{if}~\phi_j(t_j)<\beta_j,\\
t_j-\xi_j,~\text{if}~\phi_j(\u{a}_j)<\beta_j\leq\phi_j(t_j).
\end{cases}
\end{align}
In the last case of \eqref{integral}, we have defined:
\begin{align}
\xi_j\delequal &c_j^{-1}\Bigg(\frac{1}{\theta}\Bigg[\beta_j+h(\theta)-\max_{i\in\b{S}_j}\Bigg\{\sum_{s\in\b{\Omega}_i,s\neq j}[\theta c_s(t_s)-h(\theta)]\nonumber\\
&-r_i\Bigg\}\Bigg]\Bigg),\label{xi_def}
\end{align}
where $c_j^{-1}(\cdot)$ means the inverse function of $c_j(\cdot)$\footnote{A sufficient condition for the existence of the inverse function $c_j^{-1}(\cdot)$ is that $c_j(\cdot)$ is \emph{strictly} increasing and continuous. Otherwise, one can replace $c_j^{-1}(z_j)$ with $\inf\{t_j'\in\b{T}_j|c_j(t_j')\geq z_j\}$ since the regularity assumption has already guaranteed that $c_j(\cdot)$ is (weakly) increasing.}.
\end{prop}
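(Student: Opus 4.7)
The plan is to exploit the 0/1 structure of the optimal allocation established in Proposition \ref{p_opt}: for each fixed $(\tau_j,\mathbf{t}_{-j})$, exactly one content is cached (or none), so $\sum_{i\in\mathbf{S}_j}p_i^*(\tau_j,\mathbf{t}_{-j})$ is an indicator that equals $1$ precisely when the argmax $k$ in \eqref{p_sol} lies in $\mathbf{S}_j$ and the corresponding virtual welfare is positive. My goal is to identify this indicator as an increasing step function of $\tau_j$ and read off the integral directly.

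First I would decompose the maximization in \eqref{p_sol} by whether $i\in\mathbf{S}_j$ or not. For $i\in\mathbf{S}_j$, since $j\in\mathbf{\Omega}_i$, the virtual welfare equals $\theta c_j(\tau_j)-h(\theta)+\sum_{s\in\mathbf{\Omega}_i,s\neq j}[\theta c_s(t_s)-h(\theta)]-r_i$; maximizing over $i\in\mathbf{S}_j$ gives exactly $\phi_j(\tau_j)$. For $i\notin\mathbf{S}_j$, the welfare does not depend on $\tau_j$ at all (since $j\notin\mathbf{\Omega}_i$), and the SP also has the outside option of caching nothing (value $0$). The best among these $\tau_j$-independent alternatives is precisely $\beta_j$. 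Consequently, $\sum_{i\in\mathbf{S}_j}p_i^*(\tau_j,\mathbf{t}_{-j})=1$ iff $\phi_j(\tau_j)\geq\beta_j$ (tie breaking on the measure-zero event $\phi_j(\tau_j)=\beta_j$ is immaterial for the integral).

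Next I would use the regularity assumption: $c_j$ is increasing on $\mathbf{T}_j$, so $\phi_j(\tau_j)=\theta c_j(\tau_j)-h(\theta)+\text{(const in }\tau_j\text{)}$ is also increasing in $\tau_j$. Therefore the set $\{\tau_j\in[\underline{a}_j,t_j]:\phi_j(\tau_j)\geq\beta_j\}$ is an upper interval of $[\underline{a}_j,t_j]$, and three cases arise naturally from the position of $\beta_j$ relative to $\phi_j(\underline{a}_j)$ and $\phi_j(t_j)$. If $\phi_j(\underline{a}_j)\geq\beta_j$ the indicator is $1$ on all of $[\underline{a}_j,t_j]$ and the integral is $t_j-\underline{a}_j$; if $\phi_j(t_j)<\beta_j$ it is $0$ throughout and the integral is $0$. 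In the intermediate case $\phi_j(\underline{a}_j)<\beta_j\leq\phi_j(t_j)$, continuity/monotonicity gives a threshold $\xi_j\in[\underline{a}_j,t_j]$ at which $\phi_j(\xi_j)=\beta_j$; solving this equation for $c_j(\xi_j)$ and applying $c_j^{-1}$ yields formula \eqref{xi_def}, and the integral is $t_j-\xi_j$.

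The only delicate point is the invertibility of $c_j$ in the intermediate case: the paper's footnote already handles the general (weakly) increasing regime by replacing $c_j^{-1}(z_j)$ with $\inf\{t_j'\in\mathbf{T}_j:c_j(t_j')\geq z_j\}$, which is the correct threshold whenever $c_j$ plateaus, so no further argument is needed beyond invoking that convention. Apart from this, the proof is essentially a change-of-variables observation: the integrand is an indicator of an increasing-in-$\tau_j$ event, and $\phi_j,\beta_j$ are designed to encode that event in closed form.
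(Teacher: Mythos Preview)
Your proposal is correct and follows essentially the same route as the paper. The only cosmetic difference is that the paper first fixes the $\tau_j$-independent index $l=\arg\max_{i\in\mathbb{S}_j}\{\sum_{s\in\mathbf{\Omega}_i,s\neq j}[\theta c_s(t_s)-h(\theta)]-r_i\}$ and reduces the integrand to $p_l^*(\tau_j,\@{t}_{-j})$ before invoking the comparison $\phi_j(\tau_j)\gtrless\beta_j$, whereas you go straight to the indicator characterization $\sum_{i\in\mathbb{S}_j}p_i^*(\tau_j,\@{t}_{-j})=\mathbf{1}\{\phi_j(\tau_j)\geq\beta_j\}$; both then finish identically via monotonicity of $\phi_j$ and the three-case split.
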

\begin{proof}
Fix arbitrary $\@{t}\in\b{T}$ and $j\in\{1,...,n\}$. Define:
\begin{align}
l&\delequal\arg\max_{i\in\b{S}_j}\left\{\sum_{s\in\b{\Omega}_i}[\theta c_s(t_s)-h(\theta)]-r_i\right\}\\
&=\arg\max_{i\in\b{S}_j}\left\{\sum_{s\in\b{\Omega}_i,s\neq j}[\theta c_s(t_s)-h(\theta)]-r_i\right\},\label{l_def}
\end{align}
in which we suppress the dependence of $l$ on $\@{t}$ and $j$ to avoid cluttered notations. Suppose $l$ is the unique maxima for \eqref{l_def}, which happens with probability 1. Thus, for any $\tau_j\in\b{T}_j$, $i\in\b{S}_j,i\neq l$, we have:
\begin{align}
\phi_j(\tau_j)&=\theta c_j(\tau_j)-h(\theta)+\sum_{s\in\b{\Omega}_l,s\neq j}[\theta c_s(t_s)-h(\theta)]-r_l\label{p3_1}\\
&>\theta c_j(\tau_j)-h(\theta)+\sum_{s\in\b{\Omega}_i,s\neq j}[\theta c_s(t_s)-h(\theta)]-r_i.\label{p3_2}
\end{align}
So, from the construction of $\@{p}^*(\cdot)$ in Proposition \ref{p_opt}, we know that $p_i^*(\tau_j,\@{t}_{-j})=0$. Hence, we have:
\begin{align}
\int_{\u{a}_j}^{t_j}\sum_{i\in\b{S}_j}p_i^*(\tau_j,\@{t}_{-j})d\tau_j=\int_{\u{a}_j}^{t_j}p_l^*(\tau_j,\@{t}_{-j})d\tau_j.\label{p3_3}
\end{align}
Note that $\phi_j(\cdot)$ is an increasing function on $\b{T}_j$. We thus distinguish three cases.

\emph{Case 1: $\phi_j(\u{a}_j)\geq\beta_j$.} In such a case, $\phi_j(\tau_j)\geq\beta_j$, for any $\tau_j\in[\u{a}_j,t_j]$. From the definition of $\beta_j$ in \eqref{beta_def}, equations \eqref{p3_1}, \eqref{p3_2} and the construction rule of $\@{p}^*(\cdot)$, one can easily see that $p_l^*(\tau_j,\@{t}_{-j})=1$, for any $\tau_j\in[\u{a}_j,t_j]$. Thus, $\int_{\u{a}_j}^{t_j}p_l^*(\tau_j,\@{t}_{-j})d\tau_j=t_j-\u{a}_j$.

\emph{Case 2: $\phi_j(t_j)<\beta_j$.} In such a case, we have $\phi_j(\tau_j)<\beta_j$, for any $\tau_j\in[\u{a}_j,t_j]$. Analogously, we can assert that $p_l^*(\tau_j,\@{t}_{-j})=0$, for any $\tau_j\in[\u{a}_j,t_j]$. So, $\int_{\u{a}_j}^{t_j}p_l^*(\tau_j,\@{t}_{-j})d\tau_j=0$.

\emph{Case 3: $\phi_j(\u{a}_j)<\beta_j\leq\phi_j(t_j)$.} In such a case, from the definition of $\xi_j$ in \eqref{xi_def}, we know that it is the unique solution of the equation $\phi_j(\xi_j)=\beta_j$ over the interval $[\u{a}_j,t_j]$. Hence, for $\tau_j\in[\u{a}_j,\xi_j)$, we have $\phi_j(\tau_j)<\beta_j$ and thus $p_l^*(\tau_j,\@{t}_{-j})=0$. On the other hand, for $\tau_j\in[\xi_j,t_j]$, we have $\phi_j(\tau_j)\geq\beta_j$ and thus $p_l^*(\tau_j,\@{t}_{-j})=1$. Combining these two situations, we derive $\int_{\u{a}_j}^{t_j}p_l^*(\tau_j,\@{t}_{-j})d\tau_j=t_j-\xi_j$.

Combining the results in the three cases and noting relation \eqref{p3_3}, we conclude the proposition.
\end{proof}

We note that the R.H.S. of \eqref{integral} can be calculated easily for arbitrarily given $\@{t}\in\b{T}$ and $j\in\{1,...,n\}$.  Therefore, Proposition \ref{prop_payment} provides us a simple way of computing the optimal payment $x_j^*(\@{t})$. Moreover, making use of Proposition \ref{prop_payment}, we can show the following intuitively reasonable proposition of the optimal mechanism $\langle\@{p}^*(\cdot),\@{x}^*(\cdot)\rangle$.

\begin{prop}\label{property}
Suppose the optimal auction problem \eqref{opt_auc_fix_theta} is regular. For any $\@{t}\in\b{T}$ and any $j\in\{1,...,n\}$, if $\sum_{i\in\b{S}_j}p_i^*(\@{t})=0$, then $x_j^*(\@{t})=0$.
\end{prop}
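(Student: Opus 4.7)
The plan is to reduce the claim, via Theorem \ref{sol_fix_theta}, to showing that $\sum_{i \in \b{S}_j} p_i^*(\tau_j, \@{t}_{-j}) = 0$ for every $\tau_j \in [\u{a}_j, t_j]$. Concretely, Theorem \ref{sol_fix_theta} writes
\[
x_j^*(\@{t}) = \theta t_j \sum_{i \in \b{S}_j} p_i^*(\@{t}) - \theta \int_{\u{a}_j}^{t_j} \sum_{i \in \b{S}_j} p_i^*(\tau_j, \@{t}_{-j}) \, d\tau_j,
\]
and the hypothesis $\sum_{i \in \b{S}_j} p_i^*(\@{t}) = 0$ eliminates the first term, so the claim reduces to the pointwise vanishing of the integrand.

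To establish this vanishing, I would exploit the monotonicity in $t_j$ of the scores $\sigma_i(\@{t}) \delequal \sum_{s \in \b{\Omega}_i}[\theta c_s(t_s) - h(\theta)] - r_i$ that drive the construction of $\@{p}^*$ in Proposition \ref{p_opt}. Since $c_j$ is non-decreasing by the regularity assumption, for any $\tau_j \leq t_j$ we have $\sigma_i(\tau_j, \@{t}_{-j}) \leq \sigma_i(\@{t})$ whenever $i \in \b{S}_j$, while $\sigma_i(\tau_j, \@{t}_{-j}) = \sigma_i(\@{t})$ whenever $i \notin \b{S}_j$ (because in that case $j \notin \b{\Omega}_i$, so $t_j$ does not enter $\sigma_i$).

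Now I would split the hypothesis $\sum_{i \in \b{S}_j} p_i^*(\@{t}) = 0$ into two sub-cases using the explicit form of $\@{p}^*$ in \eqref{p_sol}. In sub-case (a) no content is cached at $\@{t}$ and every $\sigma_i(\@{t}) \leq 0$; combined with the monotonicity above this yields $\sigma_i(\tau_j, \@{t}_{-j}) \leq 0$ for all $i$, so no content is cached at $(\tau_j, \@{t}_{-j})$ either. In sub-case (b) the unique winner $k$ at $\@{t}$ satisfies $k \notin \b{S}_j$ with $\sigma_k(\@{t}) > 0$; since $\sigma_k$ is preserved under lowering $t_j$ while every other $\sigma_i$ is weakly reduced or preserved, $k$ remains the unique winner at $(\tau_j, \@{t}_{-j})$, giving $p_k^*(\tau_j, \@{t}_{-j}) = 1$ with $k \notin \b{S}_j$. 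In either case $\sum_{i \in \b{S}_j} p_i^*(\tau_j, \@{t}_{-j}) = 0$, so the integral vanishes and $x_j^*(\@{t}) = 0$. I do not anticipate any real obstacle; the only point worth care is the uniqueness of the argmax $k$, which—as in the proof of Proposition \ref{prop_payment}—holds with probability one under continuously distributed types, covering all cases of interest.
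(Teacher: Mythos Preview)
Your argument is correct. It differs from the paper's proof in its route, though the underlying idea---that lowering $t_j$ cannot make any $i\in\b{S}_j$ the winner if it was not already---is the same.

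The paper does not argue directly on the integrand. Instead it invokes the machinery of Proposition~\ref{prop_payment}: it shows, by contradiction, that the hypothesis $\sum_{i\in\b{S}_j}p_i^*(\@{t})=0$ forces $\phi_j(t_j)<\beta_j$, and then reads off $\int_{\u{a}_j}^{t_j}\sum_{i\in\b{S}_j}p_i^*(\tau_j,\@{t}_{-j})\,d\tau_j=0$ from the second case of \eqref{integral}. Your approach bypasses Proposition~\ref{prop_payment} entirely and works straight from the construction \eqref{p_sol}, using the monotonicity of the scores $\sigma_i$ in $t_j$ and a clean two-case split (nothing cached vs.\ winner outside $\b{S}_j$). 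The payoff of your route is that it is self-contained and conceptually transparent---it is essentially the monotonicity step already proved inside Proposition~\ref{p_opt}, reused here. The paper's route is shorter on the page because the bookkeeping has been absorbed into Proposition~\ref{prop_payment}, but it requires that earlier result. Your handling of the argmax-uniqueness caveat matches the paper's own treatment (probability one under continuous types), so there is no gap there.
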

\begin{proof}
Suppose $\sum_{i\in\b{S}_j}p_i^*(\@{t})=0$. We will show by contradiction that $\phi_j(t_j)<\beta_j$. Otherwise, if $\phi_j(t_j)\geq\beta_j$, then we have 
\begin{align}
\theta c_j(t_j)-h(\theta)+\sum_{s\in\b{\Omega}_l,s\neq j}[\theta c_s(t_s)-h(\theta)]-r_l\geq\beta_j,\label{property_2}
\end{align}
where we define
$$l\delequal\arg\max_{i\in\b{S}_j}\left\{\sum_{s\in\b{\Omega}_i,s\neq j}[\theta c_s(t_s)-h(\theta)]-r_i\right\}.$$
From the definition of $\beta_j$ in \eqref{beta_def}, we have for any $i\notin\b{S}_j$:
\begin{align}
\sum_{s\in\b{\Omega}_l}[\theta c_s(t_s)-h(\theta)]-r_l\geq\sum_{s\in\b{\Omega}_i}[\theta c_s(t_s)-h(\theta)]-r_i.\label{property_1}
\end{align}
From the definition of $l$, we can see that \eqref{property_1} holds for $i\in\b{S}_j$ as well. Hence, we have $l=\arg\max_{i=1,...,m}\left\{\sum_{s\in\b{\Omega}_i}[\theta c_s(t_s)-h(\theta)]-r_i\right\}$. From \eqref{property_2} and $\beta_j\geq0$, we further know that $\max_{i=1,...,m}\left\{\sum_{s\in\b{\Omega}_i}[\theta c_s(t_s)-h(\theta)]-r_i\right\}\geq0$. According to the construction of $\@{p}^*(\@{t})$, we thus have $p_l^*(\@{t})=1$ and $p_i^*(\@{t})=0$ for any $i\neq l$. Noting that $l\in\b{S}_j$, we get $\sum_{i\in\b{S}_j}p_i^*(\@{t})=1$, which is a contradiction. So, we must have $\phi_j(t_j)<\beta_j$, which implies $\int_{\u{a}_j}^{t_j}\sum_{i\in\b{S}_j}p_i^*(\tau_j,\@{t}_{-j})d\tau_j=0$ according to Proposition \ref{prop_payment}. Hence, according to the construction of $\@{x}^*(\cdot)$ in Theorem \ref{sol_fix_theta}, we obtain $x_j^*(\@{t})=0$.
\end{proof}
\begin{rem}
Proposition \ref{property} asserts that, in the optimal mechanism, a user does not pay anything if no content of her interest is cached by the SP, as expected. In such a case, her ex-post utility is also zero, i.e., $u(\@{t},\@{t})=0$ (c.f. Equation \eqref{u}). Furthermore, we note that for each $\@{t}\in\b{T}$, $\@{p}^*(\@{t})$ is either $\@{0}$ or $\@{e}_k$ for some $k\in\{1,...,m\}$, where $\@{e}_k=[0,...,0,1,0,...,0]^\s{T}$ (the sole 1 takes place at the $k$-th entry). Thus, for the condition in Proposition \ref{property} to hold, we distinguish the following two cases for each $\@{t}\in\b{T}$. If there exists $k$ such that $\@{p}^*(\@{t})=\@{e}_k$, then $\sum_{i\in\b{S}_j}p_i^*(\@{t})=0$, for any $j\notin\b{\Omega}_k$. If $\@{p}^*(\@{t})=\@{0}$, then obviously $\sum_{i\in\b{S}_j}p_i^*(\@{t})=0$ for any $j=1,...,n$. The assertion in Proposition \ref{property} will be confirmed empirically through numerical experiments in Section \Rmnum{4}.
\end{rem}

\subsection{Optimial Determination of Content Delivery Quality $\theta$}

The optimization problem \eqref{opt_auc_fix_theta} is for fixed delivery quality $\theta>0$. In this section, we let $\theta$ vary and determine the optimal $\theta$ to further maximize the ER, i.e., solving the optimization problem \eqref{opt_auc}. By using the optimal auction mechanism $\langle\@{p}^*(\cdot),\@{x}^*(\cdot)\rangle$ specified in Theorem \ref{sol_fix_theta}, the corresponding optimal value of the ER becomes a function of $\theta$ as follows:
\begin{align}
&\texttt{ER}^*(\theta)\nonumber\\
&=\int_\b{T}\left\{\sum_{i=1}^m\left[\sum_{j\in\b{\Omega}_i}(\theta c_j(t_j)-h(\theta))-r_i\right]p_i^*(\@{t})\right\}f(\@{t})d\@{t},\label{er_theta}
\end{align}
which is difficult to evaluate in general based on the construction of $\@{p}^*(\cdot)$ in Proposition \ref{p_opt}. For analysis tractability, we next focus on the case meeting the following assumptions.
\begin{assump}\label{assump_large_user}
The number of users $n$ is large.
\end{assump}
\begin{assump}\label{assump_iid}
The type distributions of all users are the same, i.e., $f_j(\cdot)=f(\cdot)$, $F_j(\cdot)=F(\cdot)$, $\u{a}_j=\u{a}>0$, $\o{a}_j=\o{a}$ for any $j\in\{1,...,n\}$. Moreover, this common distribution satisfies the regularity condition in Definition \ref{regularity}.
\end{assump}
\begin{assump}\label{assump_formation}
For each content $i\in\{1,...,m\}$, each user $j\in\{1,...,n\}$ is included into $\b{\Omega}_i$, i.e., user $j$ is interested in content $i$, with probability $q_i\in[0,1]$ independently.
\end{assump}
\begin{assump}\label{assump_cost_func}
The cost function $h(\cdot)$ satisfies the following properties. $h(0)=0$; $h$ is convex; $h'(0)=0$; and $\lim_{\theta\rightarrow+\infty}h'(\theta)=\infty$.
\end{assump}

Assumption \ref{assump_large_user} is reasonable since the user density has increased dramastically in the recent decade with the advancement of mobile networks and devices. Assumption \ref{assump_iid} is a homogeneity hypothesis commonly used in the analysis of large-scale systems. Assumption \ref{assump_formation} is widely adopted in the popularity modeling of content centric networks, where $q_i$ characterizes the popularity of content $i$, e.g., the Zipf distribution of content popularity \cite{golrezaei2014scaling,ji2015throughput,jeon2017wireless}. Assumption \ref{assump_cost_func} collects common properties of cost functions in resource allocation \cite{georgiadis2006resource}. Under these assumptions, the optimal $\theta$ for problem \eqref{opt_auc}, or equivalently the $\theta$ that maximizes $\texttt{ER}^*(\theta)$ in \eqref{er_theta}, can be computed as in the following theorem.

\begin{thm}\label{thm_opt_theta}
Suppose that Assumptions \ref{assump_large_user}-\ref{assump_cost_func} hold. Then, the optimal $\theta$ for problem \eqref{opt_auc}, maximizing $\texttt{ER}^*(\theta)$ in \eqref{er_theta}, is given by $\theta^*=(h')^{-1}(\u{a})$, where $(h')^{-1}$ means the inverse function of $h'$, the derivative of $h(\cdot)$.
\end{thm}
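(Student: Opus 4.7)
The plan rests on a short calculation combined with a law-of-large-numbers asymptotic. The key calculation is that, under Assumption \ref{assump_iid}, the expected virtual valuation equals the lower endpoint of the support: $\mathbb{E}[c(t)]=\underline{a}$. Indeed, integration by parts gives
$$\int_{\u{a}}^{\o{a}}(1-F(t))\,dt=-\u{a}+\mathbb{E}[t],$$
so that $\mathbb{E}[c(t)]=\mathbb{E}[t]-\int_{\u{a}}^{\o{a}}(1-F(t))\,dt=\u{a}$. This is the reason $\underline{a}$ appears in the answer.

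Next, I would invoke the law of large numbers under Assumptions \ref{assump_large_user} and \ref{assump_formation}. Since each user joins $\b{\Omega}_i$ independently with probability $q_i$ and the $t_j$'s are i.i.d., we have $|\b{\Omega}_i|/n\to q_i$ a.s.\ and, conditional on $\b{\Omega}_i$, $\tfrac{1}{|\b{\Omega}_i|}\sum_{j\in\b{\Omega}_i}c(t_j)\to\u{a}$ a.s. Hence, for every content $i$ with $q_i>0$,
$$\frac{1}{n}\Bigl\{\sum_{j\in\b{\Omega}_i}\bigl[\theta c(t_j)-h(\theta)\bigr]-r_i\Bigr\}\longrightarrow q_i\bigl[\theta\u{a}-h(\theta)\bigr]\quad\text{a.s.},$$
since $r_i$ is fixed while the sum scales linearly in $n$. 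Now invoke Proposition \ref{p_opt}: $p^*_k(\@{t})=1$ at the $k$ that maximizes the bracketed expression, provided that maximum is positive, and $p^*_i\equiv 0$ otherwise. In the large-$n$ regime the winner becomes $k^*=\arg\max_i q_i$ whenever $\theta\u{a}-h(\theta)>0$, and no content is cached whenever $\theta\u{a}-h(\theta)<0$. Plugging back into \eqref{er_theta} yields the leading-order formula
$$\texttt{ER}^*(\theta)\;\sim\;n\cdot\Bigl(\max_{i}q_i\Bigr)\cdot\bigl[\theta\u{a}-h(\theta)\bigr]^+,$$
where $[x]^+=\max\{0,x\}$.

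The final step is to maximize $g(\theta):=\theta\u{a}-h(\theta)$ over $\theta>0$. By Assumption \ref{assump_cost_func}, $h'$ is continuous and strictly increasing from $h'(0)=0$ to $+\infty$, so the equation $h'(\theta)=\u{a}$ has a unique solution $\theta^*=(h')^{-1}(\u{a})>0$. Convexity of $h$ makes $g$ concave, so this stationary point is a global maximum. Moreover $g(\theta^*)>0$, because $g(0)=0$ and $g'(0)=\u{a}>0$, which confirms that the optimum indeed falls in the regime where caching takes place and the asymptotic formula above is nonzero. Combining these observations identifies $\theta^*=(h')^{-1}(\u{a})$ as the maximizer of $\texttt{ER}^*(\theta)$.

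The main obstacle is the rigor of the LLN step. Because the allocation rule $\@{p}^*(\cdot)$ is discontinuous at the threshold $\theta\u{a}-h(\theta)=0$ and at ties among the $q_i$'s, care is needed to show the almost-sure limit above carries through the integral in \eqref{er_theta} (a dominated-convergence argument on $\tfrac{1}{n}\texttt{ER}^*(\theta)$) and that the maximizer of the limit agrees with the limit of the maximizers (uniform convergence over a suitable compact interval of $\theta$, using convexity of $h$ and the coercivity $h'(\theta)\to\infty$). Once these standard technicalities are handled, the closed-form $\theta^*=(h')^{-1}(\u{a})$ follows from the one-variable concave optimization of $g$.
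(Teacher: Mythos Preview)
Your proposal is correct and follows essentially the same route as the paper: compute $\mathbb{E}[c(t)]=\u{a}$ by integration by parts, apply the SLLN to obtain $\tfrac{1}{n}\bigl\{\sum_{j\in\b{\Omega}_i}[\theta c(t_j)-h(\theta)]-r_i\bigr\}\to q_i[\theta\u{a}-h(\theta)]$, identify the winning content as $\arg\max_i q_i$, and then maximize the concave function $\theta\u{a}-h(\theta)$ via its first-order condition. The only difference is that you explicitly flag the rigor issues (dominated convergence, interchange of limit and $\arg\max$) that the paper itself leaves at the heuristic ``$\approx$'' level.
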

\begin{proof}
From Assumption \ref{assump_iid}, we note that $c_j(\cdot)=c(\cdot)$, $\forall j=1,...,n$. For any $\@{t}\in\b{T}$ and any $i\in\{1,...,m\}$, we have:
\begin{align}
&\frac{1}{n}\left\{\sum_{j\in\b{\Omega}_i}[\theta c(t_j)-h(\theta)]-r_i\right\}\nonumber\\
&=\frac{|\b{\Omega}_i|}{n}\left[\frac{\theta}{|\b{\Omega}_i|}\left(\sum_{j\in\b{\Omega}_i}c(t_j)\right)-h(\theta)\right]-\frac{r_i}{n}.\label{p4_1}
\end{align}
By the strong law of large numbers (SLLN) \cite{gallager2012discrete}, we know that $\frac{|\b{\Omega}_i|}{n}=\frac{1}{n}\sum_{j=1}^n\b{1}(j\in\b{\Omega}_i)\xrightarrow{\text{a.s.}}\b{E}[\b{1}(1\in\b{\Omega}_i)]=q_i$, where $\b{1}(\cdot)$ is the indicator function. Moreover, by SLLN, we have $\frac{1}{|\b{\Omega}_i|}\sum_{j\in\b{\Omega}_i}c(t_j)\xrightarrow{\text{a.s.}}\b{E}[c(t_1)]=\int_{\b{T}_1}\left[t_1-\frac{1-F(t_1)}{f(t_1)}\right]f(t_1)dt_1=\u{a}$, where we use integration by parts in the last step. Substituting these limits into \eqref{p4_1}, we obtain $\frac{1}{n}\left\{\sum_{j\in\b{\Omega}_i}[\theta c(t_j)-h(\theta)]-r_i\right\}\xrightarrow{\text{a.s.}} q_i[\theta\u{a}-h(\theta)]$. Denote $k=\arg\max_{i=1,...,m}q_i$. According to the construction of $\@{p}^*(\cdot)$ in Proposition \ref{p_opt}, we distinguish two cases. If $\theta\u{a}-h(\theta)>0$, then $p_k^*(\@{t})=1$, $p_i^*(\@{t})=0,\forall i\neq k$, $\forall\@{t}\in\b{T}$. In such a case, for any $\@{t}\in\b{T}$, $\frac{1}{n}\sum_{i=1}^m\left\{\sum_{j\in\b{\Omega}_i}[\theta c_j(t_j)-h(\theta)]-r_i\right\}p_i^*(\@{t})=\frac{1}{n}\left\{\sum_{j\in\b{\Omega}_k}[\theta c_j(t_j)-h(\theta)]-r_k\right\}\xrightarrow{\text{a.s.}}q_k[\theta\u{a}-h(\theta)]$ and $\texttt{ER}^*(\theta)\approx nq_k[\theta\u{a}-h(\theta)]$. Otherwise, if $\theta\u{a}-h(\theta)\leq0$, then $p_i^*(\@{t})=0$, $\forall i=1,..,m,\forall\@{t}\in\b{T}$ and $\texttt{ER}^*(\theta)=0$. According to Assumption \ref{assump_cost_func}, we know that $[\theta\u{a}-h(\theta)]\big|_{\theta=0}=0$ and $\frac{d}{d\theta}[\theta\u{a}-h(\theta)]\big|_{\theta=0}=\u{a}>0$. Thus, $\theta\u{a}-h(\theta)>0$ for small enough $\theta>0$. As such, the optimal $\theta^*$ that maximizes $\texttt{ER}^*(\theta)$ is $\theta^*=\arg\max_{\theta>0}\{\theta\u{a}-h(\theta)\}=(h')^{-1}(\u{a})$, where we make use of the convexity of $h(\cdot)$ in Assumption \ref{assump_cost_func}.
\end{proof}

\section{Numerical Results}

In this section, extensive numerical experiments are carried out to evaluate the performance of the proposed optimal auction mechanism for  content caching. In particular, the impact of various model parameters is studied empirically to get engineering insights into the content caching problem. All results involving expectations are average over $10^4$ independent trials.

\begin{figure*}
\renewcommand\figurename{\small Fig.}
\centering \vspace*{8pt} \setlength{\baselineskip}{10pt}
\subfigure[Uniform distribution: expected results]{
\includegraphics[scale = 0.21]{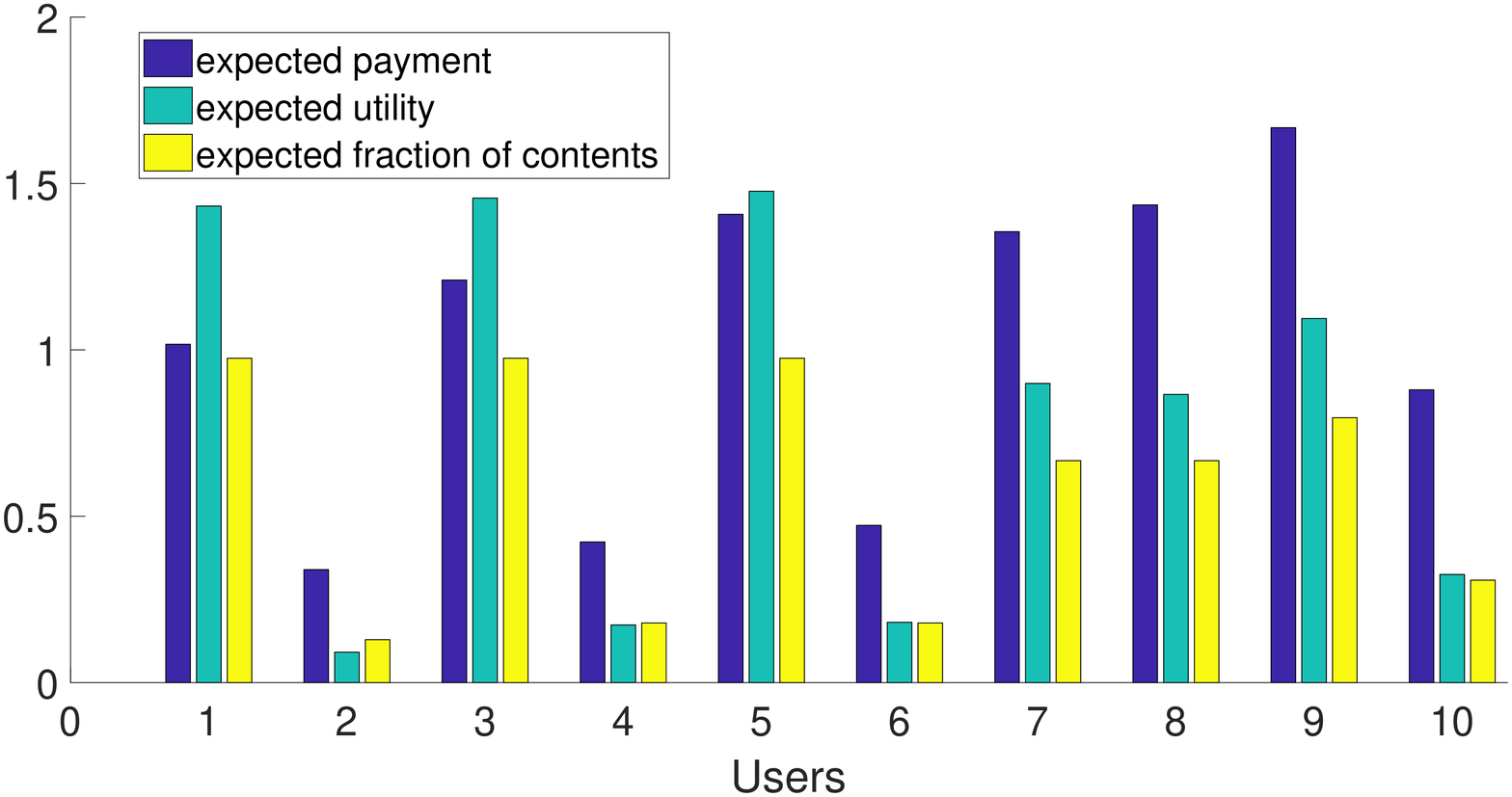}}
\subfigure[Exponential distribution: expected results]{
\includegraphics[scale = 0.21]{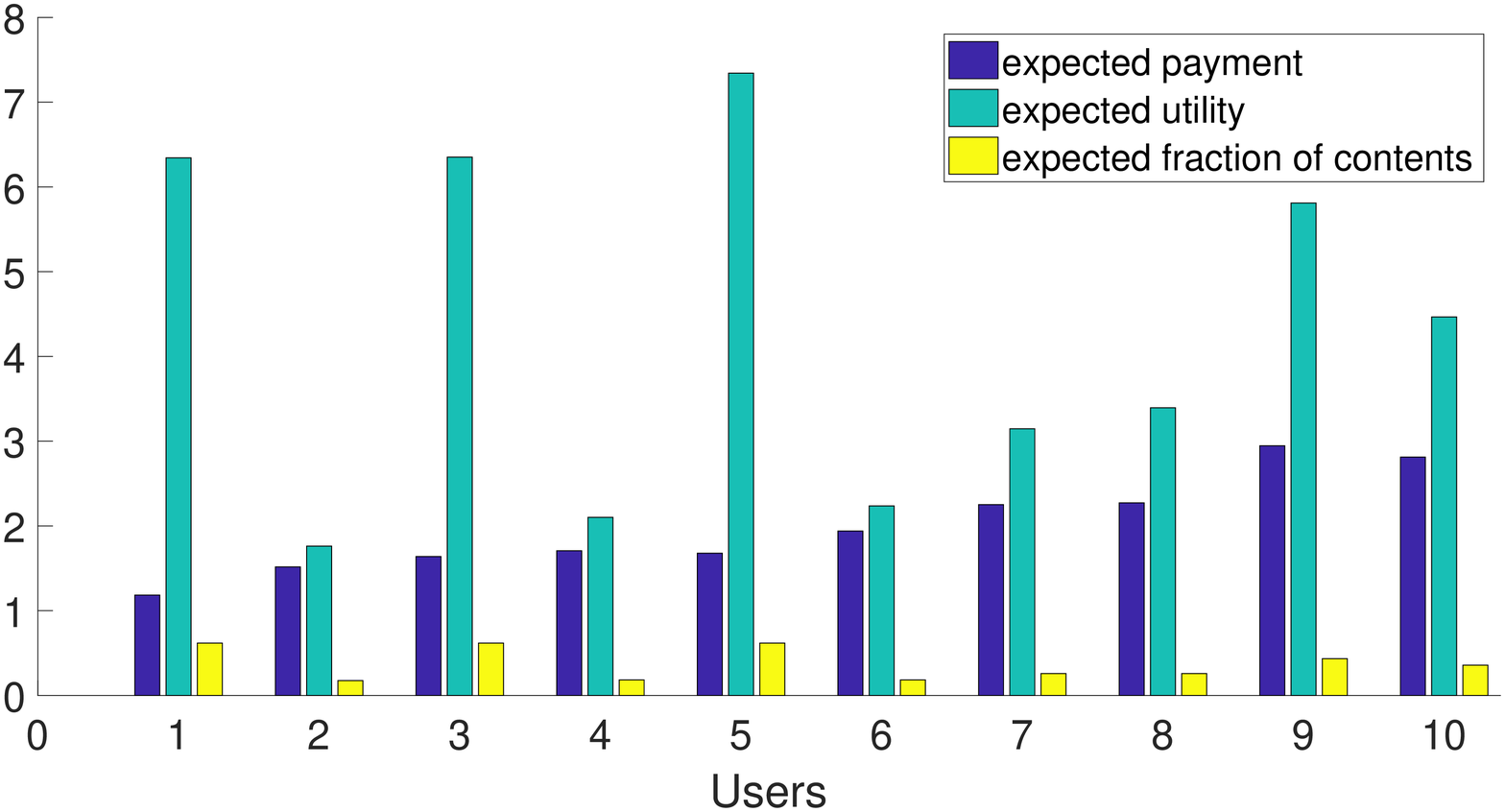}}
\subfigure[Uniform distribution: one realization]{
\includegraphics[scale = 0.21]{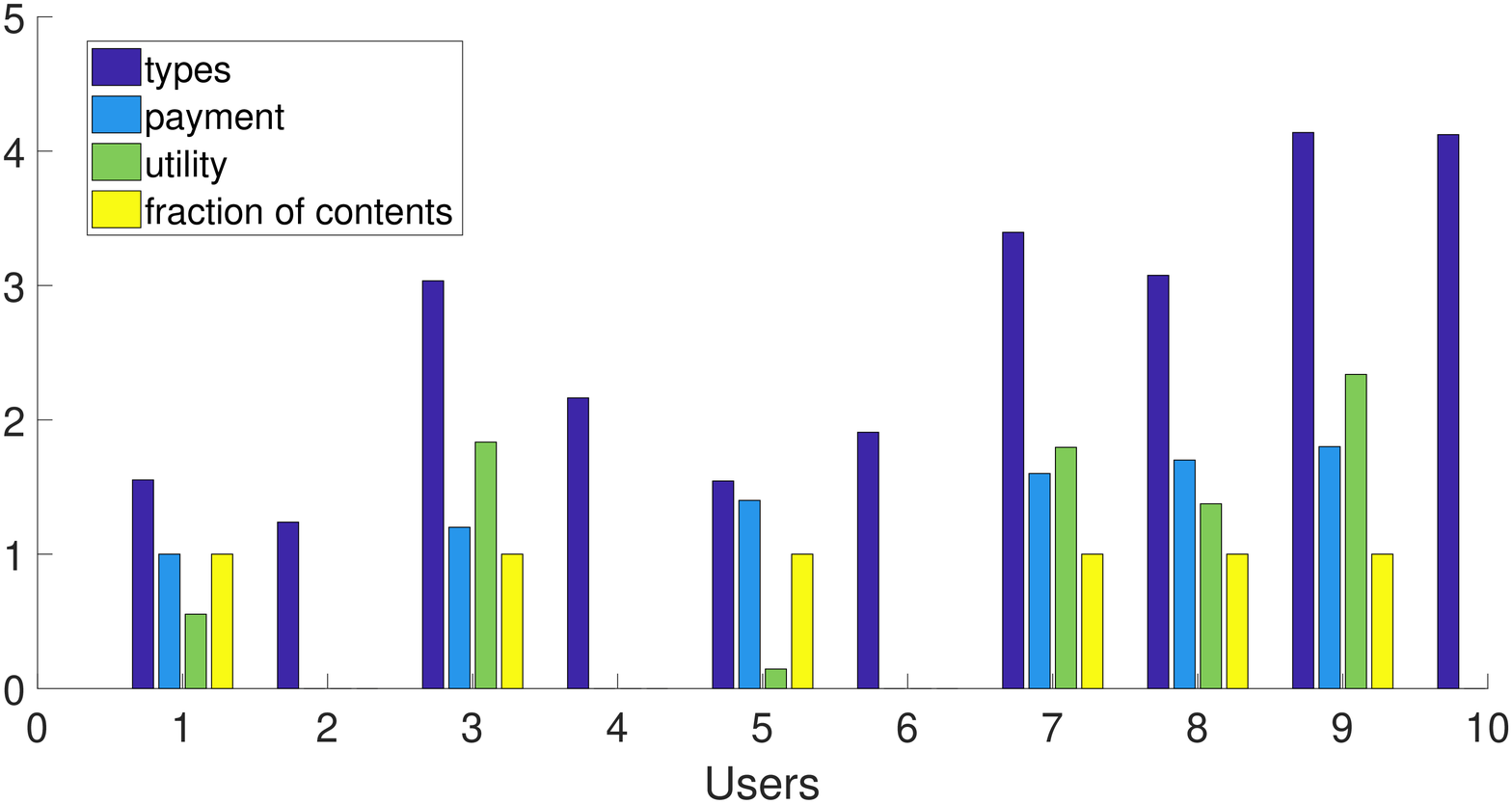}}
\subfigure[Exponential distribution: one realization]{
\includegraphics[scale = 0.21]{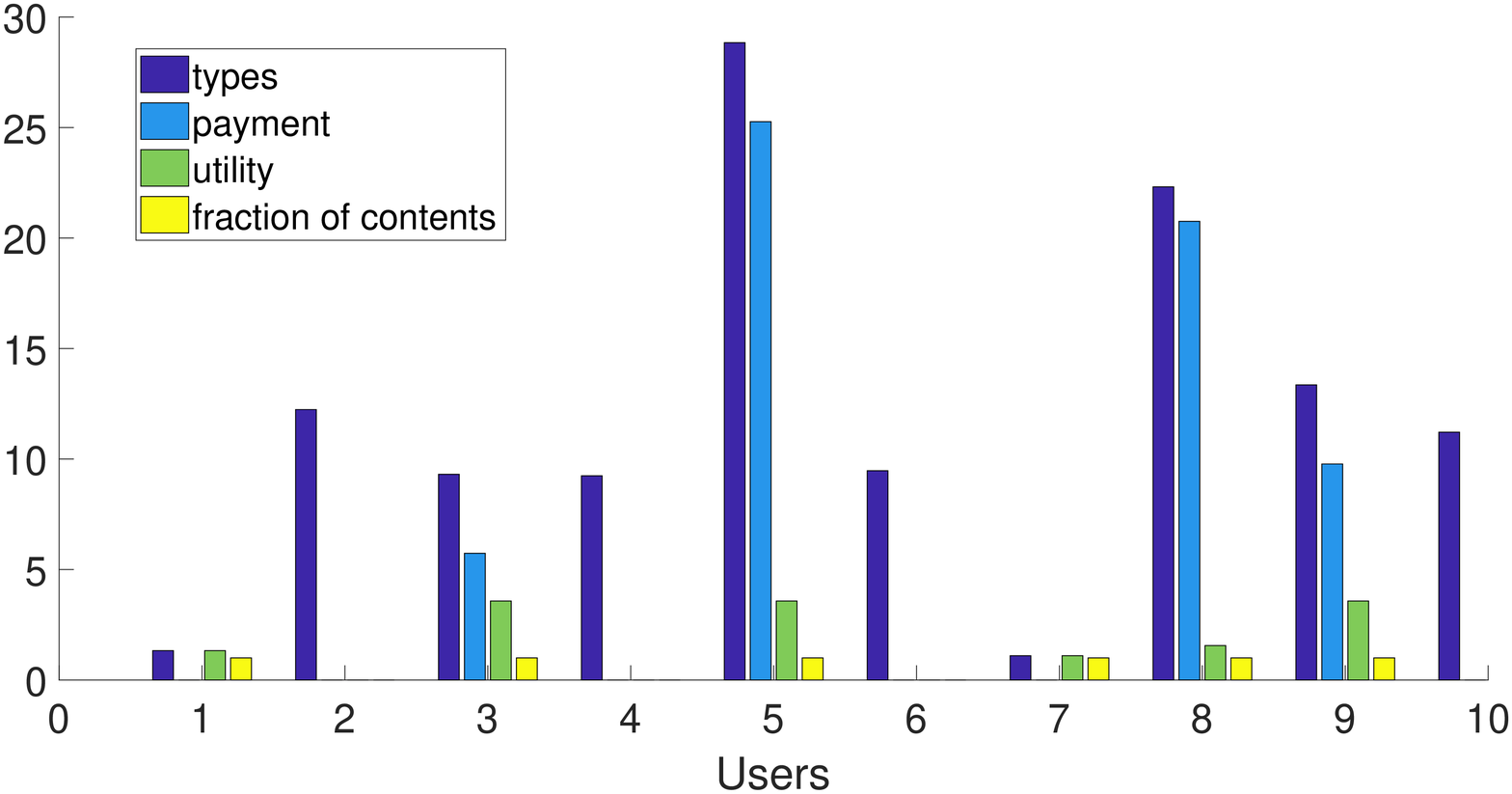}}
\caption{The payment, utility and fraction of contents of each user}
\label{fig_user}
\end{figure*}

Consider a content caching system (c.f. Fig. \ref{model}) with $m=3$ CPs and $n=10$ users. Each content interests 6 randomly chosen users as follows: $\b{\Omega}_1=\{1,3,4,5,6,10\}$, $\b{\Omega}_2=\{1,3,5,7,8,9\}$ and $\b{\Omega}_3=\{1,2,3,5,9,10\}$. The content delivery cost function is $h(\theta)=\alpha\theta^2$, where $\alpha=0.1$ for now. We set the content delivery quality to be $\theta=1$ temporarily. The acquisition costs of the 3 contents are set to be $r_1=4.2036$, $r_2=1.2714$, $r_3=4.0714$, which are chosen randomly. In the following simulations, we consider two user type distributions, namely uniform distribution and exponential distribution, both of which satisfy the regularity condition in Definition \ref{regularity}. For uniformly distributed user types, we set the the lower and upper bounds of user type supports to be $\u{a}_j=1+0.1(j-1)$ and $\o{a}_j=4+0.1(j-1)$, $j=1,...,n$, respectively. In addition, for exponentially distributed user types, i.e., $f_j(t_j)=\lambda_je^{-\lambda_jt_j}$ for $t_j\geq0$, we set the distribution parameters to be $\lambda_j=\frac{1}{10+0.4(j-1)}$, $j=1,...,n$, so that the expected types of users form an arithmetic progression. For either uniformly distributed types or exponentially distributed types, the expected types of users increase with user index.

We first study the expected payment $\b{E}[x_j^*(\@{t})]$, the expected utility $\b{E}[u_j(\@{t},\@{t})]=\b{E}\left[\left(\sum_{i\in\b{S}_j}p_i^*(\@{t})\right)\theta t_j\right]-\b{E}[x_j^*(\@{t})]$, and the expected fraction of contents $\b{E}\left[\sum_{i\in\b{S}_j}p_i^*(\@{t})\right]$ of each user $j$. The results for uniformly distributed user types and exponentially distributed user types are shown in Fig. \ref{fig_user}-(a) and Fig. \ref{fig_user}-(b), respectively. From Fig. \ref{fig_user}-(a), we observe that users with more expected fraction of contents generally tend to have higher expected utility and payment. Furthermore, among users with similar expected fraction of contents (e.g., users 1, 3, 5), those with higher expected types pay more, i.e., the SP extracts more profits from those users with high valuations than those with low valuations. These phenomena highlight the fairness of the proposed optimal auction mechanism. Analogous observations can be drawn from Fig. \ref{fig_user}-(b) for exponentially distributed user types. In addition, the expected cache space allocations $\b{E}[\@{p}^*(\@{t})]$ are $[0.17,0.657,0.146]^\s{T}$ and $[0.183,0.259,0.175]^\s{T}$ for uniformly distributed types and exponentially distributed types, respectively. We remark that the expected idle cache space $\b{E}[1-\@{1}^\s{T}\@{p}^*(\@{t})]$ at the SP for exponentially distributed types is greater than that for the uniformly distributed types. The reason is that the exponential distribution often realizes low types (the PDF is a decreasing function), which may lead to the decision of no caching at the SP. Besides, we note that the expected caching space allocated to content 2 is the largest among the three contents, in accordance with the large valuations of users in $\b{\Omega}_2$ (the user indices in $\b{\Omega}_2$ is the largest among the three contents). Moreover, we study the payment, utility and fraction of contents for users in one typical realization of types $\@{t}$. The results are shown in Fig. \ref{fig_user}-(c) and Fig. \ref{fig_user}-(d) for uniformly distributed types and exponentially distributed types, respectively. In either case, the cache space allocation is $\@{p}^*(\@{t})=[0,1,0]^\s{T}$, i.e., the entire cache space is allocated to content 2. We observe that payments are nonzero only for those users whose fraction of contents are positive, i.e., those in $\b{\Omega}_2$ in this case, as guaranteed by Proposition \ref{property}. Additionally, for users in $\b{\Omega}_2$ with similar type realizations (e.g., users 3 and 8 in Fig. \ref{fig_user}-(c)), those with higher \emph{expected} types pay more. The reason is that, to incentivize users with high expected types to report truthfully (as requested by the IC constraint), the SP will charge them relatively high payments even if their reports are low. This is to deter these users from reporting low types while possessing high true types, which are very probable since their expected types are high.

\begin{figure*}
\renewcommand\figurename{\small Fig.}
\centering \vspace*{8pt} \setlength{\baselineskip}{10pt}
\subfigure[ER of SP versus expected type for uniformly distributed types]{
\includegraphics[scale = 0.14]{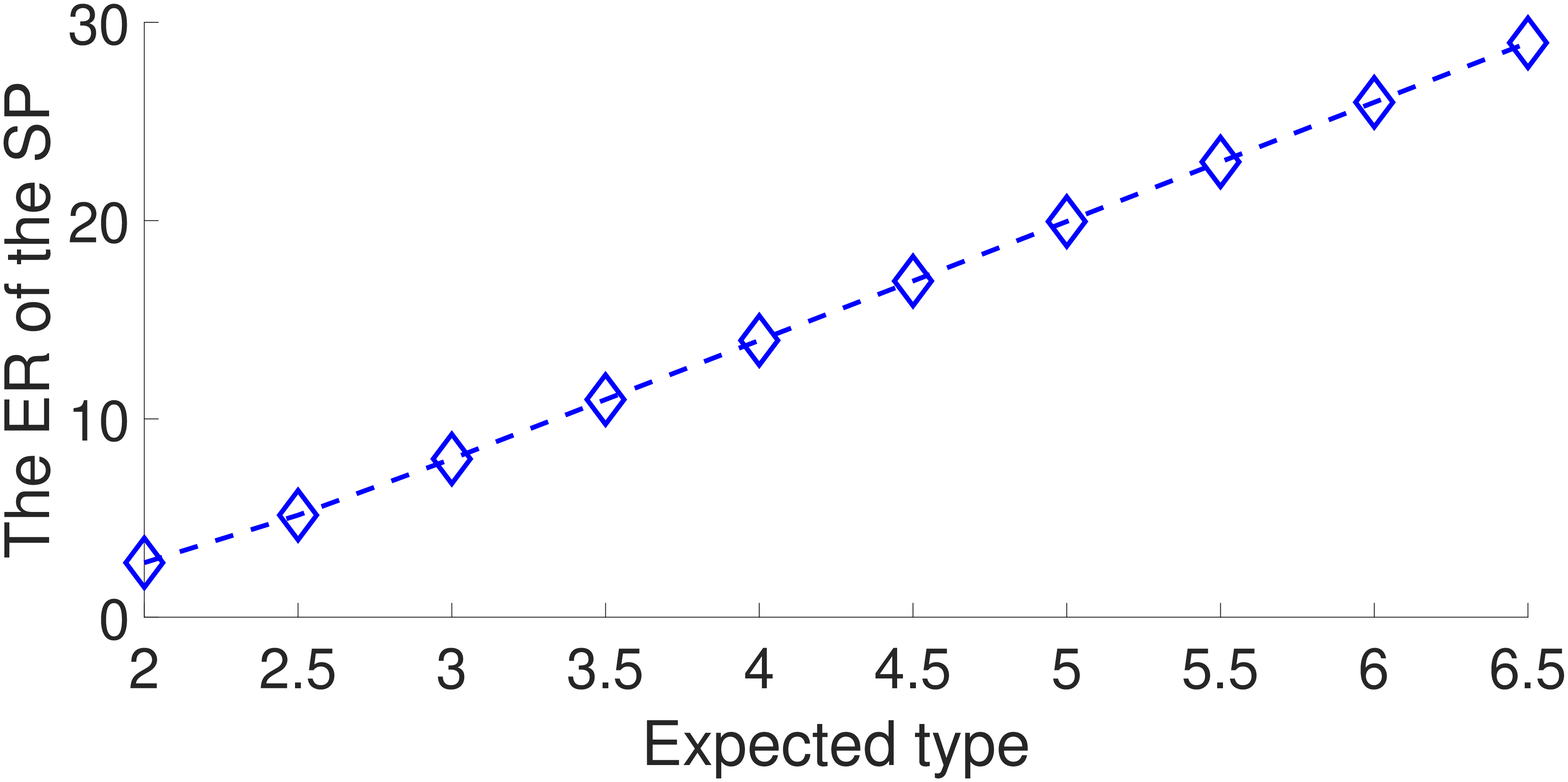}}
\subfigure[ER of SP versus length of support for uniformly distributed types]{
\includegraphics[scale = 0.14]{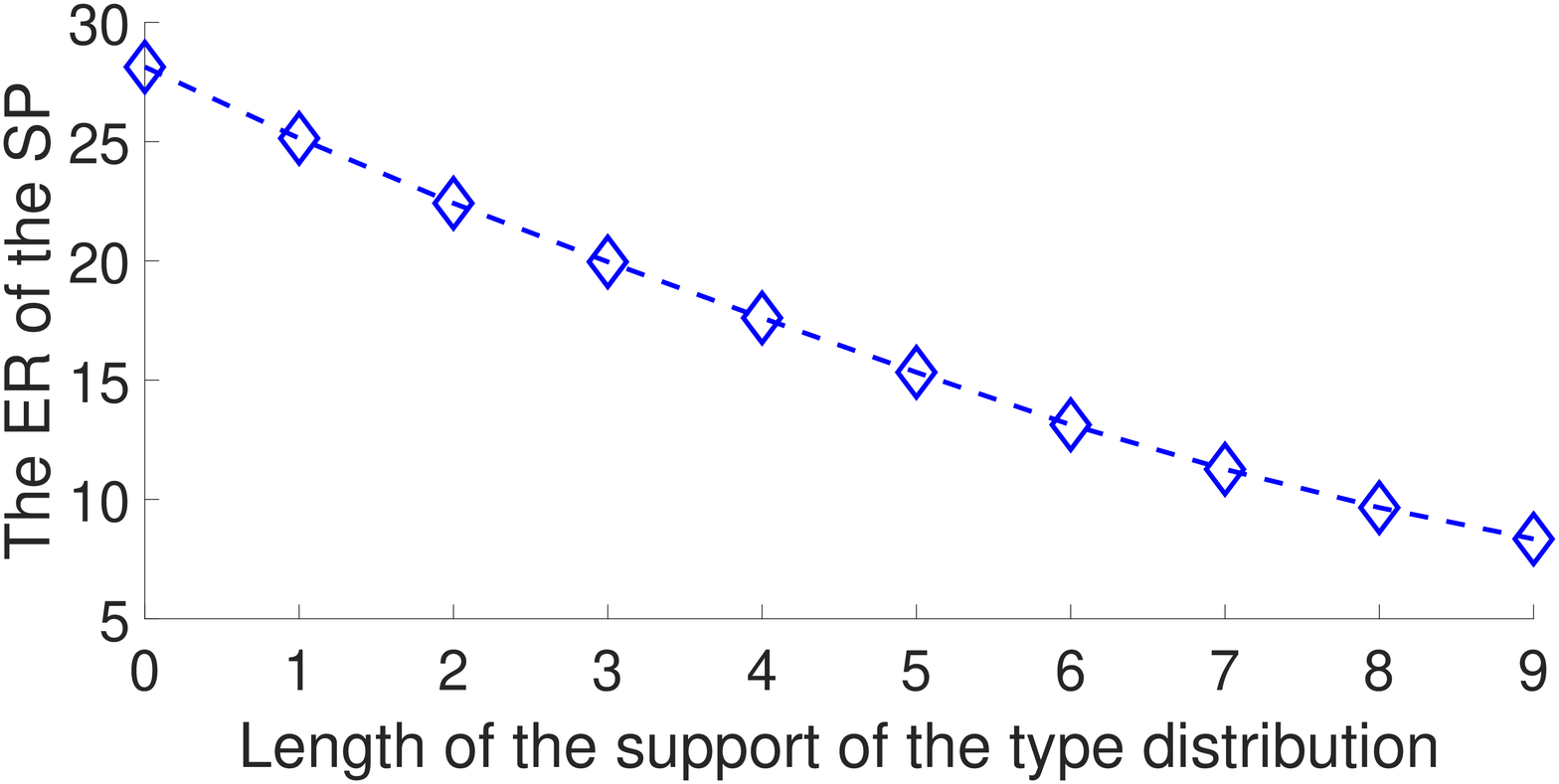}}
\subfigure[ER of SP versus $\lambda$ for exponentially distributed types]{
\includegraphics[scale = 0.14]{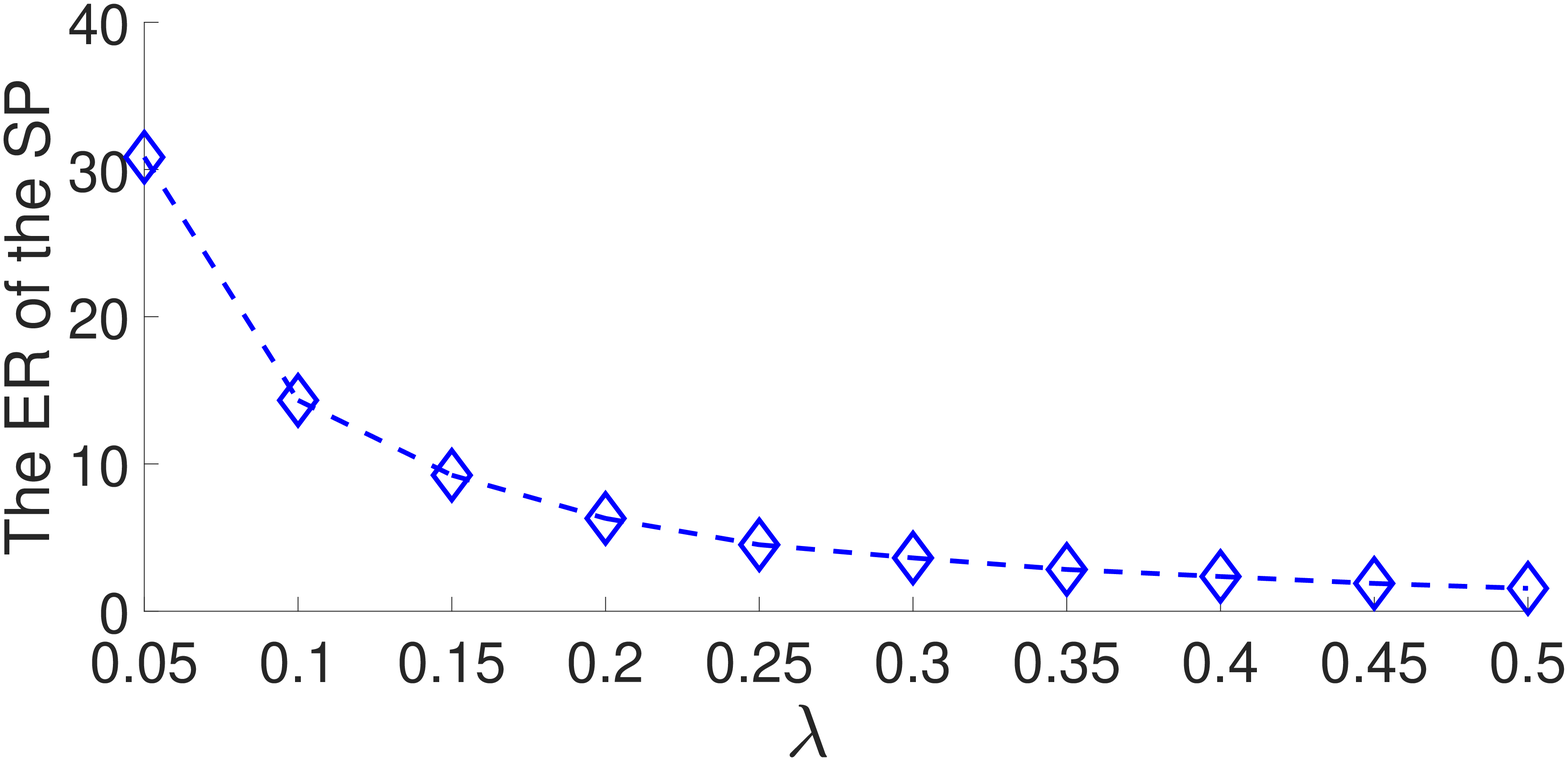}}
\subfigure[Expected average utility of users versus expected type for uniformly distributed types]{
\includegraphics[scale = 0.14]{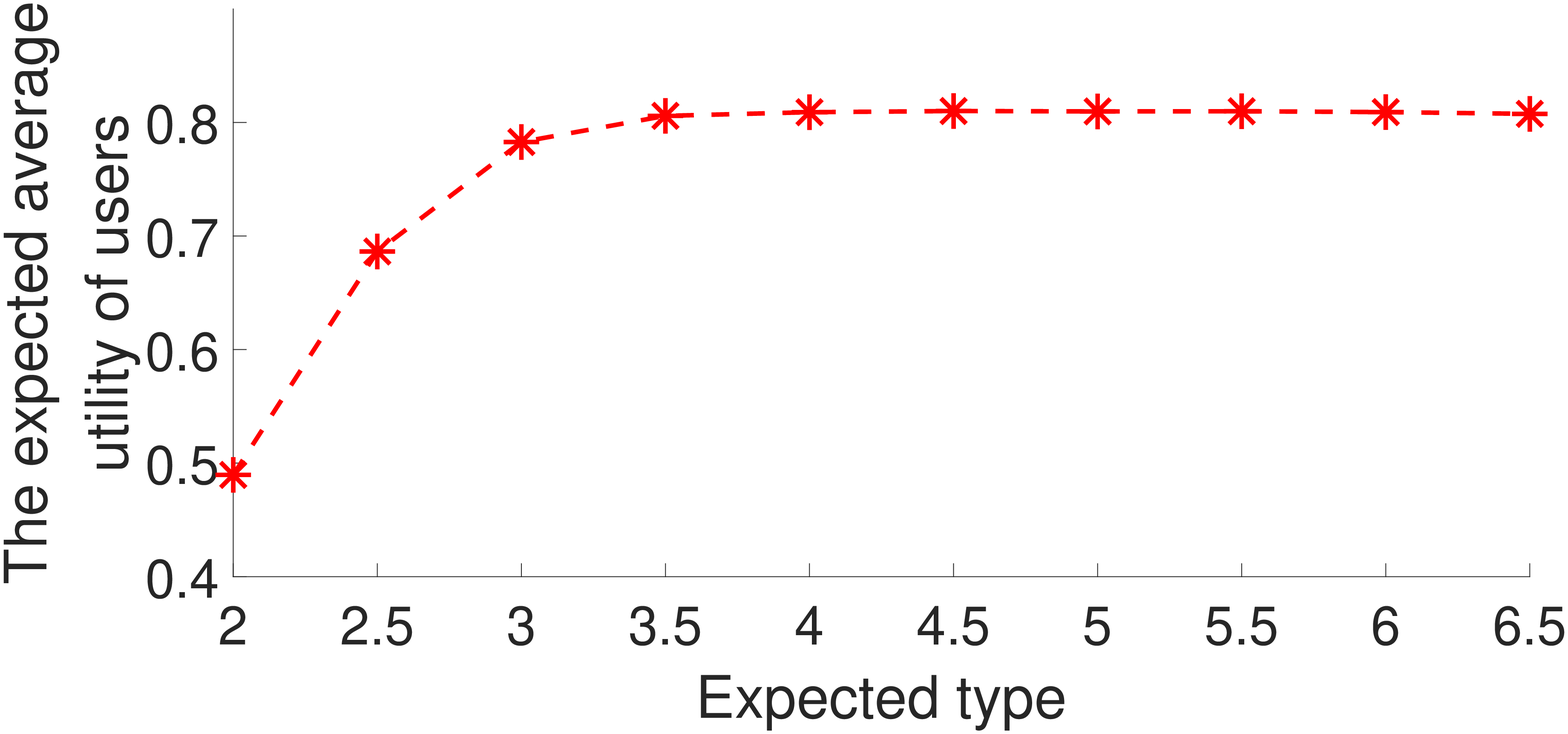}}
\subfigure[Expected average utility of users versus length of support for uniformly distributed types]{
\includegraphics[scale = 0.14]{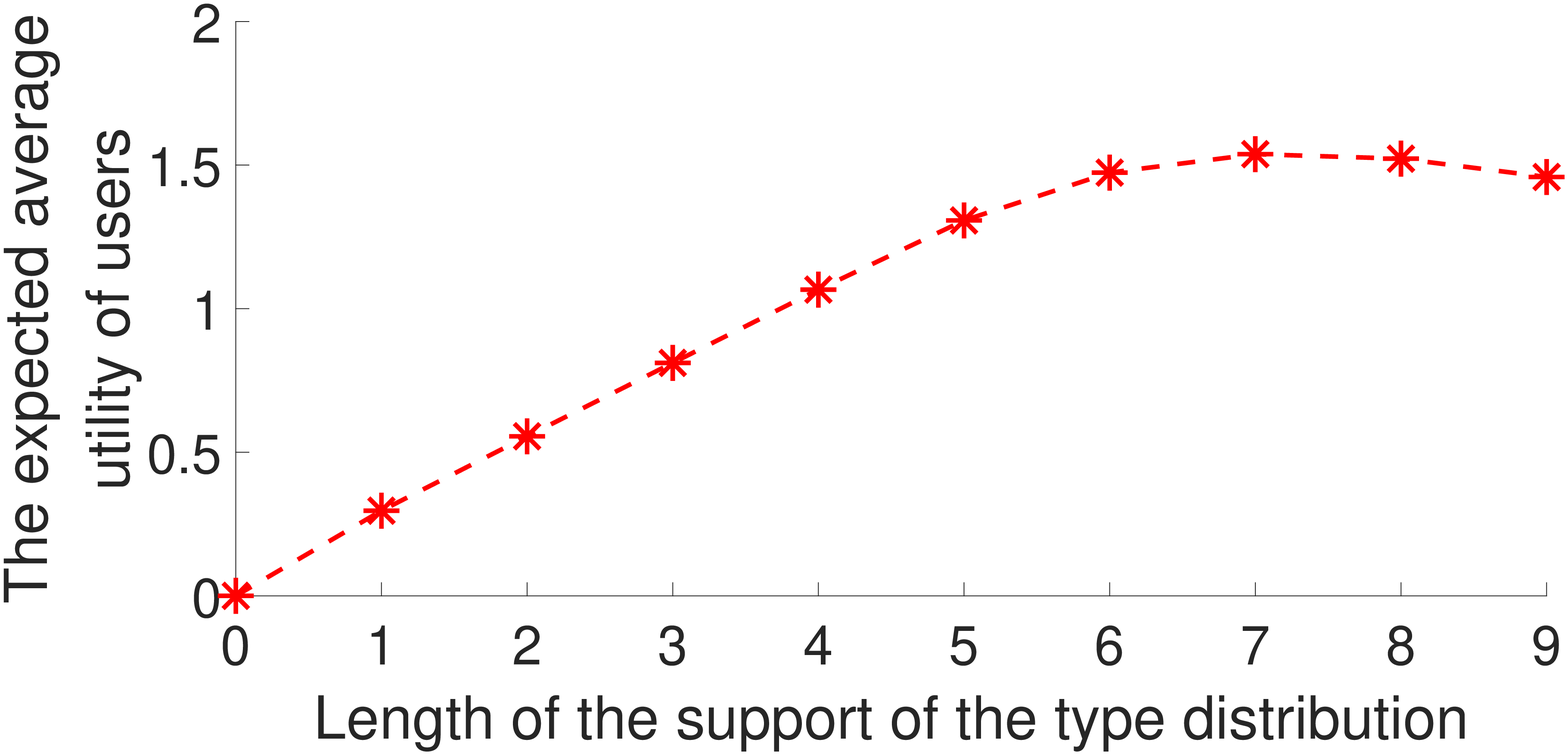}}
\subfigure[Expected average utility of users versus $\lambda$ for exponentially distributed types]{
\includegraphics[scale = 0.14]{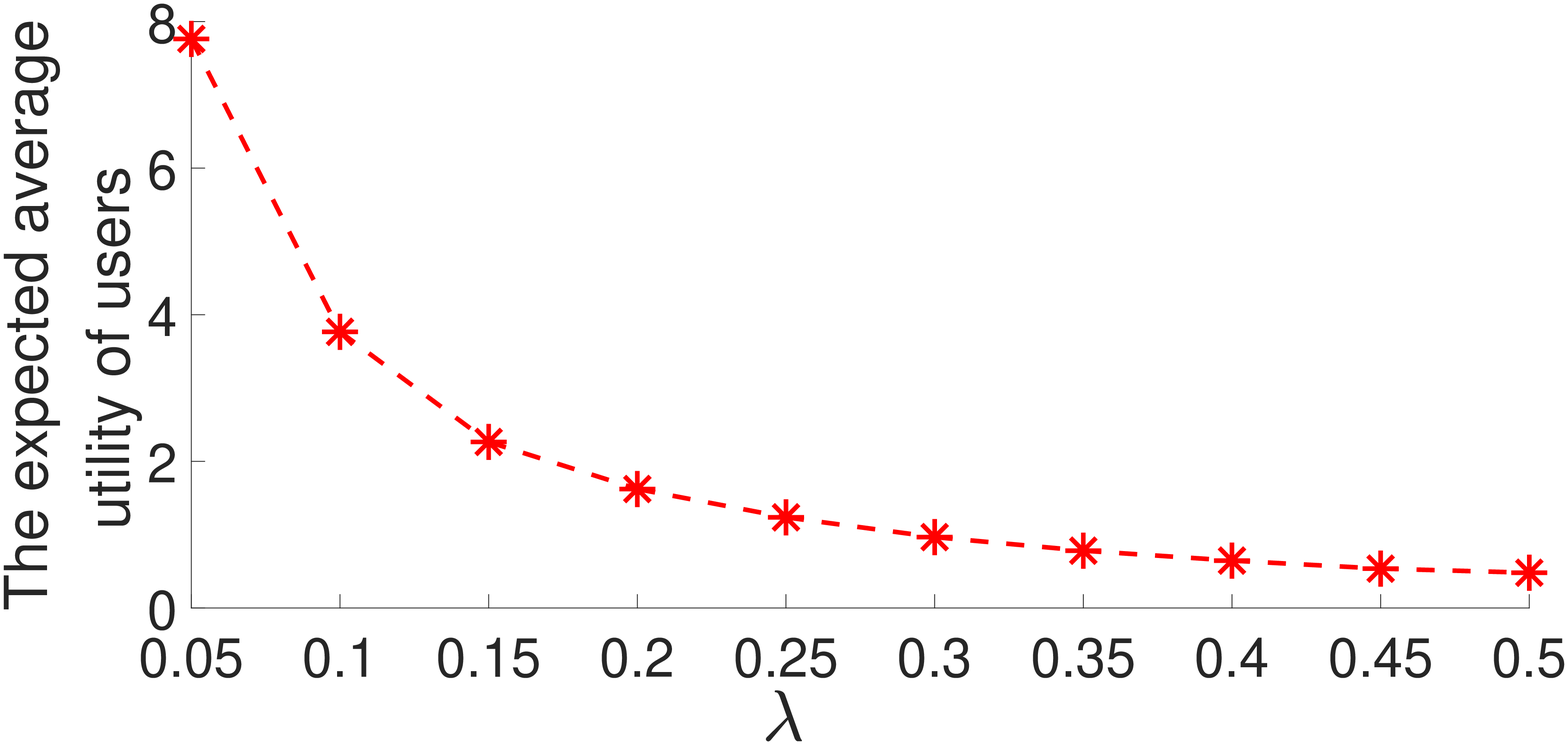}}
\caption{Impact of type distributions on the ER of the SP and the expected average utility of users}
\label{ErU}
\end{figure*}

Next, we investigate the impact of type distributions on the expected revenue (ER) of the SP and the expected average utility of users. Suppose user types are i.i.d. uniformly distributed with $\u{a}_j=\u{a}$, $\o{a}_j=\o{a}$ for any $j$. First, we remain the length of the support of the type distributions, i.e., $\o{a}-\u{a}$, to be fixed at 3 and let the expected type, i.e., $\frac{1}{2}(\o{a}+\u{a})$, vary from 2 to 6.5. In other words, we set $(\u{a},\o{a})$ to be $(0.5,3.5),(1,4),(1.5,4.5),...,(5,8)$. In such a way, the level of uncertainty of types remains unchanged while the expected types increase. The ER of the SP and the expected average utility of users are shown in Fig. \ref{ErU}-(a) and Fig. \ref{ErU}-(d), respectively. We observe that the ER of the SP increases with the expected valuation of users. This is reasonable because users with higher valuations tend to pay more to the SP. In addition, as the expected valuation of users increases, the expected average utility of users first increases and then saturates. This suggests that, when users' valuations are high enough, the SP may reap all the excess valuations from users in the optimal auction mechanism, whose goal is to maximize the SP's ER. Secondly, we keep the expected type of users $\frac{1}{2}(\o{a}+\u{a})$ fixed at 5 and let the length of the support $\o{a}-\u{a}$ vary from 0 to 9. In such a case, the level of uncertainty in user types increases while the expected types remain unaltered. The corresponding ER of the SP and the expected average utility of users are shown in Fig. \ref{ErU}-(b) and Fig. \ref{ErU}-(e), respectively. Unsurprisingly, the ER of the SP decreases with the length of the support of type distributions, since the uncertainty of user valuations hinders the SP's profit extraction. Moreover, as the length of support increases, the expected average utility of users first increases because the uncertainty in users' private valuations prevents the SP from charging too much payments. In particular, the expected average utility of users vanish when the length of the support is zero. In such a case, user types become deterministic so that the SP can reap all possible profits from the users. Interestingly, the expected average utility of users decreases slightly after the length of the support is large enough (c.f. Fig. \ref{ErU}-(e) when the length of support is larger than 7). The reason may be that, when user types are very uncertain, the SP cannot extract enough profit to compensate the content delivery and acquisition costs so that he often chooses not to cache anything. This in turn hurts the expected utility of users. Thirdly, we consider i.i.d. exponentially distributed user types with distribution parameters $\lambda_j=\lambda$ for all $j$. We let $\lambda$ increase so that the expected types ($1/\lambda$) decrease. The corresponding ER of the SP and the expected average utility of users are shown in Fig. \ref{ErU}-(c) and Fig. \ref{ErU}-(f), respectively. We observe that, as the expected user valuation decreases (i.e., $\lambda$ increases), both the ER of the SP and the expected average utility of users decrease. This is reasonable since high valuations of users benefit both the SP and the users themselves.

\begin{figure}
\renewcommand\figurename{\small Fig.}
\centering \vspace*{8pt} \setlength{\baselineskip}{10pt}
\subfigure[Impact of $\theta$ on the ER of the SP]{
\includegraphics[scale = 0.17]{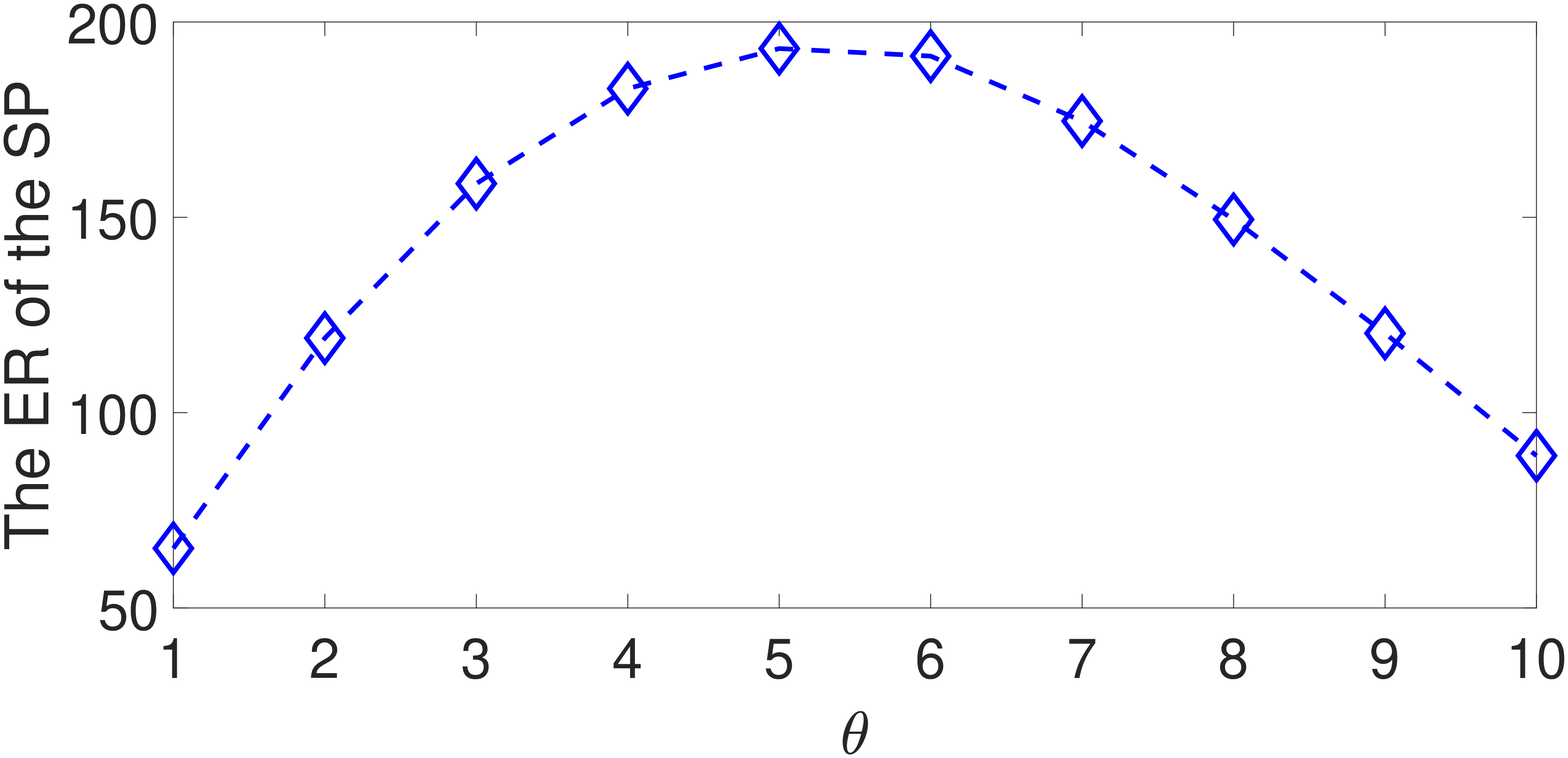}}
\subfigure[Impact of $\theta$ on the expected average utility of users]{
\includegraphics[scale = 0.17]{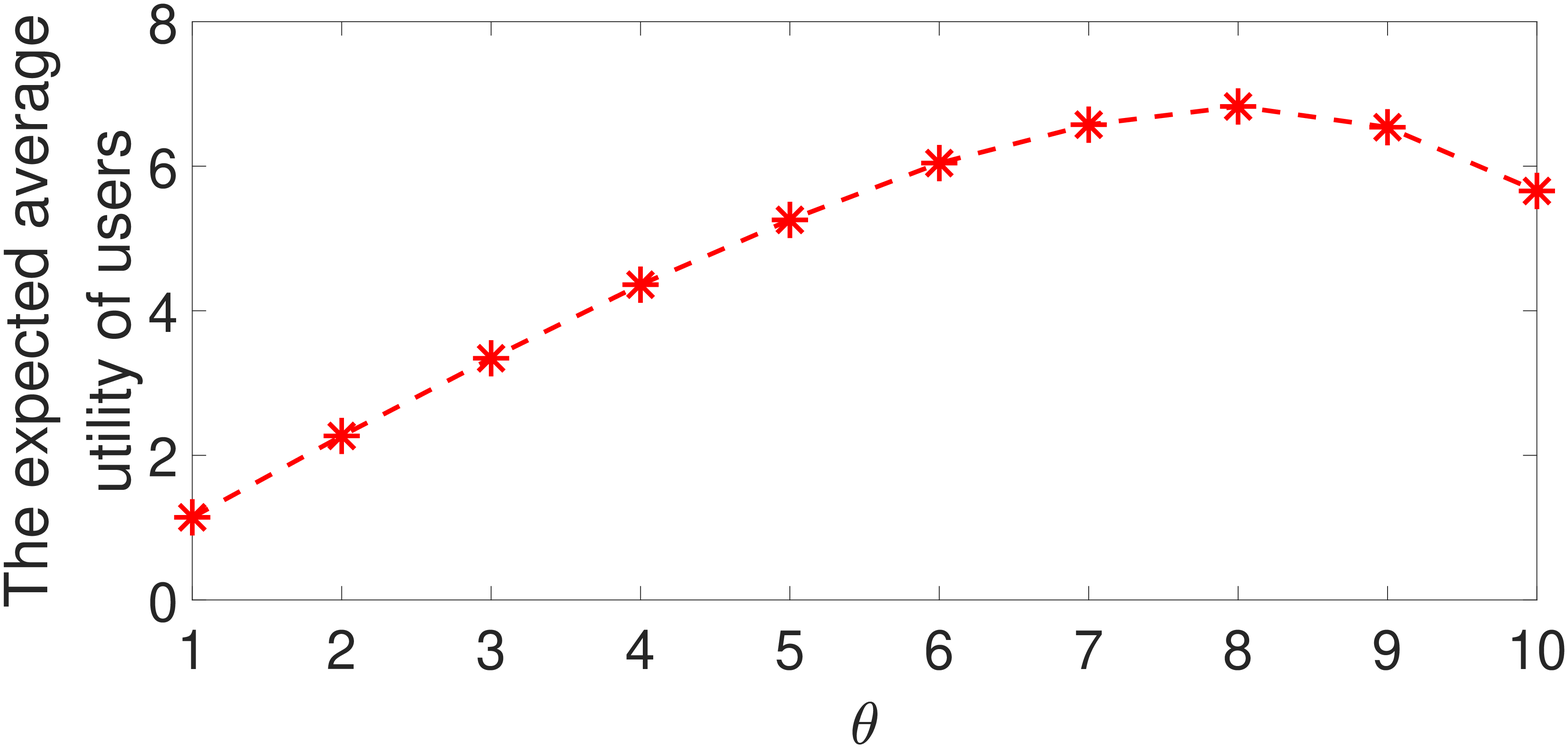}}
\caption{Impact of $\theta$ on the ER of the SP and the expected average utility of users}
\label{theta}
\end{figure}

Furthermore, we examine the impact of the content delivery quality $\theta$ on the SP and the users. Suppose user types are i.i.d. uniformly distributed on $[\u{a},\o{a}]$, where we set $\u{a}=1$ and $\o{a}=4$. Suppose we have a (relatively) large number of users $n=100$ and each $\b{\Omega}_i$ is formed according to the procedure in Assumption \ref{assump_formation} with $q_1=0.7$, $q_2=0.5$, $q_3=0.4$. It can be easily verified that the quadratic content delivery cost function $h(\theta)=\alpha\theta^2$ satisfies Assumption \ref{assump_cost_func}, wheree $\alpha$ is set to be $0.1$ here. Thus, Assumptions \ref{assump_large_user}-\ref{assump_cost_func} all hold true. According to Theorem \ref{thm_opt_theta}, the optimal $\theta^*$ that maximizes the ER of the SP is $\theta^*=\frac{\u{a}}{2\alpha}=5$.  To confirm this empirically, we report the ER of the SP and the expected average utility of users with varying $\theta$ in Fig. \ref{theta}-(a) and Fig. \ref{theta}-(b), respectively. From Fig. \ref{theta}-(a), we observe that, in accordance with our theoretical result, the optimal $\theta^*$ to maximize the ER of the SP is $5$. From \ref{theta}-(b), we further see that the value of $\theta$ that maximizes the expected average utility of users is $8$, which is larger than $5$, the optimal $\theta^*$ for the SP. This is reasonable since improving the delivery quality $\theta$ directly benefits users and only indirectly benefits the SP through users' payments. As such, users tend to prefer higher delivery quality than the SP does.

\begin{figure}
\renewcommand\figurename{\small Fig.}
\centering \vspace*{8pt} \setlength{\baselineskip}{10pt}
\subfigure[Impact of the content popularity on the ER of the SP]{
\includegraphics[scale = 0.17]{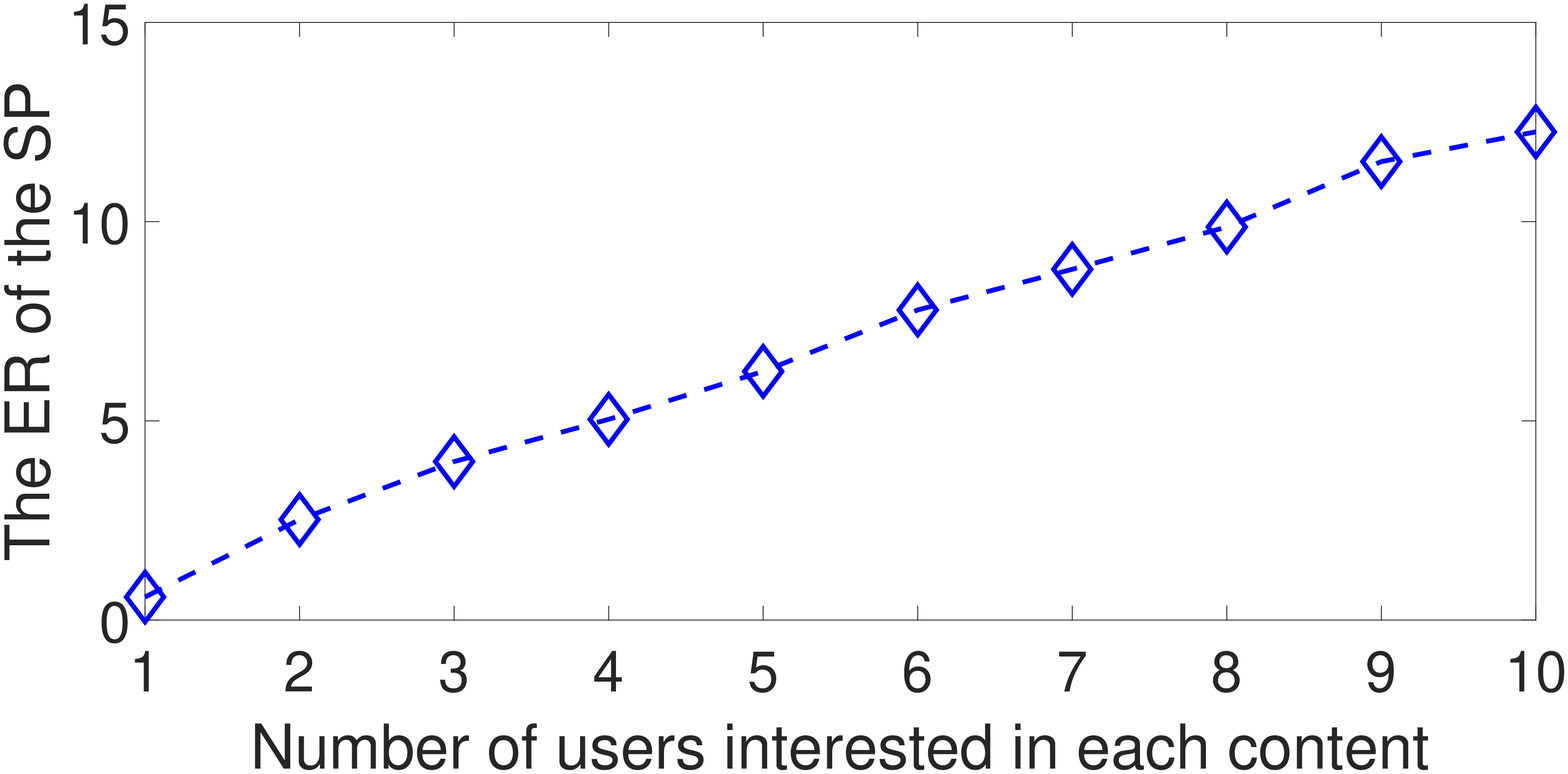}}
\subfigure[Impact of the content popularity on  the expected average utility of users]{
\includegraphics[scale = 0.17]{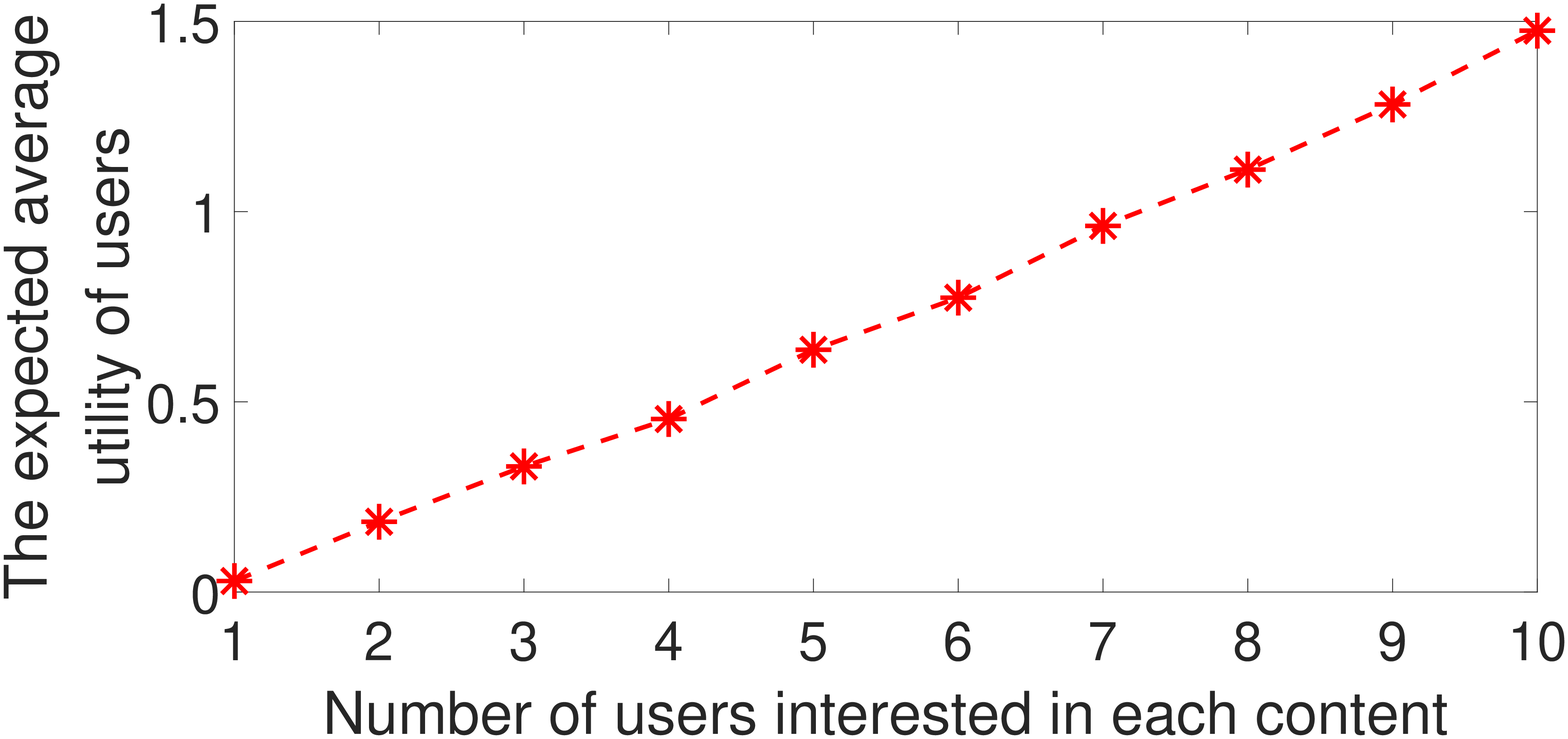}}
\caption{Impact of the content popularity (captured by $|\b{\Omega}_i|$)}
\label{omega}
\end{figure}

Next, we study the impact of content popularity, which is captured by $|\b{\Omega}_i|$. We return to the case of $n=10$ users whose types are uniformly distributed with parameters $\u{a}_j=1+0.1(j-1)$ and $\o{a}_j=4+0.1(j-1)$ for all $j$. For simplification, we let each content interest the same number of users $k$, i.e., $|\b{\Omega}|_i=k$ for all $i$, and each $\b{\Omega}_i$ consists of $k$ randomly picked users. The ER of the SP and the expected average utility of users with varying $k$, i.e., varying number of users interested in each content, are illustrated in Fig. \ref{omega}-(a) and Fig. \ref{omega}-(b), respectively. We observe that both the ER of the SP and the expected average utility of users increase with content popularity. As contents become more popular, the SP can reap profits from more users. On the other hand, each user is charged with lower price and her utility can be boosted.

\begin{figure}
\renewcommand\figurename{\small Fig.}
\centering \vspace*{8pt} \setlength{\baselineskip}{10pt}
\subfigure[Impact of the number of users on the ER of the SP]{
\includegraphics[scale = 0.17]{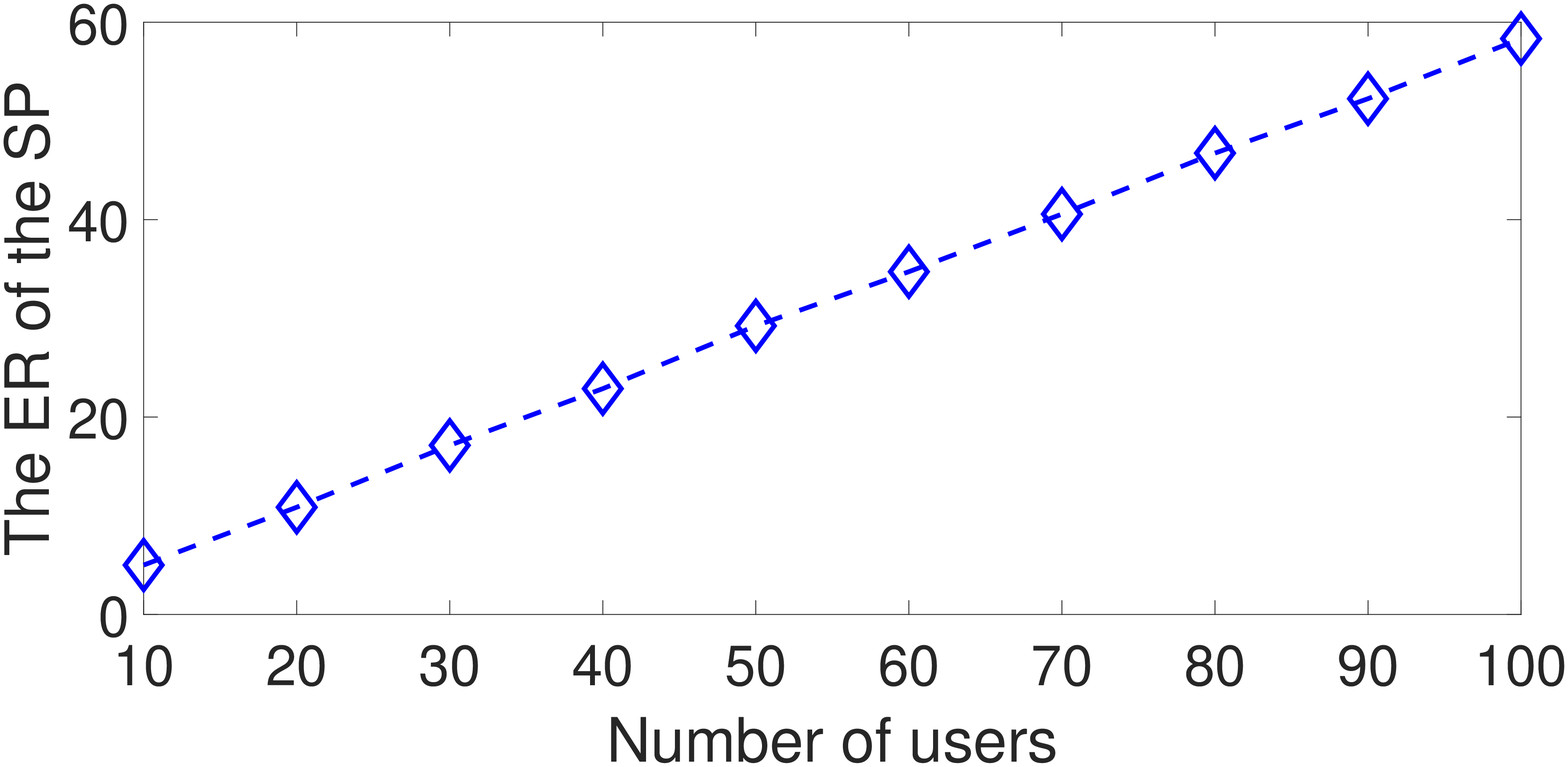}}
\subfigure[Impact of the number of users on the expected average utility of users]{
\includegraphics[scale = 0.17]{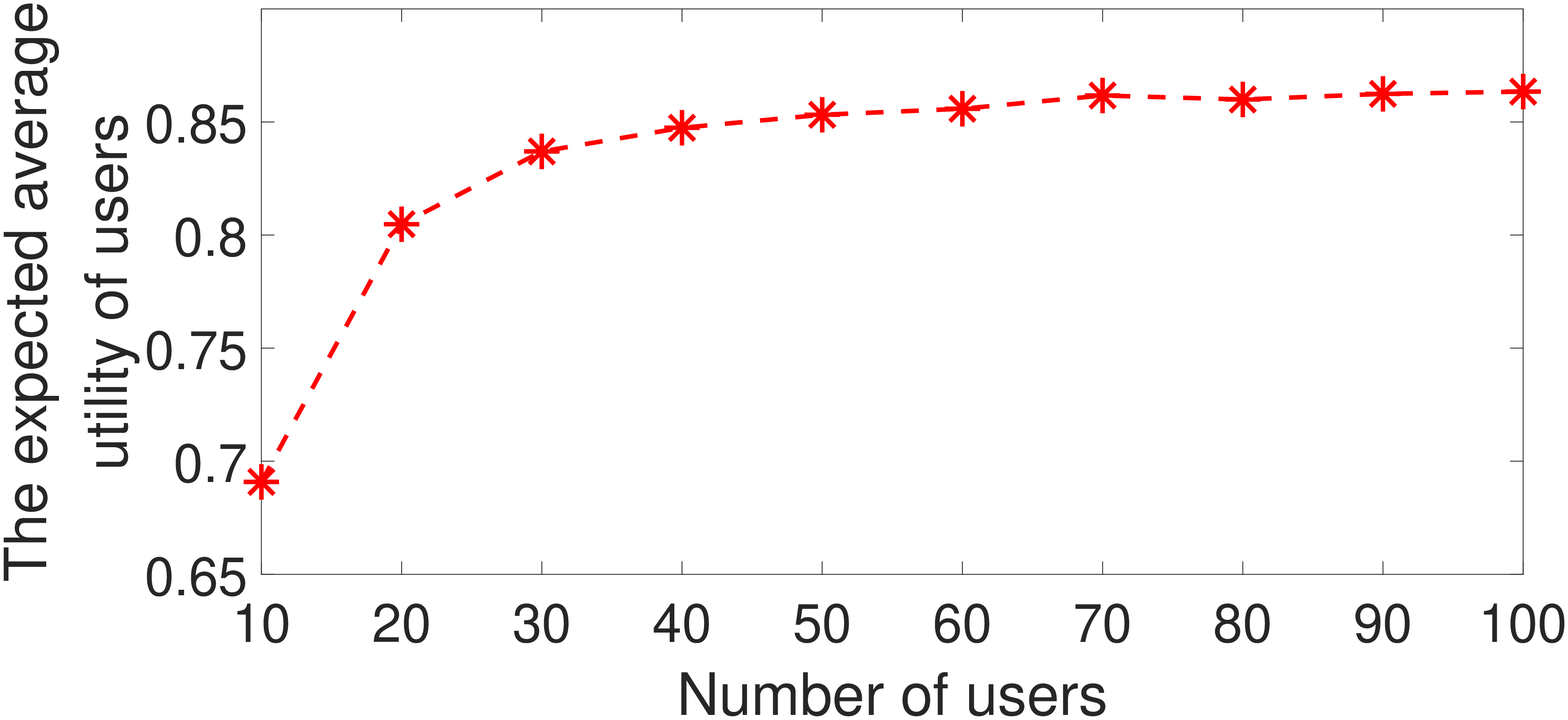}}
\caption{Impact of the number of users $n$}
\label{impact_numuser}
\end{figure}

We also examine the impact of the number of users $n$ on the ER of the SP and the expected average utility of users in Fig. \ref{impact_numuser}-(a) and Fig. \ref{impact_numuser}-(b), respectively. We set each $|\b{\Omega}_i|$ to be $0.6n$, i.e., each content interests a constant proportion of users. We observe that the ER of the SP increases with $n$ because the SP can collect more profits from more users. The expected average utility of users is increasing at first and gradually saturates for large $n$. This suggests that, in the optimal auction mechanism, the SP will not further discount its content prices if the number of users is already large enough.

\begin{figure}
\renewcommand\figurename{\small Fig.}
\centering \vspace*{8pt} \setlength{\baselineskip}{10pt}
\subfigure[Impact of content delivery cost function on the ER of the SP]{
\includegraphics[scale = 0.17]{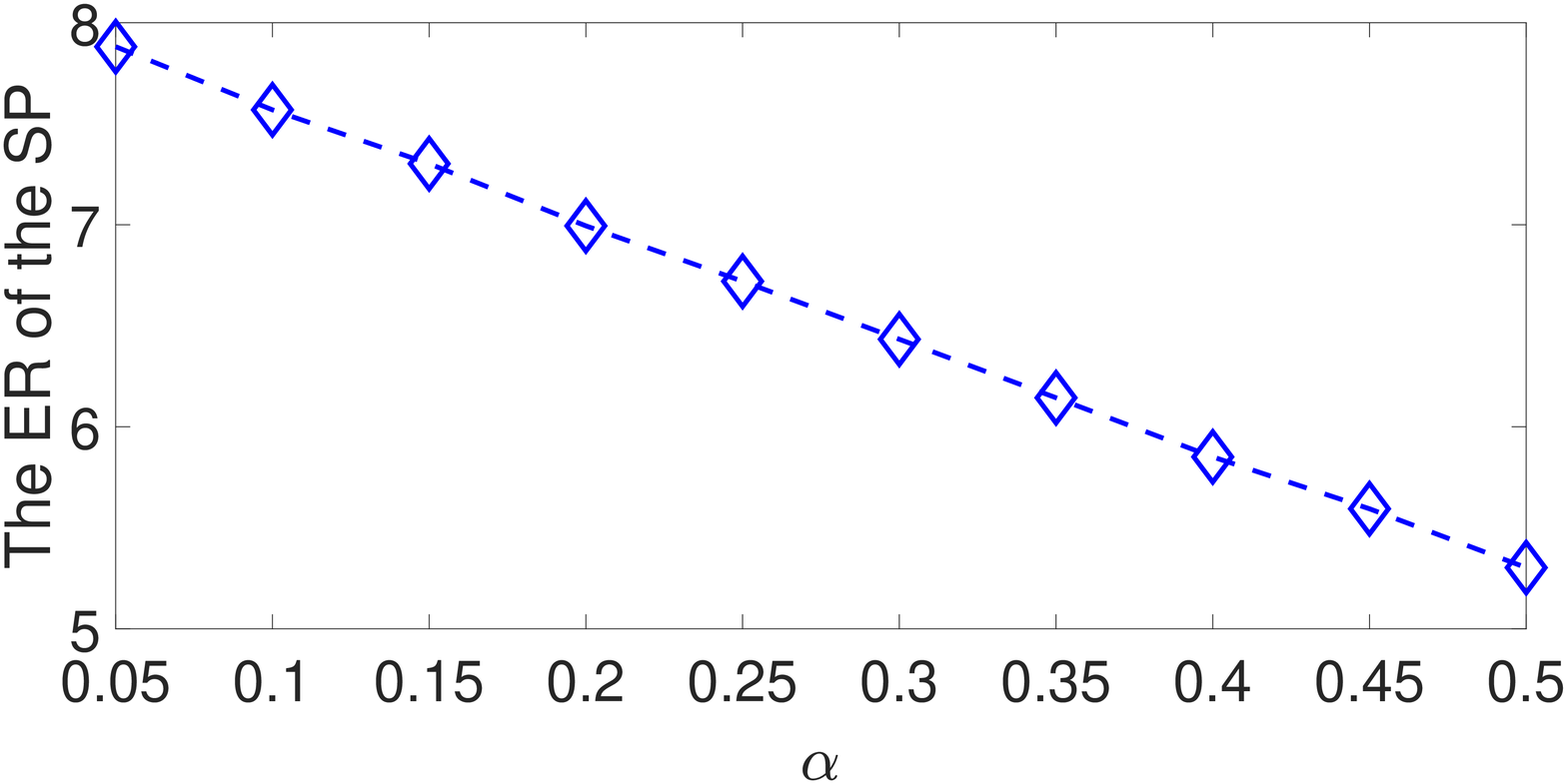}}
\subfigure[Impact of content delivery cost function on the expected average utility of users]{
\includegraphics[scale = 0.17]{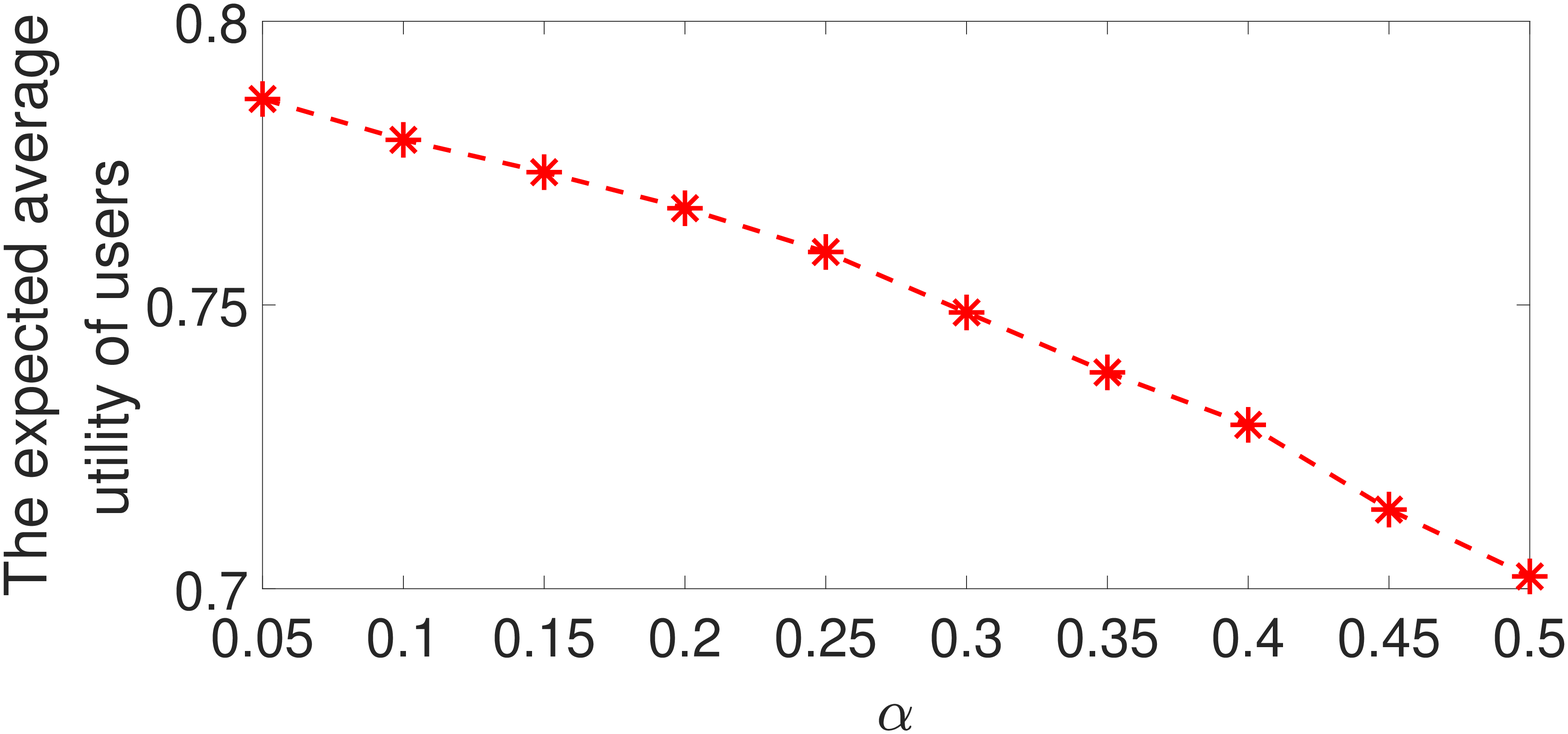}}
\caption{Impact of content delivery cost function (captured by $\alpha$)}
\label{alpha}
\end{figure}

In addition, we investigate the impact of content delivery cost function, which is controlled by the coefficient $\alpha$ in the quadratic cost function used here. The greater the $\alpha$ is, the higher the content delivery cost for the same delivery quality $\theta$, which is set to be $0.1$ here. We report the ER of the SP and the expected average utility of users in Fig. \ref{alpha}-(a) and Fig. \ref{alpha}-(b), respectively. We observe that, as $\alpha$ increases, both the ER of the SP and the expected average utility of users decrease while the former decreases faster. This is reasonable since increasing delivery cost directly hurts the SP and only reduces users' utility indirectly through increasing payments.

\begin{figure}
\renewcommand\figurename{\small Fig.}
\centering \vspace*{8pt} \setlength{\baselineskip}{10pt}
\subfigure[Uniformly distributed user types]{
\includegraphics[scale = 0.17]{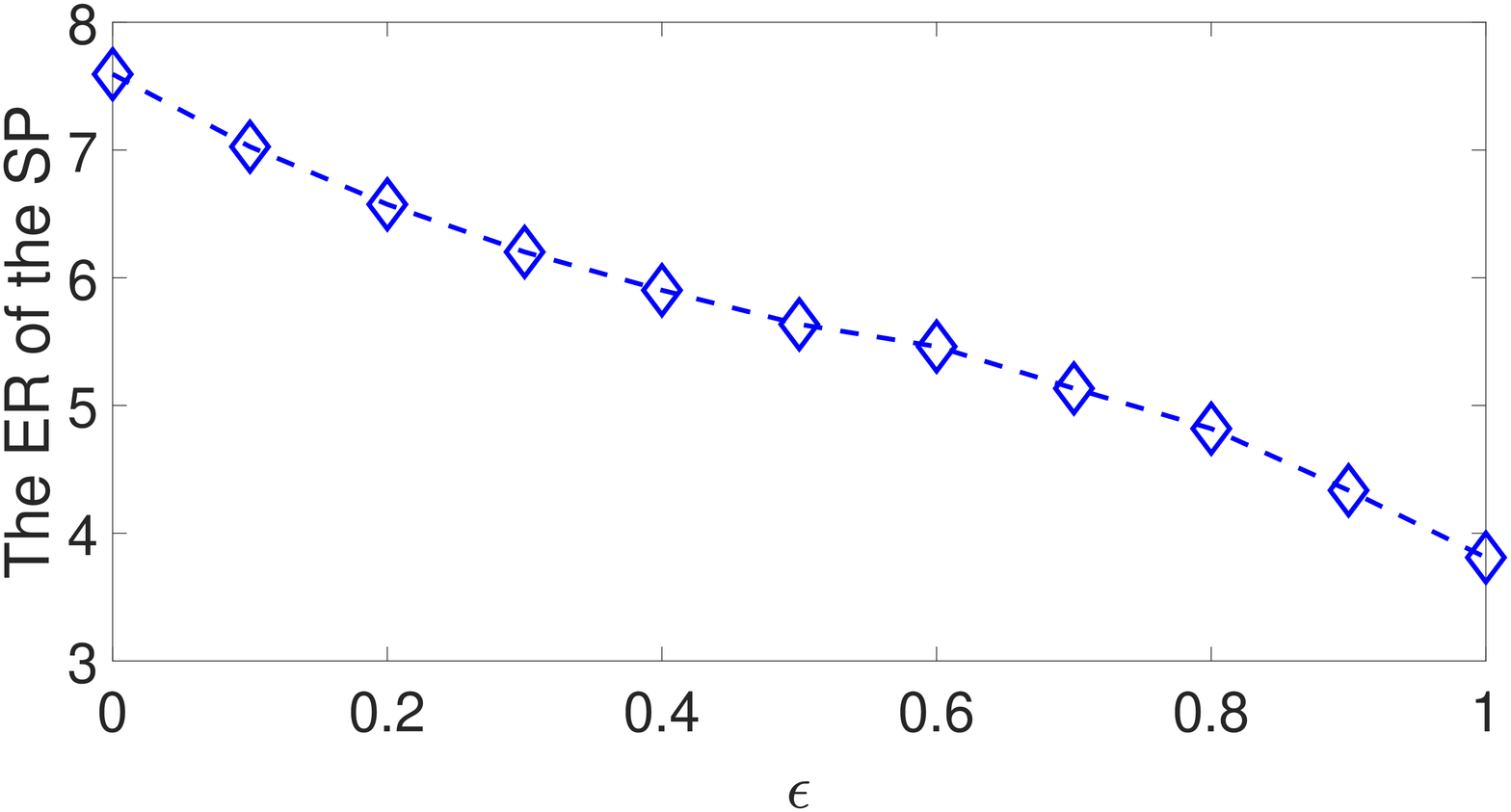}}
\subfigure[Exponentially distributed user types]{
\includegraphics[scale = 0.17]{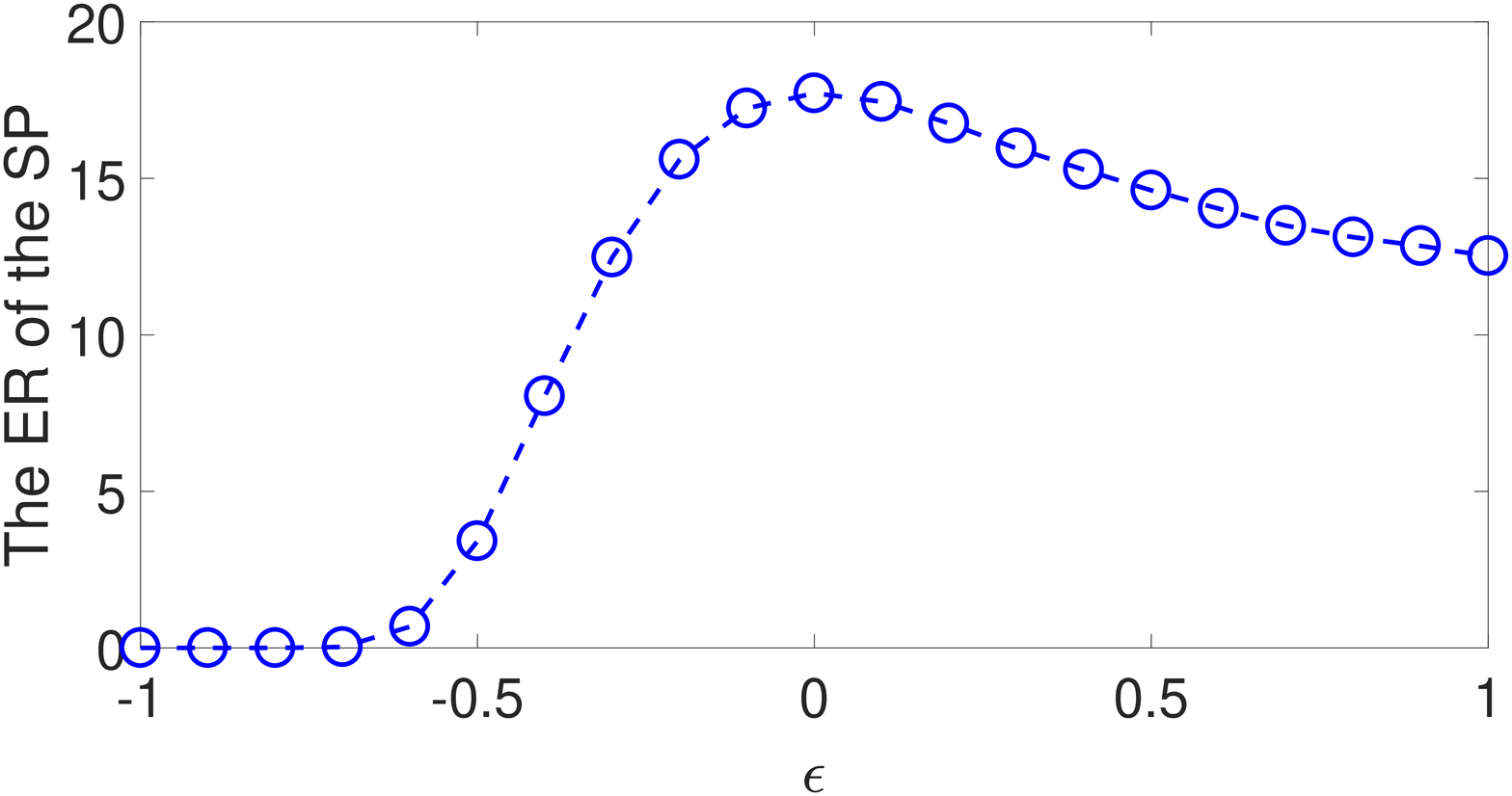}}
\caption{Impact of the modeling errors of user type distributions}
\label{error}
\end{figure}

Last but not least, we examin the impact of modeling errors of user type distributions. Suppose the true distribution of user $j$'s type is uniform distribution over the interval $[\u{a}_j^{\text{true}},\o{a}_j^{\text{true}}]$. Because of factors such as lack of past usage history, the SP may not be able to infer the true parameters $\u{a}_j^{\text{true}}$ and $\o{a}_j^{\text{true}}$ accurately. Instead, due to lack of information, the SP may choose to adopt more conservative estimates (thus a larger support of distributions) $\u{a}_j^{\text{est}}=\u{a}_j^{\text{true}}-\frac{\epsilon(\o{a}_j^{\text{true}}-\u{a}_j^{\text{true}})}{2}$ and $\o{a}_j^{\text{est}}=\o{a}_j^{\text{true}}+\frac{\epsilon(\o{a}_j^{\text{true}}-\u{a}_j^{\text{true}})}{2}$, where $\epsilon>0$ characterizes the relative estimation error. Then, the estimated $\u{a}_j^{\text{est}}$ and $\o{a}_j^{\text{est}}$ are fed to the proposed optimal auction mechanism to allocate the cache space and to decide the payments. In such a case, the ER of the SP with varying relative error $\epsilon$ is shown in Fig. \ref{error}-(a). We observe that the ER decreases smoothly with $\epsilon$ and the SP can keep at least $75\%$ of its ER as long as the relative error $\epsilon$ is no more than $0.5$. Moreover, we repeat the same experiment for exponentially distributed user types. The estimated distribution parameter is set to be $\lambda_j^{\text{est}}=(1+\epsilon)\lambda_j^{\text{true}}$, where $\lambda_j^{\text{true}}$ is the true distribution parameter for user $j$. In such a case, when $\epsilon$ varies from $-1$ to $1$, the corresponding ER of the SP is reported in Fig. \ref{error}-(b). Analogous to the scenario of uniformly distributed types, the ER still decreases with increasing relative error $|\epsilon|$ when types are exponentially distributed. We also observe that the ER decreases very slowly when $\epsilon$ is positive. Thus, if the SP cannot infer $\lambda_j$ precisely, he'd better overestimate it than underestimate it. In other words, if the SP is unclear about the true type distribution, he'd better underestimate the expected valuation ($1/\lambda_j$) of users than overestimate it.

\section{Conclusion}
In this paper, we have designed an optimal auction mechanism for content caching to maximize the expected revenue of the SP. The mechanism takes into consideration both the content acquisition costs and the content delivery costs. It incentivizes truthful reports of the private user types (incentive compatibility) and user participation of the mechanism (individual rationality). Moreover, computationally efficient methods of calculating the optimal cache space allocation and user payments have been presented and it has been shown that a user does not need to pay anything if no content of her interest is cached by the SP. We further examine the optimal choice of the content delivery quality under the reasonable hypothesis of large number of users and have derived a simple formula to compute the optimal delivery quality. Finally, extensive simulations have been carried out to evaluate the performance of the proposed mechanism and the impact of various model parameters has been highlighted.

\section*{Appendix A: Proof of Lemma \ref{lem_constraints}}
\textbf{[Necessity]} Suppose IC in \eqref{ic} and IR in \eqref{ir} hold. Then, for any $j=1,...,n$, $t_j,\tau_j\in\b{T}_j$:
\begin{align}
\w{v}_j(t_j)&\geq v_j(\tau_j,t_j)\\
&=\theta (t_j-\tau_j)\w{p}_j(\tau_j)+\theta\tau_j\w{p}_j(\tau_j)-\w{x}_j(\tau_j)\\
&=\theta (t_j-\tau_j)\w{p}_j(\tau_j)+\w{v}_j(\tau_j).\label{l1_1}
\end{align}
Symmetrically, we have:
\begin{align}
\w{v}_j(\tau_j)-\w{v}_j(t_j)\geq\theta(\tau_j-t_j)\w{p}_j(t_j).\label{l1_2}
\end{align}
Hence,
\begin{align}
\theta(\tau_j-t_j)\w{p}_j(\tau_j)\geq\w{v}_j(\tau_j)-\w{v}_j(t_j)\geq\theta(\tau_j-t_j)\w{p}_j(t_j).\label{l1_3}
\end{align}
Since $\theta>0$, from \eqref{l1_3}, we have:
\begin{align}
(\tau_j-t_j)(\w{p}_j(\tau_j)-\w{p}_j(t_j))\geq0,~\forall t_j,\tau_j\in\b{T}_j,
\end{align}
which implies $\w{p}_j(\cdot)$ is an increasing function, i.e., (i) holds. From \eqref{l1_3}, for $\tau_j>t_j$:
\begin{align}
\theta\w{p}_j(\tau_j)\geq\frac{\w{v}_j(\tau_j)-\w{v}_j(t_j)}{\tau_j-t_j}\geq\theta\w{p}_j(t_j).
\end{align}
Letting $\tau_j$ approaches $t_j$, we get:
\begin{align}
\lim_{\tau_j\downarrow t_j}\frac{\w{v}_j(\tau_j)-\w{v}_j(t_j)}{\tau_j-t_j}=\theta\w{p}_j(t_j).
\end{align}
It can be analogously shown that the left limit of the above relation also holds so that $\frac{d}{dt_j}\widetilde{v}_j(t_j)=\theta\w{p}_j(t_j)$, $\forall t_j\in\b{T}_j$. Integration of this relation leads to statement (ii). Statement (iii) clearly follows from IR.

\textbf{[Sufficiency]} In converse, suppose (i), (ii), (iii) hold. From the FF constraint, we know $\w{p}_j(\tau_j)$, $\forall j$ and $\forall \tau_j\in\b{T}_j$. Hence, from (ii) and (iii), we have $\w{v}_j(t_j)\geq\w{v}_j(\u{a}_j)\geq0$, i.e., IR holds. Furthermore, from \eqref{v2} and (ii), we can derive, for any $j=1,...,n$, $t_j\in\b{T}_j$:
\begin{align}
\w{x}_j(t_j)=\theta t_j\w{p}_j(t_j)-\w{v}_j(\u{a}_j)-\theta\int_{\u{a}_j}^{t_j}\w{p}_j(\tau_j)d\tau_j.\label{l1_4}
\end{align}
Therefore, for any $t_j,t_j'\in\b{T}_j$:
\begin{align}
&\w{v}_j(t_j)-v_j(t_j',t_j)\nonumber\\
&=\theta t_j\w{p}_j(t_j)-\w{x}_j(t_j)-\theta t_j\w{p}_j(t_j')+\w{x}_j(t_j')\\
&=\theta(t_j'-t_j)\w{p}_j(t_j')+\theta\int_{t_j'}^{t_j}\w{p}_j(\tau_j)d\tau_j\\
&=\theta\int_{t_j'}^{t_j}\left[\w{p}_j(\tau_j)-\w{p}_j(t_j')\right]d\tau_j\ageq0,
\end{align}
where (a) follows from the monotonicity of $\w{p}_j(\cdot)$ in (i). Thus, IC holds.

\section*{Appendix B: Proof of Lemma \ref{lem_er}}
For feasible mechanism $\langle\@{p}(\cdot),\@{x}(\cdot)\rangle$, according to Lemma \ref{lem_constraints}, we know that statements (i), (ii) and (iii) in Lemma \ref{lem_constraints} hold. From the definition of $\w{v}_j(t_j)$, we get, for any $t_j\in\b{T}_j$:
\begin{align}
\w{v}_j(t_j)=\int_{\b{T}_{-j}}\left[\left(\sum_{i\in\b{S}_j}p_i(\@{t})\right)\theta t_j-x_j(\@{t})\right]f_{-j}(\@{t}_{-j})d\@{t}_{-j}.
\end{align}
Along with statement (ii) of Lemma \ref{lem_constraints}, we obtain:
\begin{align}
&\int_{\b{T}_{-j}}x_j(\@{t})f_{-j}(\@{t}_{-j})d\@{t}_{-j}\nonumber\\
&=\int_{\b{T}_{-j}}\left[\sum_{i\in\b{S}_j}p_i(\@{t})\right]\theta t_jf_{-j}(\@{t}_{-j})d\@{t}_{-j}-\w{v}_j(\u{a}_j)\nonumber\\
&~~~-\theta\int_{\u{a}_j}^{t_j}\w{p}_j(\tau_j)d\tau_j\\
&\aeq-\w{v}_j(\u{a}_j)+\theta\int_{\b{T}_{-j}}\Bigg\{\left[\sum_{i\in\b{S}_j}p_i(\@{t})\right]t_j\nonumber\\
&~~~-\int_{\u{a}_j}^{t_j}\sum_{i\in\b{S}_j}p_i(\tau_j,\@{t}_{-j})d\tau_j\Bigg\}f_{-j}(\@{t}_{-j})d\@{t}_{-j},
\end{align}
where in (a) we make use of the definition of $\w{p}_j(\tau_j)$ in \eqref{p_tilde_def} and interchange the order of integrals. Performing the operation $\int_{\b{T}_j}(\cdots)f_j(t_j)dt_j$ on both sides of the above equation, we obtain:
\begin{align}
&\int_{\b{T}}x_j(\@{t})f(\@{t})d\@{t}\nonumber\\
&=-\w{v}_j(\u{a}_j)+\theta\int_\b{T}\left[\sum_{i\in\b{S}_j}p_i(\@{t})\right]t_jf(\@{t})d\@{t}\nonumber\\
&~~~-\theta\int_{\b{T}_{-j}}\left\{\int_{\b{T}_j}\int_{\u{a}_j}^{t_j}\left[\sum_{i\in\b{S}_j}p_i(\tau_j,\@{t}_{-j})\right]f_j(t_j)d\tau_jdt_j\right\}\nonumber\\
&~~~\cdot f_{-j}(\@{t}_{-j})d\@{t}_{-j}\\
&\aeq-\w{v}_j(\u{a}_j)+\theta\int_\b{T}\left[\sum_{i\in\b{S}_j}p_i(\@{t})\right]t_jf(\@{t})d\@{t}\nonumber\\
&~~~-\theta\int_{\b{T}_{-j}}\int_{\b{T}_j}[1-F_j(t_j)]\left[\sum_{i\in\b{S}_j}p_i(\@{t})\right]dt_jf_{-j}(\@{t}_{-j})d\@{t}_{-j}\\
&=-\w{v}_j(\u{a}_j)+\theta\int_\b{T}\left[t_j-\frac{1-F_j(t_j)}{f_j(t_j)}\right]\left[\sum_{i\in\b{S}_j}p_i(\@{t})\right]f(\@{t})d\@{t},\label{l2_1}
\end{align}
where in (a) we interchange order of integrals:
$\int_{\b{T}_j}\int_{\u{a}_j}^{t_j}\left[\sum_{i\in\b{S}_j}p_i(\tau_j,\@{t}_{-j})\right]f_j(t_j)d\tau_jdt_j
=\int_{\b{T}_j}\int_{\tau_j}^{\o{a}_j}\left[\sum_{i\in\b{S}_j}p_i(\tau_j,\@{t}_{-j})\right]f_j(t_j)dt_jd\tau_j
=\int_{\b{T}_j}[1-F_j(t_j)]\left[\sum_{i\in\b{S}_j}p_i(\@{t})\right]dt_j$.

Substituting \eqref{l2_1} into \eqref{er}, we get:
\begin{align}
\texttt{ER}&=-\sum_{j=1}^n\w{v}_j(\u{a}_j)\nonumber\\
&~~~+\theta\int_\b{T}\left\{\sum_{j=1}^n\left[t_j-\frac{1-F_j(t_j)}{f_j(t_j)}\right]\left[\sum_{i\in\b{S}_j}p_i(\@{t})\right]\right\}f(\@{t})d\@{t}\nonumber\\
&~~~-\int_\b{T}\left\{\sum_{i=1}^mp_i(\@{t})\left[r_i+|\b{\Omega}_i|h(\theta)\right]\right\}f(\@{t})d\@{t}\\
&\aeq-\sum_{j=1}^n\w{v}_j(\u{a}_j)+\theta\int_\b{T}\Bigg\{\sum_{j=1}^n\left[t_j-\frac{1-F_j(t_j)}{f_j(t_j)}-\frac{h(\theta)}{\theta}\right]\nonumber\\
&~~~\cdot\left[\sum_{i\in\b{S}_j}p_i(\@{t})\right]\Bigg\}f(\@{t})d\@{t}-\int_\b{T}\left[\sum_{i=1}^mp_i(\@{t})r_i\right]f(\@{t})d\@{t},
\end{align}
where in (a) we make use of $\sum_{j=1}^n\sum_{i\in\b{S}_j}p_i(\@{t})=\sum_{i=1}^mp_i(\@{t})|\b{\Omega}_i|$.

\bibliography{mybib}{}

\begin{thebibliography}{10}

\bibitem{osseiran2013foundation}
A.~Osseiran, V.~Braun, T.~Hidekazu, P.~Marsch, H.~Schotten, H.~Tullberg, M.~A.
  Uusitalo, and M.~Schellman, ``The foundation of the mobile and wireless
  communications system for 2020 and beyond: Challenges, enablers and
  technology solutions,'' in {\em Vehicular Technology Conference (VTC Spring),
  2013 IEEE 77th}, pp.~1--5, IEEE, 2013.

\bibitem{abedini2014content}
N.~Abedini and S.~Shakkottai, ``Content caching and scheduling in wireless
  networks with elastic and inelastic traffic,'' {\em IEEE/ACM Transactions on
  Networking (TON)}, vol.~22, no.~3, pp.~864--874, 2014.

\bibitem{pacifici2016cache}
V.~Pacifici, F.~Lehrieder, and G.~D{\'a}n, ``Cache bandwidth allocation for p2p
  file-sharing systems to minimize inter-isp traffic,'' {\em IEEE/ACM
  Transactions on Networking}, vol.~24, no.~1, pp.~437--448, 2016.

\bibitem{kvaternik2016methodology}
K.~Kvaternik, J.~Llorca, D.~Kilper, and L.~Pavel, ``A methodology for the
  design of self-optimizing, decentralized content-caching strategies,'' {\em
  IEEE/ACM Transactions on Networking}, vol.~24, no.~5, pp.~2634--2647, 2016.

\bibitem{vural2017caching}
S.~Vural, N.~Wang, P.~Navaratnam, and R.~Tafazolli, ``Caching transient data in
  internet content routers,'' {\em IEEE/ACM Transactions on Networking},
  vol.~25, no.~2, pp.~1048--1061, 2017.

\bibitem{dehghan2017complexity}
M.~Dehghan, B.~Jiang, A.~Seetharam, T.~He, T.~Salonidis, J.~Kurose, D.~Towsley,
  and R.~Sitaraman, ``On the complexity of optimal request routing and content
  caching in heterogeneous cache networks,'' {\em IEEE/ACM Transactions on
  Networking}, vol.~25, no.~3, pp.~1635--1648, 2017.

\bibitem{ioannidis2016adaptive}
S.~Ioannidis and E.~Yeh, ``Adaptive caching networks with optimality
  guarantees,'' in {\em Proceedings of the 2016 ACM SIGMETRICS International
  Conference on Measurement and Modeling of Computer Science}, pp.~113--124,
  ACM, 2016.

\bibitem{gregori2016wireless}
M.~Gregori, J.~G{\'o}mez-Vilardeb{\'o}, J.~Matamoros, and D.~G{\"u}nd{\"u}z,
  ``Wireless content caching for small cell and d2d networks,'' {\em IEEE
  Journal on Selected Areas in Communications}, vol.~34, no.~5, pp.~1222--1234,
  2016.

\bibitem{tadrous2015proactive}
J.~Tadrous, A.~Eryilmaz, and H.~E. Gamal, ``Proactive content download and user
  demand shaping for data networks,'' {\em IEEE/ACM Transactions on Networking
  (TON)}, vol.~23, no.~6, pp.~1917--1930, 2015.

\bibitem{tadrous2016optimal}
J.~Tadrous and A.~Eryilmaz, ``On optimal proactive caching for mobile networks
  with demand uncertainties,'' {\em IEEE/ACM Transactions on Networking},
  vol.~24, no.~5, pp.~2715--2727, 2016.

\bibitem{tadrous2016joint}
J.~Tadrous, A.~Eryilmaz, and H.~El~Gamal, ``Joint smart pricing and proactive
  content caching for mobile services,'' {\em IEEE/ACM Transactions on
  Networking}, vol.~24, no.~4, pp.~2357--2371, 2016.

\bibitem{mokhtarian2017flexible}
K.~Mokhtarian and H.-A. Jacobsen, ``Flexible caching algorithms for video
  content distribution networks,'' {\em IEEE/ACM Transactions on Networking
  (TON)}, vol.~25, no.~2, pp.~1062--1075, 2017.

\bibitem{maddah2014fundamental}
M.~A. Maddah-Ali and U.~Niesen, ``Fundamental limits of caching,'' {\em IEEE
  Transactions on Information Theory}, vol.~60, no.~5, pp.~2856--2867, 2014.

\bibitem{maddah2015decentralized}
M.~A. Maddah-Ali and U.~Niesen, ``Decentralized coded caching attains
  order-optimal memory-rate tradeoff,'' {\em IEEE/ACM Transactions On
  Networking}, vol.~23, no.~4, pp.~1029--1040, 2015.

\bibitem{niesen2017coded}
U.~Niesen and M.~A. Maddah-Ali, ``Coded caching with nonuniform demands,'' {\em
  IEEE Transactions on Information Theory}, vol.~63, no.~2, pp.~1146--1158,
  2017.

\bibitem{pedarsani2016online}
R.~Pedarsani, M.~A. Maddah-Ali, and U.~Niesen, ``Online coded caching,'' {\em
  IEEE/ACM Transactions on Networking}, vol.~24, no.~2, pp.~836--845, 2016.

\bibitem{karamchandani2016hierarchical}
N.~Karamchandani, U.~Niesen, M.~A. Maddah-Ali, and S.~N. Diggavi,
  ``Hierarchical coded caching,'' {\em IEEE Transactions on Information
  Theory}, vol.~62, no.~6, pp.~3212--3229, 2016.

\bibitem{golrezaei2014scaling}
N.~Golrezaei, A.~G. Dimakis, and A.~F. Molisch, ``Scaling behavior for
  device-to-device communications with distributed caching,'' {\em IEEE
  Transactions on information Theory}, vol.~60, no.~7, pp.~4286--4298, 2014.

\bibitem{ji2015throughput}
M.~Ji, G.~Caire, and A.~F. Molisch, ``The throughput-outage tradeoff of
  wireless one-hop caching networks,'' {\em IEEE Transactions on Information
  Theory}, vol.~61, no.~12, pp.~6833--6859, 2015.

\bibitem{ji2016fundamental}
M.~Ji, G.~Caire, and A.~F. Molisch, ``Fundamental limits of caching in wireless
  d2d networks,'' {\em IEEE Transactions on Information Theory}, vol.~62,
  no.~2, pp.~849--869, 2016.

\bibitem{jeon2017wireless}
S.-W. Jeon, S.-N. Hong, M.~Ji, G.~Caire, and A.~F. Molisch, ``Wireless multihop
  device-to-device caching networks,'' {\em IEEE Transactions on Information
  Theory}, vol.~63, no.~3, pp.~1662--1676, 2017.

\bibitem{liu2017much}
A.~Liu and V.~K. Lau, ``How much cache is needed to achieve linear capacity
  scaling in backhaul-limited dense wireless networks?,'' {\em IEEE/ACM
  Transactions on Networking (TON)}, vol.~25, no.~1, pp.~179--188, 2017.

\bibitem{mahdian2017throughput}
M.~Mahdian and E.~M. Yeh, ``Throughput and delay scaling of content-centric ad
  hoc and heterogeneous wireless networks,'' {\em IEEE/ACM Transactions on
  Networking}, 2017.

\bibitem{timo2017rate}
R.~Timo, S.~S. Bidokhti, M.~Wigger, and B.~C. Geiger, ``A rate-distortion
  approach to caching,'' {\em IEEE Transactions on Information Theory}, 2017.

\bibitem{xu2017fundamental}
F.~Xu, M.~Tao, and K.~Liu, ``Fundamental tradeoff between storage and latency
  in cache-aided wireless interference networks,'' {\em IEEE Transactions on
  Information Theory}, vol.~63, no.~11, pp.~7464--7491, 2017.

\bibitem{lim2017information}
S.~H. Lim, C.-Y. Wang, and M.~Gastpar, ``Information-theoretic caching: The
  multi-user case,'' {\em IEEE Transactions on Information Theory}, vol.~63,
  no.~11, pp.~7018--7037, 2017.

\bibitem{nisan1999algorithmic}
N.~Nisan and A.~Ronen, ``Algorithmic mechanism design,'' in {\em Proceedings of
  the thirty-first annual ACM symposium on Theory of computing}, pp.~129--140,
  ACM, 1999.

\bibitem{osborne1994course}
M.~J. Osborne and A.~Rubinstein, {\em A course in game theory}.
\newblock MIT press, 1994.

\bibitem{dai2012collaborative}
J.~Dai, F.~Liu, B.~Li, B.~Li, and J.~Liu, ``Collaborative caching in wireless
  video streaming through resource auctions,'' {\em IEEE Journal on Selected
  Areas in Communications}, vol.~30, no.~2, pp.~458--466, 2012.

\bibitem{li2016pricing}
J.~Li, H.~Chen, Y.~Chen, Z.~Lin, B.~Vucetic, and L.~Hanzo, ``Pricing and
  resource allocation via game theory for a small-cell video caching system,''
  {\em IEEE Journal on Selected Areas in Communications}, vol.~34, no.~8,
  pp.~2115--2129, 2016.

\bibitem{liu2017design}
T.~Liu, J.~Li, F.~Shu, M.~Tao, W.~Chen, and Z.~Han, ``Design of contract-based
  trading mechanism for a small-cell caching system,'' {\em IEEE Transactions
  on Wireless Communications}, vol.~16, no.~10, pp.~6602--6617, 2017.

\bibitem{pacifici2017distributed}
V.~Pacifici and G.~D{\'a}n, ``Distributed caching algorithms for interconnected
  operator cdns,'' {\em IEEE Journal on Selected Areas in Communications},
  vol.~35, no.~2, pp.~380--391, 2017.

\bibitem{pacifici2016coordinated}
V.~Pacifici and G.~D{\'a}n, ``Coordinated selfish distributed caching for
  peering content-centric networks,'' {\em IEEE/ACM Transactions on
  Networking}, vol.~24, no.~6, pp.~3690--3701, 2016.

\bibitem{hajimirsadeghi2017joint}
M.~Hajimirsadeghi, N.~B. Mandayam, and A.~Reznik, ``Joint caching and pricing
  strategies for popular content in information centric networks,'' {\em IEEE
  Journal on Selected Areas in Communications}, vol.~35, no.~3, pp.~654--667,
  2017.

\bibitem{wang2017milking}
L.~Wang, G.~Tyson, J.~Kangasharju, and J.~Crowcroft, ``Milking the cache cow
  with fairness in mind,'' {\em IEEE/ACM Transactions on Networking}, vol.~25,
  no.~5, pp.~2686--2700, 2017.

\bibitem{liu2016caching}
D.~Liu, B.~Chen, C.~Yang, and A.~F. Molisch, ``Caching at the wireless edge:
  Design aspects, challenges, and future directions,'' {\em IEEE Communications
  Magazine}, vol.~54, no.~9, pp.~22--28, 2016.

\bibitem{wang2017integration}
C.~Wang, Y.~He, F.~R. Yu, Q.~Chen, and L.~Tang, ``Integration of networking,
  caching and computing in wireless systems: A survey, some research issues and
  challenges,'' {\em IEEE Communications Surveys \& Tutorials}, 2017.

\bibitem{glass2017leveraging}
S.~Glass, I.~Mahgoub, and M.~Rathod, ``Leveraging manet based cooperative cache
  discovery techniques in vanets: A survey andanalysis,'' {\em IEEE
  Communications Surveys \& Tutorials}, 2017.

\bibitem{tourani2017security}
R.~Tourani, S.~Misra, T.~Mick, and G.~Panwar, ``Security, privacy, and access
  control in information-centric networking: A survey,'' {\em IEEE
  Communications Surveys \& Tutorials}, 2017.

\bibitem{myerson1981optimal}
R.~B. Myerson, ``Optimal auction design,'' {\em Mathematics of operations
  research}, vol.~6, no.~1, pp.~58--73, 1981.

\bibitem{nadendla2017optimal}
V.~S.~S. Nadendla, S.~K. Brahma, and P.~K. Varshney, ``Optimal spectrum auction
  design with 2-d truthful revelations under uncertain spectrum availability,''
  {\em IEEE/ACM Transactions on Networking}, vol.~25, no.~1, pp.~420--433,
  2017.

\bibitem{cao2015target}
N.~Cao, S.~Brahma, and P.~K. Varshney, ``Target tracking via crowdsourcing: A
  mechanism design approach.,'' {\em IEEE Trans. Signal Processing}, vol.~63,
  no.~6, pp.~1464--1476, 2015.

\bibitem{chun2013secondary}
S.~H. Chun and R.~J. La, ``Secondary spectrum trading—auction-based framework
  for spectrum allocation and profit sharing,'' {\em IEEE/ACM Transactions on
  Networking}, vol.~21, no.~1, pp.~176--189, 2013.

\bibitem{myerson1979incentive}
R.~B. Myerson, ``Incentive compatibility and the bargaining problem,'' {\em
  Econometrica: journal of the Econometric Society}, pp.~61--73, 1979.

\bibitem{baron1984regulation}
D.~P. Baron and D.~Besanko, ``Regulation, asymmetric information, and
  auditing,'' {\em The RAND Journal of Economics}, pp.~447--470, 1984.

\bibitem{georgiadis2006resource}
L.~Georgiadis, M.~J. Neely, L.~Tassiulas, {\em et~al.}, ``Resource allocation
  and cross-layer control in wireless networks,'' {\em Foundations and
  Trends{\textregistered} in Networking}, vol.~1, no.~1, pp.~1--144, 2006.

\bibitem{gallager2012discrete}
R.~G. Gallager, {\em Discrete stochastic processes}, vol.~321.
\newblock Springer Science \& Business Media, 2012.

\end{thebibliography}
\bibliographystyle{ieeetr}

\end{document}